\def\bea#1\eea{\begin{eqnarray}#1\end{eqnarray}}
\def\be#1\ee{\begin{equation}#1\end{equation}}
\def\ba#1\ea{\begin{align}#1\end{align}}
\newcommand*{\halfway}{0.5*\pgfdecoratedpathlength+4pt}
\tikzset{vertex/.style={inner sep=0,minimum size=3pt,circle,fill}}
\def\<{\langle}
\def\>{\rangle}
\newtheorem{lemma}{Lemma}
\preprint{UUITP-27/19, \ LCTP-20-19}
\title{Open associahedra and scattering forms}
\author[a]{Aidan Herderschee}
\emailAdd{aidanh@umich.edu}
\author[b]{and Fei Teng}
\emailAdd{fei.teng@physics.uu.se}
\affiliation[a]{Leinweber Center for Theoretical Physics, \\
Randall Laboratory of Physics, Department of Physics, \\
University of Michigan, Ann Arbor, MI 48109, USA}
\affiliation[b]{Department of Physics and Astronomy, Uppsala University, 75108 Uppsala, Sweden}
\date{\today}
\abstract{We continue the study of open associahedra associated with bi-color scattering amplitudes initiated in Ref.~\cite{Herderschee:2019wtl}. We focus on the facet geometries of the open associahedra, uncovering many new phenomena such as fiber-product geometries. We then provide novel recursion procedures for calculating the canonical form of open associahedra, generalizing recursion relations for bounded polytopes to unbounded polytopes. }
\begin{document}

\maketitle
\flushbottom

\section{Introduction}

Scattering amplitudes are one of the most fundamental observables in modern high energy physics. They have applications in a wide variety of fields, ranging from collider physics to gravity waves. Beyond their physical applications, the surprising simplicity of scattering amplitudes in certain theories, such as $\mathcal{N}=4$ super-Yang Mills, offers hints of hidden structures. It is important we understand why such simplifications occur and whether these hidden structures generalize for amplitudes in more realistic theories. With this broader objective in mind, this paper focuses on further developing the connection between scattering amplitudes and positive geometry \cite{Arkani-Hamed:2017tmz}. This program has been largely restricted to theories with adjoint states, such as bi-adjoint scalar theories and Yang-Mills. We instead focus on studying the positive geometry of scalar amplitudes with (anti-)fundamental states in addition to adjoint states. \\
\indent The first concrete connection between geometry and perturbative amplitudes was the identification of super-amplitudes in $\mathcal{N}=4$ super Yang-Mills (SYM) with the volumes of polytopes in the complex projective space $\mathbb{CP}^{4}$ \cite{Hodges:2009hk}. This connection was quickly generalized to all tree level amplitudes and loop integrands in $\mathcal{N}=4$ SYM by identifying their dual geometry, the amplituhedron \cite{Arkani-Hamed:2013jha}. Since the initial discovery of the amplituhedron, the connection between geometry and amplitudes has been refined and generalized. Amplitudes in a wide variety of theories have been interpreted as canonical forms on the space of external kinematic data~\cite{Arkani-Hamed:2017vfh, Arkani-Hamed:2017mur,Banerjee:2018tun,Aneesh:2019ddi,Raman:2019utu, Kalyanapuram:2020vil}. These canonical forms are volume forms of a polytope (or more generally a positive geometry)  on a subspace of external kinematic data~\cite{Arkani-Hamed:2017tmz}. Recently, novel recursive relations have been proposed for efficient computation of various canonical forms~\cite{He:2018svj, Salvatori:2019phs, Yang:2019esm, Kojima:2020tox, John:2020jww}. Like the original BCFW recursion~\cite{Britto:2005fq}, they involve complex shifts of kinematic data, and correspond to different triangulations of the positive geometry.
In addition, the canonical form is intimately related to color-kinematics duality, with differentials on kinematic space being dual to color factors \cite{Arkani-Hamed:2017mur}. Since the initiation of this program, the positive geometry dual of a wide class of observables have been found~\cite{Eden:2017fow, Salvatori:2018aha, Arkani-Hamed:2017fdk, Arkani-Hamed:2019vag}.  \\
\indent In Ref.~\cite{Herderschee:2019wtl}, the authors, together with He and Zhang, initiated a systematic study into the positive geometry of bi-color scalar theory \cite{Naculich_2014}, which is an (anti-)fundamental extension of the bi-adjoint scalar theory. It involves states that transform in the (anti-)fundamental, not just adjoint, representation. The bi-color theory is intimately related to the double copy procedure of theories with (anti-)fundamental states~\cite{Brown_2016,Brown:2018wss,Johansson:2019dnu}. Motivated by the application of an ``inverse soft construction ansatz'' previously used to construct subspaces associated with Cayley polytopes \cite{He:2018pue}, an inverse soft construction ansatz was used to recursively construct the positive geometry of the $n$-point bi-color amplitude in terms of the lower point ones~\cite{Herderschee:2019wtl}. The bi-color amplitudes were shown to correspond to a class of unbounded geometries called \emph{open associahedra}. Due to its simplicity as a purely scalar theory and importance in the (anti-)fundamental double copy, finding the positive geometry of the bi-color theory signals the wide applicability of the positive geometry picture. Many novel features appear in the bi-color positive geometry. For example, the positive geometries of the bi-color amplitudes are not bounded, which manifests as the associated canonical forms not being projective in kinematic space. Furthermore, the facet geometries of the open associahedra are not simply direct product geometries, but fiber-product geometries.\footnote{In Ref.~\cite{Herderschee:2019wtl} they are called semi-direct product geometries. In this work, we reserve the name for a (possibly) more general class of geometries.} The analysis of Ref.~\cite{Herderschee:2019wtl} was just an initial step in this program, while many open questions are yet to be answered. \\
\indent In this note, we give a more in-depth analysis of the open associahedra. We begin by studying the facet structure of our open associahedra. In the case of bi-adjoint theory, the geometry of any facet of closed associahedra is simply the direct product of two lower degree associahedra,

\begin{equation}
\mathcal{A}\big|_{\textrm{arbitrary facet}}=\mathcal{A}_{L}\times \mathcal{A}_{R}\,.    
\end{equation}

\noindent This is the geometrical manifestation of locality: how tree-level amplitudes must factorize on poles. In the case of open associahedra associated with bi-color amplitudes, we instead find locality manifests as a fiber of lower point geometries, which generalizes the product geometries of facets in Ref.~\cite{Arkani-Hamed:2017mur}. Once we analyze the facet geometries themselves, we find an extended equivalence class of positive geometries which yield the canonical form with the same functional dependence on facet variables. A small subset of this extended class can be derived using a minor generalization of the inverse soft construction. Using our analysis of the factorization channels, we propose a set of novel recursion relations for bi-color amplitudes in section~\ref{recursionsection}. These recursion relations are the natural generalization of the recursion procedures for bounded polytopes to unbounded polytopes \cite{Salvatori:2019phs,Yang:2019esm}. 
Using our analysis of the factorization channels, a complete proof of the pullback conjecture of Ref.~\cite{Herderschee:2019wtl} is provided for bi-color amplitudes. The pullback conjecture is key for proving color-kinematics duality of theories with (anti-)fundamental states. We conclude with a short summary and list some future directions. The organization of the paper is as follows:

\begin{itemize}
    \item Section \ref{sec:invsoftconstruction} reviews necessary backgrounds and the construction of open associahedra structure discussed in Ref.~\cite{Herderschee:2019wtl}. 
    \item Section \ref{sec:propfactorchannelss} discusses new geometrical structures that appear on the facets of open associahedra. 
    \item Section \ref{recursionsection} gives a succinct review of recursion procedures for closed polytopes before deriving a new recursion relation for open associahedra which is applicable to bi-color amplitudes.
    \item Section \ref{sec:proofpullback} proves the pullback conjecture of Ref.~\cite{Herderschee:2019wtl} using a systematic study of certain facet geometries in appendix \ref{sec:factorizationchannel}. 
    \item Section \ref{sec:conclusion} concludes with a short summary and some open questions.
    \item A succinct review canonical forms is given in appendix~\ref{sec:reviewcanonicaofrm}.
\end{itemize}

\paragraph{Notation:} Throughout the paper, we abbreviate ``fundamental" to $\mathfrak{f}$, ``anti-fundamental'' to $\mathfrak{af}$, and ``adjoint" to $\mathfrak{adj}$ for convenience. We use normal font letters to denote subspaces and the corresponding boldface letters to denote the constraints. For example, $P=Q\cap R$ is the intersection of the subspaces $Q$ and $R$, and it is given by the constraints $\pmb{P}=\pmb{Q}\cup\pmb{R}$.

\section{ABHY construction for bi-color scalar amplitudes}
\label{sec:invsoftconstruction}

In this section, we review the ABHY construction for the open associahedra associated to bi-color scalar amplitudes. The crucial objects of study in bi-color theory are \textit{double partial} amplitudes, $m_n[\alpha|\beta]$, a sum of all Feynman diagrams consistent with both $\alpha$ and $\beta$ orderings. $m_n[\alpha|\beta]$ are extremely important for double copy computations and intimately related to underlying geometry of the string worldsheet \cite{Mizera:2017cqs, Arkani-Hamed:2017mur, Frost:2019fjn}.  For example, if we interpret $m_n[\alpha|\beta]$ as a matrix of amplitudes and isolate a sub-matrix of full rank, the inverse of this sub-matrix defines field theory KLT relations \cite{Bern:2010ue,Brown:2018wss}. \\
\indent An interesting framework for understanding the underlying structure of $m_n[\alpha|\beta]$ is positive geometry. In both the bi-color and bi-adjoint case, $m_n[\alpha|\beta]$ can be interpreted as the canonical rational function of a polytope in kinematic space. This is very non-trivial as this property cannot hold for arbitrary collections of Feynman diagrams. Even more miraculous than the fact that the bi-color is dual to a geometry is that the geometry can be constructed recursively.  Although the resulting recursion is difficult to state in closed form for generic orderings, it follows from a simple ansatz, the inverse soft construction \cite{Herderschee:2019wtl}. 

\subsection{Bi-color scalar theory}
The matter content of the bi-color scalar theory is a real scalar $\phi^{aA}$ that transforms in the adjoint representation of $U(N)\times U(N')$, and a family of $n_f$ complex scalars $(\varphi_r)_{\underline{iI}}$ that transform in the fundamental representation of $U(N)\times U(N')$. The Lagrangian is\footnote{We use $[T^a,T^b]=\tilde{f}^{abc}T^c$ and $\textrm{tr}(T^aT^b)=\delta^{ab}$ as our Lie algebra normalization.}
\begin{align}\label{eq:bicolor}
\mathcal{L}_{\phi^3}=&\;\frac{1}{2}\partial_\mu\phi^{aA}\partial^{\mu}\phi^{aA}+\frac{\lambda}{6}\tilde{f}^{abc}\tilde{f}^{ABC}\phi^{aA}\phi^{bB}\phi^{cC}\nonumber\\
&+\sum_{r=1}^{n_f}\left[\partial_\mu(\varphi_r)_{\underline{iI}}\partial^\mu(\varphi^*_r)^{\overline{iI}}+\lambda\phi^{aA}(\varphi^*_r)^{\overline{iI}}(T^a)_{\underline{i}}^{\overline{j}}(T^A)_{\underline{I}}^{\overline{J}}(\varphi_r)_{\underline{jJ}}\right],
\end{align}
where $\lambda$ is an arbitrary coupling constant. There is a global flavor symmetry acting on the flavor index $r$. We further assume that all the $\mathfrak{f}$-$\mathfrak{af}$ scalar pairs have distinct flavor assignments in our amplitudes. To determine which factorization channels are allowed by flavor symmetry, we define a non-numeric flavor symbol to each particle, $f(\varphi_r)=f_r$, $f(\varphi^*_r)=-f_r$ and $f(\phi)=0$, such that a factorization channel $s_I$ is allowed by flavor symmetry if and only if $\vartheta_I=1$~\cite{Johansson:2019dnu}, where
\begin{equation}\label{eq:thetaI}
\vartheta_{I}=\left\{\begin{matrix}
1 & \quad & \textrm{ if }\sum_{s\in I} f(s)=0\textrm{ or a single term }\pm f_r\\ 
0 & \quad & \textrm{ otherwise }
\end{matrix}\right. \,.
\end{equation}
For instance, if we have two $\mathfrak{f}$-$\mathfrak{af}$ pairs $\{\underline{1},\overline{2}\}$ and $\{\underline{3},\overline{4}\}$, the propagator $1/s_{1,2}$ is allowed but $1/s_{2,3}$ is not. The amplitude with only a single flavor can be reproduced by averaging over all possible flavor assignments.

Decomposition of the global color factors leads to a matrix of color-ordered (double partial) amplitudes $m_n[\alpha|\beta]$. In this work, we will mainly focus on the diagonal components $A_n[\alpha]\equiv m_n[\alpha|\alpha]$, which consists of the planar Feynman diagrams under the ordering $\alpha$ that respect the flavor conservation. Next, we review the minimal basis for $A_n[\alpha]$ under the color decomposition.

\subsection{Melia basis}

The minimal basis for the color ordering $\alpha$ is the Melia basis~\cite{Melia:2013bta,Melia:2013epa,Johansson:2015oia}. If we represent each $\mathfrak{f}$-$\mathfrak{af}$ pair by a pair of parentheses, $\underline{i}\rightarrow (i$ and $\overline{j}\rightarrow j)$, the Melia basis is then given by all the valid ways of adding the parentheses to a word in which the position of a specific $\mathfrak{f}$-$\mathfrak{af}$ pair, say $(1,2)$, is fixed.\footnote{More specifically, a left parenthesis must first be closed by the right parenthesis with the same flavor before another right parenthesis can appear. Such arrangements of parentheses are also called Dyck words.} For example, the five-point Melia basis for one $\mathfrak{adj}$ particle $5$ and two $\mathfrak{f}$-$\mathfrak{af}$ pairs $(1,2)$ and $(3,4)$ consists of the color-ordered amplitudes
\begin{align}
    \Big\{A_5[(1,2),(3,4),5]\,,A_5[(1,2),5,(3,4)]\,,A_5[(1,2),(3,5,4)]\Big\}\,.
\end{align}

However, factorizations of Melia-basis amplitudes will not automatically land on lower-point Melia bases in general. For example, the $s_{1,2}\rightarrow 0$ channel of $A_5[(1,2),(3,5,4)]$ gives
\begin{align}
    A_5[(1,2),(3,5,4)]\rightarrow A_3[(1,2),q]\times\frac{1}{q^2}\times A_4[-q,(3,5,4)]\,,
\end{align}
where $A_4[-q,(3,5,4)]$ is not in the four-point Melia basis automatically. One can of course settle it into the Melia basis using certain amplitude relations, but we find it more convenient to consider more generic color orderings when studying factorizations. To start with, we define a \emph{block} to be the structure enclosed by an $\mathfrak{f}$-$\mathfrak{af}$ pair,
\begin{align}
    \mathsf{B}_{i}\equiv(l_i,\mathsf{B}_{i_1},\mathsf{B}_{i_2},\ldots,\mathsf{B}_{i_s},r_i)\,,
\end{align}
where $\mathsf{B}_{i_\ell}\in\text{sub}[\mathsf{B}_i]\equiv\{\mathsf{B}_{i_1},\mathsf{B}_{i_2},\ldots,\mathsf{B}_{i_s}\}$ are sub-blocks defined recursively. The simplest blocks, which terminate the recursive definition, are either a single $\mathfrak{adj}$ particle, $\mathsf{B}_i=g_i$, or a single $\mathfrak{f}$-$\mathfrak{af}$ pair, $\mathsf{B}_i=(l_i,r_i)$. In this work, we are interested in the color orderings of the following form,
\begin{align}\label{eq:genericOrder}
\alpha=[\mathsf{B}_1,\mathsf{B}_2,\mathsf{B}_3,\ldots,\mathsf{B}_m]\,,\text{ where either }\mathsf{B}_1=(l_1,r_1)\text{ or }\mathsf{B}_1=g_1\,,
\end{align}
The Melia basis corresponds to fix $\mathsf{B}_1=(1,2)$. In fact, all the color-ordered amplitudes for bi-color scalars can be represented by these $\alpha$'s after flipping certain parentheses and using cyclicity, since the trivial kinematic numerators are insensitive to particles and anti-particles.

\subsection{The subspace construction}\label{subspaceconstructionssss}

The kinematic space $\mathcal{K}_n$ for $n$ massless particles is spanned by $n(n-3)/2$ planar Mandelstam variables $X_{i,j}=(p_i+\ldots+p_{j-1})^2$. The positive geometry that corresponds to the bi-adjoint amplitudes is the ABHY associahedron $\mathcal{A}_n$~\cite{Arkani-Hamed:2017mur}. It is obtained by intersecting the positive cone of $\mathcal{K}_n$ with a subspace $H_n$. The planar amplitude is given by the canonical form of $\mathcal{A}_n$. It can be constructed by the pullback of the planar scattering form from $\mathcal{K}_n$ to $\mathcal{A}_n$. At $n=4$, the associaheron is the intersection of the positive cone $\{s>0,t>0\}$ and the subspace $s+t=c>0$, which is a line segment. The canonical form is obtained by the pullback $\left(\frac{ds}{s}-\frac{dt}{t}\right)\big|_{s+t=c}=\left(\frac{1}{s}+\frac{1}{t}\right)ds$, which also gives the amplitude. We refer the readers to Ref.~\cite{Arkani-Hamed:2017mur} for more details.

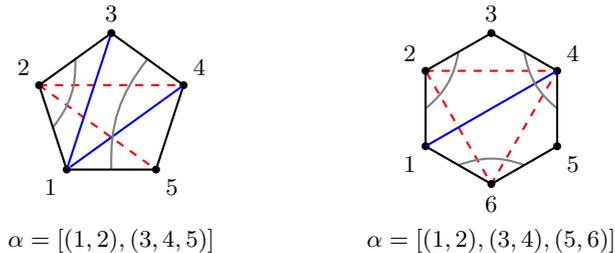
\begin{figure}[t]
\centering
\begin{tikzpicture}[every node/.style={font=\footnotesize,},dir/.style={decoration={markings, mark=at position \halfway with {\arrow{Latex}}},postaction={decorate}}]
\coordinate (p2) at (162:1);
\coordinate (p4) at (18:1);
\coordinate (p5) at (-54:1);
\draw [thick,red,dashed] (p2) -- (p5) (p2) -- (p4);
\node (p1) at (-126:1) [vertex,label={[label distance=-2pt]-126:{$1$}}] {};
\node at (p2) [vertex,label={[label distance=-2pt]162:{$2$}}] {};
\node (p3) at (0,1)  [vertex,label={[label distance=-1pt]90:{$3$}}] {};
\node at (p4) [vertex,label={[label distance=-2pt]18:{$4$}}] {};
\node at (p5) [vertex,label={[label distance=-2pt]-54:{$5$}}] {};
\draw [thick,blue] (p1.center) -- (p3.center) (p1) -- (p4.center);
\path (p1.center) -- (p2.center) node (f1) [pos=0.5] {};
\path (p2.center) -- (p3.center) node (f2) [pos=0.5] {};
\path (p3.center) -- (p4.center) node (f3) [pos=0.5] {};
\path (p5.center) -- (p1.center) node (f5) [pos=0.5] {};
\draw [thick,gray] (f1.center) to [bend right=20] (f2.center) (f3.center) to [bend right=20] (f5.center);
\draw [thick] (p1.center) -- (p2.center) -- (p3.center) -- (p4.center) -- (p5.center) -- cycle;
\node at (0,-1.75) [] {$\alpha=[(1,2),(3,4,5)]$};
\begin{scope}[xshift=5cm]
\coordinate (p1) at (-150:1);
\coordinate (p2) at (150:1);
\coordinate (p3) at (0,1);
\coordinate (p4) at (30:1);
\coordinate (p5) at (-30:1);
\coordinate (p6) at (0,-1);
\draw [red,thick,dashed] (p2) -- (p4) -- (p6) -- cycle;
\node at (p1)  [vertex,label={[label distance=-2pt]-150:{$1$}}] {};
\node at (p2)  [vertex,label={[label distance=-2pt]150:{$2$}}] {};
\node at (p3) [vertex,label={[label distance=-1pt]90:{$3$}}] {};
\node at (p4) [vertex,label={[label distance=-2pt]30:{$4$}}] {};
\node at (p5) [vertex,label={[label distance=-2pt]-30:{$5$}}] {};
\node at (p6) [vertex,label={[label distance=-1pt]-90:{$6$}}] {};
\draw [blue,thick] (p1.center) -- (p4.center);
\path (p1.center) -- (p2.center) node (f1) [pos=0.5] {};
\path (p2.center) -- (p3.center) node (f2) [pos=0.5] {};
\path (p3.center) -- (p4.center) node (f3) [pos=0.5] {};
\path (p4.center) -- (p5.center) node (f4) [pos=0.5] {};
\path (p5.center) -- (p6.center) node (f5) [pos=0.5] {};
\path (p6.center) -- (p1.center) node (f6) [pos=0.5] {};
\draw [thick,gray] (f1.center) to [bend right=20] (f2.center) (f3.center) to [bend right=20] (f4.center) (f5.center) to [bend right=20] (f6.center);
\draw [thick] (p1) -- (p2) -- (p3) -- (p4) -- (p5) -- (p6) -- cycle;
\node at (0,-1.75) [] {$\alpha=[(1,2),(3,4),(5,6)]$};
\end{scope}
\end{tikzpicture}
\caption{Example of flavor lines taken from Ref.~\cite{Herderschee:2019wtl}. The gray curves are the flavor lines associated with the $\mathfrak{f}$-$\mathfrak{af}$ pairs. The red dashed lines are the planar variables forbidden by flavor symmetry. The blue solid lines are a few representatives of allowed planar variables.}
\label{fig:forbiddenPoles}
\end{figure}

Together with He and Zhang, we have shown in Ref.~\cite{Herderschee:2019wtl} that the positive geometry for the bi-color scalar amplitude $A_n[\alpha]$ is an $(n{-}3)$-dimensional open associahedron $\mathcal{A}_n[\alpha]=\Delta_n[\alpha]\cap H_n[\alpha]$, given by the intersection of a positive cone $\Delta_n[\alpha]$ and a subspace $H_n[\alpha]$, just as the ABHY associahedron for bi-adjoint scalar amplitudes~\cite{Arkani-Hamed:2017mur}. We denote the set of constraints that give $\Delta_n$ and $H_n$ as $\pmb{\Delta}_n$ and $\pmb{H}_n$. Starting from the full kinematic space $\mathcal{K}_n$ spanned by $n(n-3)/2$ planar Mandelstam variables, we first remove those forbidden by the flavor conservation, which can be done by setting
\begin{align}
    \pmb{F}_n[\alpha]=\Big\{X_{\alpha(i),\alpha(j)}=b_{\alpha(i),\alpha(j)}>0\text{ if }\vartheta_{\alpha(i),\ldots,\alpha(j-1)}=0 \Big\}.
\end{align}
For $k$ pairs of $\mathfrak{f}$-$\mathfrak{af}$ particles, these constraints restrict the physical kinematic space to $\mathcal{K}^k_n[\alpha]$ with the dimension\footnote{The first equality is reached only when $n=4$ and $k=2$, while the second one is reached when all the $\mathfrak{f}$ particles are adjacent to their $\mathfrak{af}$ partners.}
\begin{align}
    n-3\leqslant\textrm{dim}\,\mathcal{K}_n^k[\alpha]\leqslant\frac{n(n-3)}{2}-\frac{k(k-1)}{2}\,.
\end{align}
The cone $\Delta_n[\alpha]$ is just the positive region of $\mathcal{K}_n^k[\alpha]$, obtained by imposing the constraints $\pmb{\Delta}_n[\alpha]=\pmb{F}_n[\alpha]\cup\pmb{P}_n[\alpha]$, where
\begin{align}
    \pmb{P}_n[\alpha]=\left\{X_{\alpha(i),\alpha(j)}\geqslant 0\text{ for all }1\leqslant i<j\leqslant n\right\}
\end{align}
imposes the positivity. A nice way to visualize which factorization channels are allowed by flavor symmetry is to use \emph{flavor lines} in the polygon dual to Feynman diagrams. If $\big(\alpha(i),\alpha(j)\big)$ is an $\mathfrak{f}$-$\mathfrak{af}$ pair, its flavor line connects the edge $E_{\alpha(i),\alpha(i+1)}$ and $E_{\alpha(j),\alpha(j+1)}$ in the dual polygon. \emph{The planar variables allowed by flavor symmetry can only cross at most one flavor line.} Two examples are given in figure~\ref{fig:forbiddenPoles}. \\

\begin{figure}[t]
\centering
  \begin{tikzpicture}[every node/.style={font=\footnotesize}]
	\shade [left color=orange!70!white,right color=white,shading angle=135] (1,0) ++(2,2) -- (1,0) -- (0,0) -- (0,1) -- ++(2,2);
	\draw [thick,-stealth] (0,0) -- (3.75,0) node[below=0pt]{$X_{1,3}$};
	\draw [thick,-stealth] (0,0) -- (0,3.75) node[left=0pt]{$X_{1,4}$};
	\draw [thick] (1,0) -- ++(2,2) (0,1) -- ++(2,2);
	\filldraw (0,0) circle (1pt) (1,0) circle (1pt) node[below=0pt]{$c_{2,4,5}$} (0,1) circle (1pt);
	\node at (0,1) [left=0pt]{$c_{3,5}$};
	\end{tikzpicture}
  \includegraphics[scale=0.4]{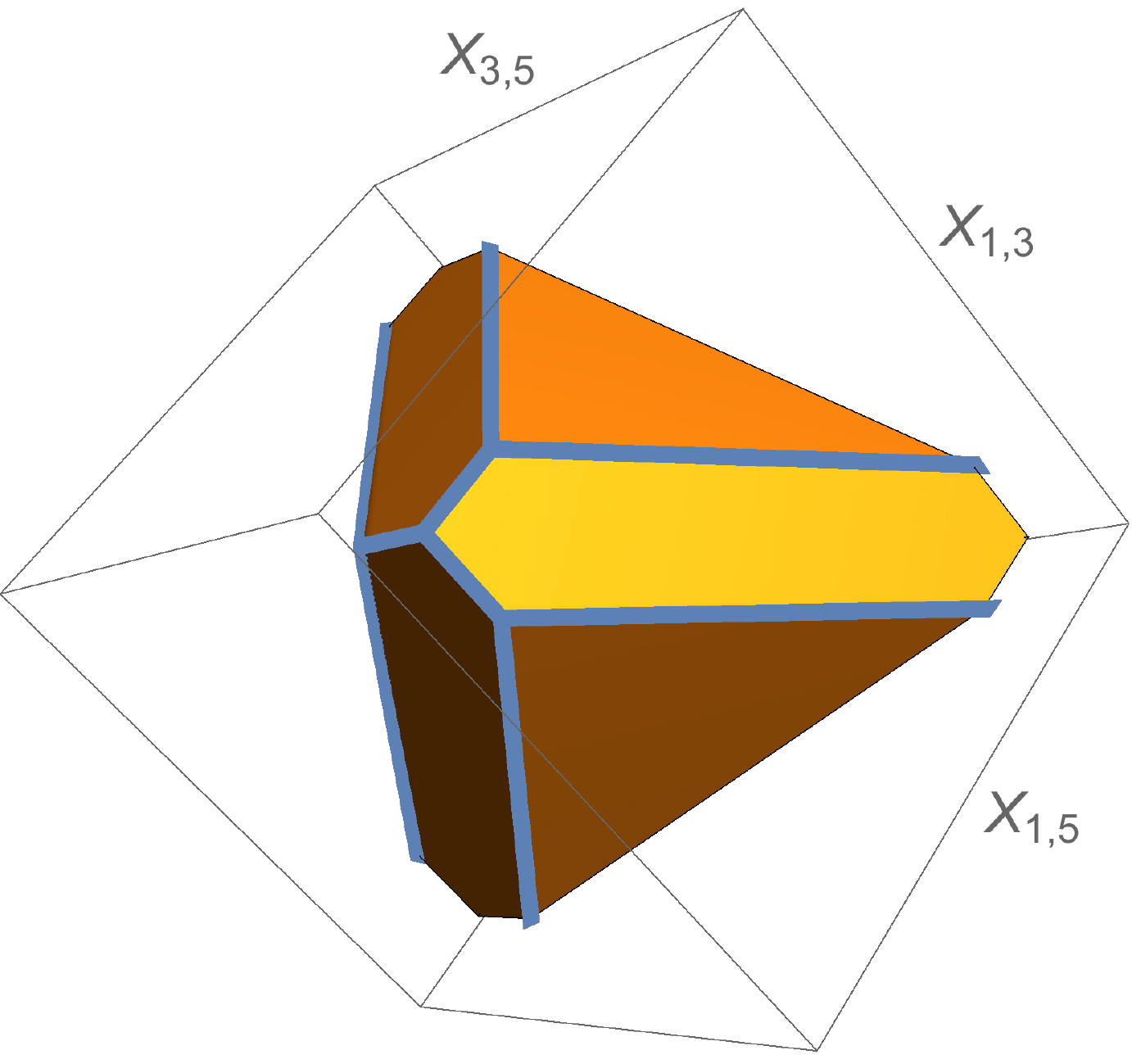}
  \caption{The left and right geometries are the visualizations of $\mathcal{A}_5[(1,2),3,(4,5)]$ and $\mathcal{A}_6[(1,2),(3,4),(5,6)]$ respectively,  taken from Ref.~\cite{Herderschee:2019wtl}. Both geometries are unbounded, and  $\mathcal{A}_5[(1,2),3,(4,5)]$ appears as a facet of $\mathcal{A}_6[(1,2),(3,4),(5,6)]$.} 
  \label{fig:6pointexample}
\end{figure}
\indent The open associahedron $\mathcal{A}_n[\alpha]$ is obtained by intersecting the cone $\Delta_n[\alpha]$ with another subspace $H_n[\alpha]$ that is given by $\textrm{dim}\,\mathcal{K}_n^k[\alpha]-(n{-}3)$ constraints $\pmb{H}_n[\alpha]$, 
\begin{align}
    \mathcal{A}_n[\alpha]=H_n[\alpha]\cap\Delta_n[\alpha]\,.
\end{align}
A five-point and six-point example for such geometries are provided in figure~\ref{fig:6pointexample}. There exists an recursive construction for $\pmb{H}_n[\alpha]$ through an inverse soft/factorization analysis~\cite{Herderschee:2019wtl}. Suppose we already know the constraints for a lower-point ordering $\beta=[\mathsf{B}_1,\ldots,\mathsf{B}_{m-1}]$, the constraints $\pmb{H}_n[\alpha]=\pmb{H}_n[\beta,\mathsf{B}_m]$ can be obtained by the following procedures:
\paragraph{Case 1:}
If the last block $\mathsf{B}_m=n$ is a single $\mathfrak{adj}$ particle, we have
\begin{align}\label{eq:rec_n}
    \pmb{H}_n[\alpha]=\pmb{H}_n[\beta,n]=\pmb{H}_{n-1}[\beta]\cup\pmb{C}_1[\beta,n]\,,
\end{align}
where the set $\pmb{C}_1$ is given by
\begin{align}\label{eq:C1_n}
    \pmb{C}_1[\beta,q]=\left.\bigcup_{i=1}^{m-1}\bigcup_{\mathsf{I}\in\text{sub}[\mathsf{B}_i]}
    \left\{\begin{array}{c}-s_{q,l_i}=c_{q,l_i}>0 \\
    -s_{q,r_i}=c_{q,r_i}>0 \\
    -s_{q,\mathsf{I}}=c_{q,\mathsf{I}}>0
    \end{array}
    \right\}\middle\backslash\{s_{q,l_1},s_{q,r_{m-1}}\}\right..
\end{align}
Here $q=p_n$ is on-shell. However, this equation also applies to off-shell $\mathfrak{adj}$ particles.\footnote{In a color ordering $\alpha$, we label off-shell particles (if they exist) by their momenta.}
\paragraph{Case 2:}
If the last block $\mathsf{B}_m=(l_m,r_m)$ is a single adjacent $\mathfrak{f}$-$\mathfrak{af}$ pair, we have
\begin{align}\label{eq:rec_nbif}
        \pmb{H}_n[\alpha]=\pmb{H}_n[\beta,(l_m,r_m)]&=\pmb{H}_{n-1}[\beta,P_{l_m,r_m}]\cup\pmb{C}_2[\beta,(l_m,r_m)]\nonumber\\
        &=\pmb{H}_{n-2}[\beta]\cup\pmb{C}_1[\beta,P_{l_m,r_m}]\cup\pmb{C}_2[\beta,(l_m,r_m)]
\end{align}
where $P_{l_m,r_m}=p_{l_m}+p_{r_m}$, and $\pmb{H}_{n-1}[\beta,P_{l_m,r_m}]$ can be further recursively constructed using Eq.~\eqref{eq:rec_n}. The set $\pmb{C}_2$ is given by
\begin{align}\label{eq:C2_n}
    \pmb{C}_2[\beta,(l_m,r_m)]=\bigcup_{i=2}^{m-1}\{-s_{\mathsf{B}_i,r_m}=c_{\mathsf{B}_i,r_m}>0\}\,.
\end{align} 
Here, $P_{l_m,r_m}$ is an off-shell $\mathfrak{adj}$ particle, which is exchanged between the block $(l_m,r_m)$ and the other particles.
\paragraph{Case 3:}
Finally, if $\mathsf{B}_m$ is a generic block with substructures, we have
\begin{align}\label{eq:Hngeneric}
    \pmb{H}_n[\alpha]=\pmb{H}_n[\beta,\mathsf{B}_m]=&\;\pmb{H}_{n-|\mathsf{B}_m|+2}[\beta,(L_m,r_m)]\cup\pmb{C}_3[\beta,\mathsf{B}_m]\nonumber\\
    &\cup\pmb{H}_{|\mathsf{B}_m|}[(-L_m,l_m),\mathsf{B}_{m_1},\ldots,\mathsf{B}_{m_s}]\nonumber\\
    =&\;\pmb{H}_{n-|\mathsf{B}_m|}[\beta]\cup\pmb{C}_1[\beta,P_{L_m,r_m}]\cup\pmb{C}_2[\beta,(L_m,r_m)]\cup\pmb{C}_3[\beta,\mathsf{B}_m]\nonumber\\
    &\cup\pmb{H}_{|\mathsf{B}_m|}[(-L_m,l_m),\mathsf{B}_{m_1},\ldots,\mathsf{B}_{m_s}]
\end{align}
where $L_m$ is the total momentum of $\mathsf{B}_m$ except for $r_m$, and $\mathsf{B}_{m_1},\ldots,\mathsf{B}_{m_s}$ are the sub-blocks contained in $\mathsf{B}_m$. The set $\pmb{C}_3$ has two parts,
\begin{align}\label{eq:ra2}
    \pmb{C}_3[\beta,\mathsf{B}_m]=\pmb{C}_3^a[\mathsf{B}_m]\cup\pmb{C}_3^b[\beta,\mathsf{B}_m]\,,
\end{align}
which are given by
\begin{gather}\label{eq:C3_n}
    \pmb{C}_3^a[\mathsf{B}_m]=\{-s_{l_m,r_m}=c_{l_m,r_m}>0\}\bigcup_{\mathsf{I}\in\text{sub}[\mathsf{B}_m]\backslash\mathsf{B}_{m_s}}\left\{-s_{\mathsf{I},r_m}=c_{\mathsf{I},r_m}>0\right\},\nonumber\\
    \pmb{C}_3^b[\beta,\mathsf{B}_m]=\bigcup_{i=2}^{m-1}\bigcup_{\mathsf{I}\in\text{sub}[\mathsf{B}_m]}\left\{
    -s_{\mathsf{B}_i,\mathsf{I}}=c_{\mathsf{B}_i,\mathsf{I}}>0\right\}\,.
\end{gather}
On the other hand, the set $\pmb{H}_{n-|\mathsf{B}_m|+2}$ and $\pmb{H}_{|\mathsf{B}_m|}$ can be obtained recursively.
We note that this recursion will terminate at $\pmb{H}_3=\emptyset$. \\
\indent As an example, we will now recursively construct $\pmb{H}_{6}[(1,2),(3,4),(5,6)]$. Since the last block is a $(\mathfrak{a})\mathfrak{f}$ pair, we apply eqs. (\ref{eq:rec_nbif}) and (\ref{eq:C2_n}):
\begin{equation}
\begin{split}
\pmb{H}_{6}[(1,2),(3,4),(5,6)]&=\pmb{H}_{5}[(1,2),(3,4),P_{5,6}]\cup \pmb{C}_{2}[(1,2),(3,4),(5,6)] \ ,\\
&=\pmb{H}_{5}[(1,2),(3,4),P_{5,6}]\cup \{ c_{3,4,6}=-s_{3,4,6}\} \ . \\
\end{split}
\label{eq:firstlinederiv}
\end{equation}
\noindent We now apply the same procedure to $\pmb{H}_{5}[(1,2),(3,4),P_{5,6}]$. Since $P_{5,6}$ is an $\mathfrak{adj}$ state, we instead apply eqs. (\ref{eq:rec_n}) and (\ref{eq:C1_n}):
\begin{equation}
\begin{split}
\pmb{H}_{5}[(1,2),(3,4),P_{5,6}]&= \pmb{H}_{4}[(1,2),(3,4)]\cup \pmb{C}_{1}[(1,2),(3,4),P_{5,6}] \ , \\
&=\pmb{H}_{4}[(1,2),(3,4)] \cup \{ c_{2,5,6}=-s_{2,5,6}, \ c_{3,5,6}=-s_{3,5,6} \}\ 
\end{split}  
\label{eq:secondlinederiv}
\end{equation}
\noindent Finally, we apply the procedure to $\pmb{H}_4[(1,2),(3,4)]$, finding 
\begin{equation}
\begin{split}
\pmb{H}_{4}[(1,2),(3,4)]&=\pmb{H}_{4}[(1,2),P_{3,4}]\cup \pmb{C}_{2}[(1,2),(3,4)] \\
&=\pmb{H}_{4}[(1,2),P_{3,4}]=\emptyset\,. \\
\end{split} 
\label{eq:thirdlinederiv}
\end{equation}
\noindent Combining eqs. (\ref{eq:firstlinederiv}), (\ref{eq:secondlinederiv}), and (\ref{eq:thirdlinederiv}), the restriction equations for $\mathcal{A}_6[(1,2),(3,4),(5,6)]$ take the form 
\begin{equation}
\pmb{H}_{6}[(1,2),(3,4),(5,6)]=\{ \ c_{3,4,6}=-s_{3,4,6} ,\ c_{2,5,6}=-s_{2,5,6}, \ c_{3,5,6}=-s_{3,5,6} \ \}  \ .
\label{eq:restirctionsexam}
\end{equation}
\noindent In terms of planar variables, $X_{i,j}>0$, the restrictions in eq. (\ref{eq:restirctionsexam}) can be written as 
\begin{equation}
\begin{split}
c_{3,4,6}&=X_{3,6}+X_{1,5}-X_{1,3}-X_{3,5}\, ,\\
c_{2,5,6}&=X_{2,5}+X_{1,3}-X_{3,6}-X_{1,5}\, , \\
c_{3,5,6}&=X_{3,5}+X_{1,4}-X_{1,5}-X_{1,3}\, . \\
\end{split}   
\label{allrestrictionss}
\end{equation}
\noindent Solving for all $X_{i,j}$ in terms of $X_{3,5}$, $X_{1,3}$ and $X_{1,5}$ using eq. (\ref{allrestrictionss}) and imposing that all $X_{i,j}>0$, one finds the unbounded polytope in figure~\ref{fig:6pointexample}. One can show that the restriction equations for generic adjacent-pair configuration take the form
\begin{align}\label{eq:adjacentGeneric}
    \pmb{H}_n[(1,2),(3,4),\ldots,(n-1,n)]=\left\{
    \begin{array}{c}
    s_{j,2i-1,2i}=-c_{j,2i-1,2i} \\
    s_{2i-1,2i,2k}=-c_{2i-1,2i,2k}
    \end{array} \middle|
    \begin{array}{c}
    2\leqslant i\leqslant n/2 \\
    2\leqslant j\leqslant 2i-3 \\
    i+1\leqslant k\leqslant n/2
    \end{array}\right\}\,.
\end{align}

\indent As a further consistency check, we show that the number of constraints in $\pmb{H}_n[\alpha]$ is indeed $\textrm{dim}\,\mathcal{K}_n^k[\alpha]-(n{-}3)$. However, while both $\textrm{dim}\,\mathcal{K}_n^k[\alpha]$ and $\pmb{H}_n[\alpha]$ have rather complicated dependence on the block structures, it is more convenient to define $\delta_i$ as the difference between the new planar variables and constraints introduced by the block $\mathsf{B}_i$ in the above recursive construction, and prove that
\begin{align}\label{eq:Nconstraints}
    \sum_{i=1}^{m}\delta_i=n-3\,.
\end{align}
According to eq.~\eqref{eq:Hngeneric}, the new constraints coming with a new block $\mathsf{B}_i$ are in $\pmb{C}_{1,2,3}$ and $\pmb{H}_{|\mathsf{B}_i|}$. We note that $\pmb{H}_{|\mathsf{B}_i|}$ are completely localized within the block $\mathsf{B}_i$. As the inductive assumption, the difference between the new planar variables and constraints localized with $\mathsf{B}_i$ is $|\mathsf{B}_i|-3$ for $|\mathsf{B}_i|\geqslant 3$ and zero otherwise. On the other hand, the constraints in $\pmb{C}_{1,2,3}$ are in one-to-one correspondence with the new planar variables that are incompatible with either $X_{l_1,l_i}$ or $X_{l_i,r_i}$~\cite{Herderschee:2019wtl}. In fact, now we are left with at most three new planar variables that are not covered by either $\pmb{C}_{1,2,3}$ or $\pmb{H}_{|\mathsf{B}_i|}$,
\begin{align}
    & X_{l_1,l_i}\,,  & & \text{present if } & & |\mathsf{B}_i|\geqslant 2\,, \nonumber\\
    & X_{l_1,r_{i-1}}\,,  & & \text{present if }& & i>2\,, \\
    & X_{l_i,r_i}\,,  & & \text{present if } & & |\mathsf{B}_i|\geqslant 3\,. \nonumber
\end{align}
For the Melia basis, in which $\mathsf{B}_1=(1,2)$ , a direct counting gives
\begin{align}\label{eq:deltai}
    \delta_i=\left\{\begin{array}{lcc}
    0 & \quad & i=1 \\
    |\mathsf{B}_i|-1 & \quad &i=2 \\
    |\mathsf{B}_i| & &i>2 
    \end{array}\right.\,.
\end{align}
This result holds for a generic block $\mathsf{B}_i$. 
Then using $\sum_{i=1}^{m}|\mathsf{B}_i|=n$, one can easily see that eq.~\eqref{eq:Nconstraints} holds. The derivation for $\mathsf{B}_1$ being an $\mathfrak{adj}$ particle only differs slightly by some technical details, which will not be repeated here.

\section{Factorization channels and facet geometries of open associahedra}
\label{sec:propfactorchannelss}

\indent For the ABHY associahedra, a facet is characterized by a planar variable $X$ reaching zero, which also defines a partition $L\cup R$ of the particles. In practice, one can show that certain constraints in $\pmb{H}_n$ are satisfied automatically and thus can be dropped when $X\rightarrow 0$ due to the fact that incompatible planar variables cannot reach zero simultaneously. On the other hand, the rest constraints correspond exactly to those of the left and right sub-polytope,
\begin{align}\label{eq:directH}
    \pmb{H}_n\xrightarrow{X=0}\pmb{H}^L\cup\pmb{H}^R\,,
\end{align}
namely, the constraints in $\pmb{H}^{L/R}$ are localized on the planar variables in $L/R$ respectively. Each facet is thus a direct product of two lower dimensional associahedra,
\begin{align}\label{eq:directA}
    \mathcal{A}\big|_{X=0}\cong \mathcal{A}^L\times \mathcal{A}^R\,.
\end{align}
The residue of canonical form factorizes on the facet, 
\begin{align}
    \text{Res}_{X=0}\Omega(\mathcal{A})=\Omega(\mathcal{A}^L)\wedge\Omega(\mathcal{A}^R)\,,
\end{align}
which implies the proper factorization of the amplitude. We refer the readers to section~4.1 of Ref.~\cite{Arkani-Hamed:2017mur} for more details.

The same strategy applies to the open associahedra. First, both the facet $X_{l_1,l_m}=0$ and $X_{l_m,r_m}=0$ ($m$ labels the last block) have the direct product geometry~\eqref{eq:directH} and~\eqref{eq:directA} by construction, while both $\pmb{H}^{L/R}$ are given by the same procedure as in the last section. A simple but slightly nontrivial example is the $X_{3,5}=q^2=0$ facet of $\mathcal{A}_6[(1,2),(3,4),(5,6)]$. In the subspace constraints,
\begin{align}
    \pmb{H}_6[(1,2),(3,4),(5,6)]=\big\{s_{2,5,6}=-c_{2,5,6},s_{3,5,6}=-c_{3,5,6},s_{3,4,6}=-c_{3,4,6}\big\}\,,
\end{align}
the last one becomes $X_{1,4}=c_{3,5,6}+X_{1,3}+X_{1,5}$ when $X_{3,5}=0$. It is gives no constraints on $X_{1,3}$ (identified as $X_{1,q}$) and $X_{1,5}$ since $X_{1,4}$ is strictly positive. The other two constraints can be directly identified as the subspace $\pmb{H}_5[(1,2),q,(5,6)]$,
\begin{align}
    -s_{2,5,6}=c_{2,5,6} & & \longrightarrow & & & X_{2,5}+X_{1,q}-c_{2,5,6}=X_{1,5}\,, \nonumber\\
    -s_{3,4,6}=c_{3,4,6} & & \longrightarrow & & & X_{1,5}+X_{q,6}-c_{3,4,6}=X_{1,q}\,.
\end{align}
We have thus showed that $\mathcal{A}_{6}[(1,2),(3,4),(5,6)]\big|_{X_{3,5}=0}\cong \mathcal{A}_5[(1,2),q,(5,6)]$. It is a direct product geometry since the three-point associahedron $\mathcal{A}_3[-q,(3,4)]$ is zero dimensional.

The main purpose of this section is to characterize generic facets of the open associahedra, which contain more geometric structures than the direct product.

\subsection{Fiber-product geometries}

For a positive geometry to be dual to scattering amplitudes, the residue of the canonical form on a facet $X=0$ must factorize,
\begin{equation}
\textrm{Res}_{X=0}\Omega[\mathcal{A}]=\Omega[\mathcal{A}_L]\wedge\Omega[\mathcal{A}_R]\,.
\label{reuiqrementss}
\end{equation}
This is a geometric manifestation of cluster decomposition. For the amplituhedron, corresponding to amplitudes in $\mathcal{N}=4$ SYM~\cite{Arkani-Hamed:2017vfh} and cluster polytopes, corresponding to amplitudes in bi-adjoint theory~\cite{Arkani-Hamed:2017mur,Arkani-Hamed:2019vag}, the facets are simply \emph{direct product geometries}:
\begin{equation}
    \mathcal{A}|_{X=0} \cong \mathcal{A}_{L}\times \mathcal{A}_{R} \,,
\label{factorization}
\end{equation}
which is sufficient for eq.~\eqref{reuiqrementss} to hold. For example, a square is a direct product of two line segments,
\begin{equation}\label{expgeocan}
\begin{split}
&\mathcal{A}_{\textrm{square}}= \{ 0<x<1\}\times\{0<y<1 \}  \, ,  \\
&\Omega[\mathcal{A}_\textrm{square}] = \left ( \frac{1}{x}-\frac{1}{x-1}\right )dx\wedge \left (\frac{1}{y}-\frac{1}{y-1} \right )dy \,,
\end{split}
\end{equation}
whose canonical form factorizes into those of the line segments. However, for the polytope dual to bi-color amplitudes, we find that not every facet is a direct product geometry, although the residue of the canonical form still factorizes as eq.~\eqref{reuiqrementss}. Instead of direct product geometries, these facets correspond to \textit{fiber-product geometries}, which still obey eq.~\eqref{reuiqrementss} but are more general than eq.~\eqref{factorization}. The simplest example of a fiber-product geometry is a trapezoid,
\begin{equation}
\begin{split}
&\mathcal{A}_{\textrm{trapezoid}}=\{ 0<x<1+y \,, 0<y<1\} \, ,  \\
&\Omega[\mathcal{A}_{\text{trapezoid}}] = \left ( \frac{1}{x}-\frac{1}{x-1-y}\right )dx\wedge \left (\frac{1}{y}-\frac{1}{y-1} \right )dy \ .
\end{split}
\label{expgeocanss}
\end{equation}
The trapezoid is clearly not a direct product geometry. Nevertheless, the canonical form still factorizes into those of line segments. We can think of the trapezoid as one line segment being fibered through out the other while the length of the former depends linearly on the coordinates of the latter.

Now we move on to generic cases. Consider two polytopes $\mathcal{A}_{L/R}$ living in the projective space $\mathbb{P}^m$ and $\mathbb{P}^n$ that are bounded by $M$ and $N$ facets respectively. The facets are specified by vectors in the dual space,
\begin{align}\label{eq:ALAR}
    & \mathcal{A}_L(X)=\big\{X\in\mathbb{P}^m\,\big|\,X\cdot W_i^L\geqslant 0\,, 1\leqslant i\leqslant M\big\}\,,& & W_i^L=(C^L_i,w_{i_1}^L,w_{i_2}^L,\ldots,w_{i_m}^L)\,,\nonumber\\
    & \mathcal{A}_R(Y)=\big\{Y\in\mathbb{P}^n\,\big|\,Y\cdot W_i^R\geqslant 0\,, 1\leqslant i\leqslant N\big\}\,,& & W_i^R=(C^R_i,w_{i_1}^R,w_{i_2}^R,\ldots,w_{i_n}^R)\,,   
\end{align}
where $X=(1,x_1,x_2,\ldots,x_m)$ and $Y=(1,y_1,y_2,\ldots,y_m)$ are homogeneous coordinates. 
We can embed the direct product $\mathcal{A}=\mathcal{A}_L\times\mathcal{A}_R$ into the space $\mathbb{P}^{m+n}$ with homogeneous coordinate $Z=(1,x_1,\ldots,x_m,y_1,\ldots,y_n)$,
\begin{align}
    \mathcal{A}(Z)=\mathcal{A}_L\times\mathcal{A}_R=\big\{Z\in\mathbb{P}^{m+n}\,\big|\,Z\cdot\mathcal{W}_i\geqslant 0\,,1\leqslant i\leqslant M+N\big\}\,,
\end{align}
which is bounded by $M{+}N$ facets,
\begin{subequations}
\label{eq:dpembed}
\begin{align}
    & W_i^L\rightarrow\mathcal{W}_i=(C_i^L,w_{i_2}^L,\ldots,w_{i_m}^L,\underbrace{0,0,\ldots,0}_{n\text{ zeros}})\,,\\
    \label{eq:dpembed2}
    & W_i^R\rightarrow\mathcal{W}_{M+i}=(C_i^R,\underbrace{0,0,\ldots,0}_{m\text{ zeros}},w_{i_1}^R,w_{i_2}^R,\ldots,w_{i_n}^R)\,.
\end{align}
\end{subequations}
It is easy to show that the canonical form of $\mathcal{A}(Z)$ factorizes,
\begin{equation}
\Omega[\mathcal{A}]=\Omega[\mathcal{A}_{L}] \wedge \Omega[\mathcal{A}_{R}] \,.
\label{productss}
\end{equation}
For example, the line segments appeared in eq.~\eqref{expgeocan} are bounded by
\begin{align}\label{eq:square}
    W_1^L=W_1^R=(1,0)\,,& & W_2^L=W_2^R=(-1,1)\,.
\end{align}
After we embed them into $\mathbb{P}^2$, they form the four boundary components of the square,
\begin{align}
    & W_1^L\rightarrow\mathcal{W}_1=(1,0,0)\,,& & W_2^L\rightarrow\mathcal{W}_2=(-1,1,0)\,,\nonumber\\
    & W_1^R\rightarrow\mathcal{W}_3=(1,0,1)\,,& & W_2^R\rightarrow\mathcal{W}_4=(-1,0,1)\,.
\end{align}

Next, to construct a fiber-product geometry $\mathcal{A}_L\ltimes\mathcal{A}_R$, we first fiber $\mathcal{A}_L$ through $\mathcal{A}_R$ by making the shape of $\mathcal{A}_L$ depend linearly on the coordinates of $\mathcal{A}_R$. This can be done by deforming the facet vectors as
\begin{align}
    W_i^L\rightarrow\widetilde{W}_i^L=\Big(C_i^L+\sum_{j=1}^{n}\alpha_jy_j,,w_{i_1}^L,w_{i_2}^L,\ldots,w_{i_m}^L\Big),
\end{align}
where $\alpha_j$'s are a set of constants such that the topology of $\mathcal{A}_L$ does not change within $\mathcal{A}_R$. We denote this deformed polytope as $\mathcal{A}_L(X;Y)$. We can trivially obtain the canonical form $\Omega[\mathcal{A}_L(X;Y)]$ from $\Omega[\mathcal{A}_L(X)]$ by shifting the $C_i^L$'s correspondingly. In our trapezoid example~\eqref{expgeocanss}, only one facet receive such a shift,
\begin{align}
    W_2^L=(-1,1)\;\rightarrow\;\widetilde{W}_2^L=(-1+y,1)\,.
\end{align}
while the rest remain the same as eq.~\eqref{eq:square}.
Then similar to the direct product, $\mathcal{A}_L\ltimes\mathcal{A}_R$ is the polytope bounded by $M{+}N$ facets $\{\widetilde{W}_i^L,W_i^R\}$ after they are embedded into $\mathbb{P}^{m+n}$. While the embedding $W_i^R$ still follows eq.~\eqref{eq:dpembed2}, for $\widetilde{W}_i^L$ we have
\begin{align}
    \widetilde{W}_i^L\rightarrow\mathcal{W}_i=(C_i^L,w_1^i,w_2^i,\ldots,w_m^i,\alpha_1,\alpha_2,\ldots,\alpha_n)\,.
\end{align}
Back to the trapezoid example, this means $\widetilde{W}_2^L\rightarrow\mathcal{W}_2=(-1,1,1)$, while the other facets follow the procedure of eq.~\eqref{eq:dpembed}. As we will show with more details in appendix~\ref{sec:semidirpro}, the canonical forms of the fiber-product geometries factorize as
\begin{align}
    \Omega[\mathcal{A}_L\ltimes\mathcal{A}_R]=\Omega[\mathcal{A}_L(X;Y)]\wedge\Omega[\mathcal{A}_R(Y)]\,.
\end{align}

\indent The first example of a fiber-product geometry for open associahedra is the $X_{1,4}\rightarrow 0$ facet  of $\mathcal{A}[(1,2),3,(4,5),6]$. The associated constraint equations are 

\begin{equation}
\begin{split}
\pmb{H}_{6}[(1,2),3,(4,5),6]=\left\{\begin{array}{c}
s_{3,5},\ s_{2,4,5},\ s_{2,6},\ s_{3,6},\ s_{4,6}\\
\text{ equal to negative constants}
\end{array}\right\}\, \ .
\end{split}  
\label{6pointexrest}
\end{equation}

\noindent There are only three incompatible planar variables, which can be written in a manifestly positive form using three of the five constraints in eq.~\eqref{6pointexrest},

\begin{equation}
\begin{split}
X_{3,5} \big|_{X_{1,4}=0} &=c_{3,6}+c_{3,5}+X_{1,3} \ ,\\
X_{2,6} \big|_{X_{1,4}=0} &= c_{3,6}+c_{2,6}+X_{4,6}\ , \\
X_{3,6} \big|_{X_{1,4}=0} &= c_{3,6}+X_{1,3}+X_{4,6} \ , \\
\end{split}    
\end{equation}

\noindent The remaining two constraints fall into two groups associated with the left and right amplitudes on the factorization channel respectively. For the right group, the constants become the standard restriction equations associated with the ordering $\alpha_{R}=[p,(4,5),6]$:

\begin{equation}
X_{4,6}+X_{1,5}-c_{4,6}=0\;\rightarrow\; c_{4,6}=-s_{4,6} \ .
\label{rightrestrequations}
\end{equation}

\noindent However, the restrictions of the left amplitude, with the ordering $\alpha_{L}=[(1,2),3,q]$, gain a dependence on $X_{R}$ variables:

\begin{equation}
X_{2,3}+X_{1,3}-(c_{2,4,5}+c_{2,6}+X_{4,6})=0 \;\rightarrow\; c_{2,q}+X_{4,6}=-s_{2,q} \ ,
\label{leftrestrequations}
\end{equation}

\noindent where $c_{2,q}=c_{2,4,5}+c_{2,6}$ and $q=-p$. In particular, the constraint equation is parametrized by $X_{4,6}\in X_R$. A visualization is provided in figure~\ref{fig:semidirprod}. Therefore, we find the facet geometry associated with eqs.~\eqref{leftrestrequations} and \eqref{rightrestrequations} is a fiber geometry instead of a direct product:

\begin{equation}
\mathcal{A}_{6}[(1,2),3,(4,5),6]\big|_{X_{1,4}\rightarrow 0}=\mathcal{A}_{4}[(1,2),3,q]\ltimes  \mathcal{A}_{4}[p,(4,5),6] \ .
\end{equation}

\begin{figure}[t]
	\centering
	\begin{tikzpicture}[every node/.style={font=\footnotesize}]
	\filldraw [color=orange!30] (2,0) -- (0,0) -- (0,2) -- (4,2) -- (2,0);
	\draw [thick,-stealth] (0,0) -- (3.75,0) node[below=0pt]{$X_{2,q}$};
	\draw [thick,-stealth] (0,0) -- (0,3.75) node[left=0pt]{$X_{4,6}$};
	\draw [thick] (0,2) -- (4,2) -- (2,0);
	\filldraw (0,0) circle (1pt);
	\filldraw (0,2) circle (1pt) node[left=0pt]{$c_{4,6}$};
    \filldraw (2,0) circle (1pt) node[below=0pt]{$c_{2,q}$};
    \filldraw (4,2) circle (1pt);
    \begin{scope}[xshift=5.5cm]
	\draw [thick] (0,0.5) node[below=0pt]{$X_{4,6}$} -- (0,2.5) node[above=0pt]{$X_{p,5}$};
	\draw (1,1.5) node{$\ltimes$};
	\draw [thick] (2.3,1.5) node[left=0pt]{$X_{2,q}$} -- (4.3,1.5) node[right=0pt]{$X_{1,3}$};
	\draw (-1,1.5) node{$=$};
	\end{scope}
	\end{tikzpicture}
	\caption{The facet geometry of $X_{1,4}$ in $\mathcal{A}[(1,2),3,(4,5),6]$. The geometry is the semi-direct product of two bounded lines: $\mathcal{A}[(1,2),3,q]\ltimes  \mathcal{A}[p,(4,5),6] $. If it were a direct product, the geometry would be a square. }
	\label{fig:semidirprod}
\end{figure}
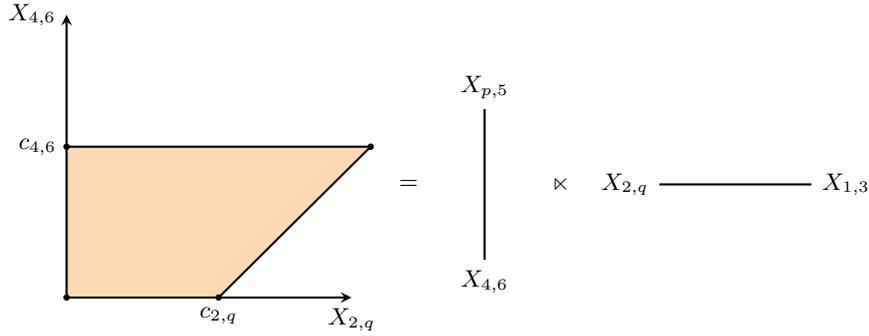

In principle, one can further generalize the fiber-product geometry by also deforming $\mathcal{A}_R$ with the coordinate of $\mathcal{A}_L$, namely,
\begin{align}\label{eq:Cshift}
    & W_i^L\rightarrow\widetilde{W}_i^L=\Big(C_i^L+\sum_{j=1}^{n}\alpha_jy_j,,w_{i_1}^L,w_{i_2}^L,\ldots,w_{i_m}^L\Big)\,,\nonumber\\
    & W_i^R\rightarrow\widetilde{W}_i^R=\Big(C_i^R+\sum_{j=1}^{m}\beta_jx_j,w_{i_1}^R,w_{i_2}^R,\ldots,w_{i_n}^R\Big)\,.
\end{align}
After being embedded into $\mathbb{P}^{m+n}$, we call the geometry bounded by the above $M{+}N$ facets a \emph{semi-direct product} $\mathcal{A}_L\bowtie\mathcal{A}_R$. However, the canonical form $\Omega(\mathcal{A}_L\bowtie\mathcal{A}_R)$ factorizes only if either $\alpha_i=0$ or $\beta_i=0$. Namely, it is a fiber product. In other words, for a general closed cluster polytope, each facet should either be a direct product or a fiber-product geometry. The same conclusion also applies to those open polytopes that are obtained by sending certain facets of some closed polytopes to infinity. Details of the derivation will be given in appendix~\ref{sec:semidirpro}.

\subsection{Extended equivalence classes}
\label{sec:extendeequiclass}

\indent The fiber geometry is not the only new phenomena that appears on the factorization channels of open associahedra. There is also a larger, continuous equivalence class of subspaces. In this section, we study this large equivalence class of subspaces associated with open associahedra. We first find by direct computation a new class of open associahedra geometry which appears at $n=6$. Motivated by the appearance of such a subspace, we attempt a direct construction all subspaces at $n=6$ before making an all $n$ conjecture. \\
\indent To see the appearance of this larger equivalence class on the facets of open associahedra, consider the $X_{2,4}=q^2\rightarrow 0$ factorization channel of $\mathcal{A}_6[1,(2,3),4,(5,6)]$. The restriction equations are

\begin{equation}
\begin{split}
\pmb{H}_{6}[1,(2,3),4,(5,6)]=\left\{\begin{array}{c}
s_{2,4}, \ s_{4,6},\ s_{2,3,6},\ s_{2,5,6},\ s_{3,5,6}\\
\text{ equal to negative constants}
\end{array}\right\}\, \ .
\end{split}  
\label{6pointexresttwo}
\end{equation}

\noindent There are now two incompatible planar variables $\{X_{1,3},X_{3,5}\}$, which can be written in a manifestly positive form using two of the five constraints in eq.~\eqref{6pointexresttwo},

\begin{align}
X_{3,5}\big|_{X_{2,4}=0}=c_{2,4}+X_{2,5}\,,& &  
X_{1,3}\big|_{X_{2,4}=0}=c_{2,4}+c_{2,5,6}+X_{1,5}\,.
\label{ireeleconst}
\end{align}

\noindent The other three constraints organize themselves into $\pmb{H}^R_{5}[1,q,4,(5,6)]$ as $\pmb{H}^L_{3}[-q,(2,3)]$ does not contribute any constraints. The constraint $s_{4,6}=-c_{4,6}$ remains unchanged and on the support of this factorization $s_{2,3,6}=s_{q,6}$ such that the constraint $s_{2,3,5}=-c_{2,3,6}$ naturally reduces to $s_{q,6}=-c_{q,6}$ with $c_{q,6}=c_{2,3,6}$. However, the constraint $s_{3,5,6}=-c_{3,5,6}$ has to be combined linearly with $s_{2,5,6}=-c_{2,5,6}$ to produce the final element of $\pmb{H}^R_{5}[1,q,4,(5,6)]$,
\begin{align} 
s_{3,5,6}+s_{2,5,6}=-c_{3,5,6}-c_{2,5,6}\;\rightarrow\; s_{q,5,6}+X_{1,5}=-c_{q,5,6}
\end{align}
where $c_{q,5,6}=c_{2,5,6}+c_{3,5,6}$. Therefore, we have
\begin{align}\label{newrestrictionss}
    \pmb{H}_5^{R}[1,q,4,(5,6)]=\big\{s_{4,6}=-c_{4,6}\,,s_{q,6}=-c_{q,6}\,,s_{q,5,6}+X_{1,5}=-c_{q,5,6}\big\}\,.
\end{align}
A visualization of the resultant polytope $\mathcal{A}_5^R[1,q,4,(5,6)]$ is provided in figure~\ref{fig:extendedequivalenclass}, after being relabeled to $[1,2,3,(4,5)]$. We now see that eq.~\eqref{newrestrictionss} is a deformed restriction, that does not correspond to any restrictions from section~\ref{sec:invsoftconstruction}. Note that the amplitude $A_5[1,2,3,(4,5)]$ is equal to $A_5[1,2,3,4,5]$, the $\mathfrak{adj}$ amplitude, as no planar variables are forbidden. Therefore, this deformation also corresponds to a new equivalence class of subspaces for $\mathfrak{adj}$ amplitudes.

\begin{figure}[t]
	\centering
	\begin{tikzpicture}[every node/.style={font=\footnotesize}]
	\filldraw [color=orange!30] (0,0) -- (0,1) -- (1,2) -- (5,4) -- (1,0) -- (0,0);
	\draw [thick,-stealth] (0,0) -- (3.75,0) node[below=0pt]{$X_{1,3}$};
	\draw [thick,-stealth] (0,0) -- (0,3.75) node[left=0pt]{$X_{3,5}$};
	\draw [thick] (0,0) -- (0,1) -- (1,2) -- (5,4) -- (1,0) -- (0,0);
	\filldraw (0,1) circle (1pt) node[left=0pt]{$c_{3,5}$};
    \filldraw (1,0) circle (1pt) node[below=0pt]{$c_{2,5}$};
    \filldraw (0,0) circle (1pt) (1,2) circle (1pt) (5,4) circle (1pt);
    \draw [dashed] (1,2) -- (0,2) node[left=0pt]{$c_{2,4,5,}{+}c_{3,5}$};
    \begin{scope}[xshift=6.5cm]
    \filldraw [color=orange!30] (0,0) -- (0,1) -- (1,2) -- (3,2) -- (1,0) -- (0,0);
    \draw [thick] (0,0) -- (0,1) -- (1,2) -- (3,2) -- (1,0) -- (0,0);
    \draw [thick,-stealth] (0,0) -- (3.75,0) node[below=0pt]{$X_{1,3}$};
	\draw [thick,-stealth] (0,0) -- (0,3.75) node[left=0pt]{$X_{3,5}$};
    \filldraw (0,1) circle (1pt) node[left=0pt]{$c_{3,5}$};
    \filldraw (1,0) circle (1pt) node[below=0pt]{$c_{2,5}$};
    \filldraw (0,0) circle (1pt) (1,2) circle (1pt) (3,2) circle (1pt);
    \draw [dashed] (1,2) -- (0,2) node[left=0pt]{$c_{2,4,5,}{+}c_{3,5}$};
	\end{scope}
	\end{tikzpicture}
	\caption{The left geometry corresponds to the $A_5[1,2,3,(4,5)]$ sub-amplitude that appears on the $X_{2,4}$ facet of $A_6[1,(2,3),4,(5,6)]$. The right geometry is the geometry corresponding to $A_5[1,2,3,(4,5)]$ derived from the inverse soft construction in Section \ref{sec:invsoftconstruction}. }
	\label{fig:extendedequivalenclass}
\end{figure}
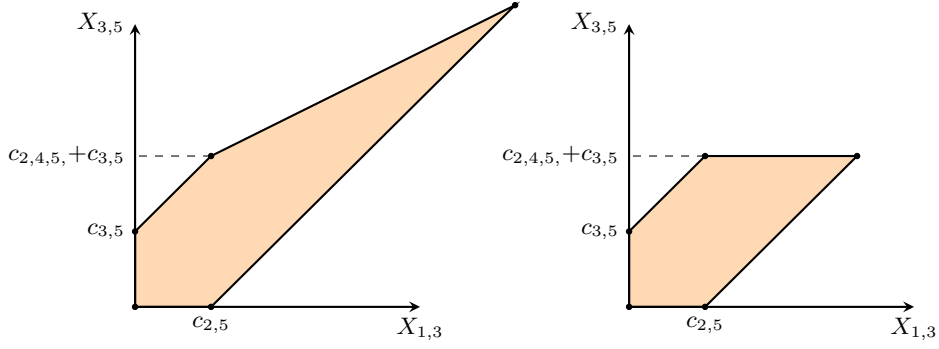
\indent Motivated by the appearance of deformed geometries, we now turn to a direct construction of this larger equivalence class of subspaces for open associahedra at $n=6$. Consider the open associahedra $\mathcal{A}_6[(1,2),(3,4),(5,6)]$. The restriction equations from section~\ref{sec:invsoftconstruction} are 

\begin{equation}
\pmb{H}_6[(1,2),(3,4),(5,6)]=\left\{\begin{array}{c}
s_{2,5,6}, \quad s_{3,5,6}, \quad s_{3,4,6}\\
\text{ equal to negative constants}
\end{array}\right\}\,  \ .
\end{equation}

\noindent In $Y=(1,X_{1,3},X_{3,5},X_{1,5})$ coordinates, the non-trivial facet vectors are 

\begin{align}\label{orfgoe}
& W_{1,3}=(0,1,0,0) \,,& &   W_{1,4}=(c_{3,5,6},1,-1,1)\,, \nonumber\\
& W_{3,5}=(0,0,1,0) \,,& &   W_{3,6}=(c_{3,4,6},1,1,-1)\,, \\
& W_{1,5}=(0,0,0,1) \,,& &   W_{2,5}=(c_{2,5,6},-1,1,1)\,. \nonumber
\end{align}

\noindent The canonical form associated with the above facet vectors can be calculated using the vertex expansion of the rational function given in appendix~\ref{sec:dualpolytope},

\begin{equation}
\underline{\Omega}=\sum_{v\in \textrm{vertices}}\frac{\langle W^{\star} \Pi_{I\in v}W_{I}\rangle}{(Y\cdot W^{\star})\Pi_{I\in v}  (Y\cdot W_{I}) } \ .  
\label{fess}    
\end{equation}

\noindent If $W^{\star}=(1,0,\ldots,0)$, one finds $\langle \Pi_{I\in v}W_{I}\rangle=\pm 1$ and that this expansion is equivalent to the Feynman diagram expansion of the canonical rational function. To show that eq. (\ref{fess}) is still equivalent to a Feynamn diagram expansion after some deformation, one must show that $\langle \Pi_{I\in v}W_{I}\rangle=\pm 1$ after the deformation. A class of continuous deformations of eq.~\eqref{orfgoe} which preserve the Feynman diagram expansion is

\begin{figure}
\centering
  \includegraphics[scale=0.5]{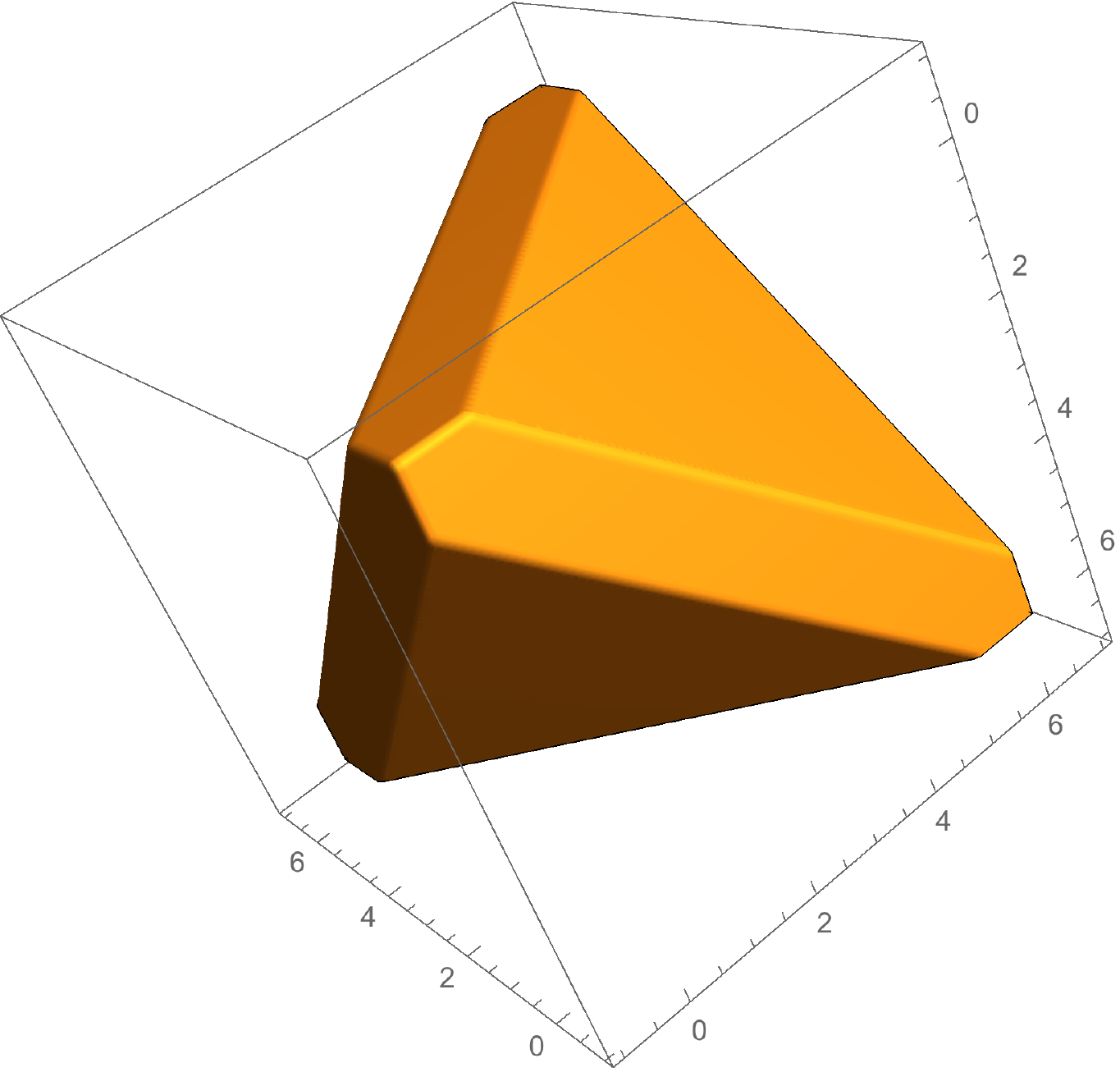} \ \
  \includegraphics[scale=0.5]{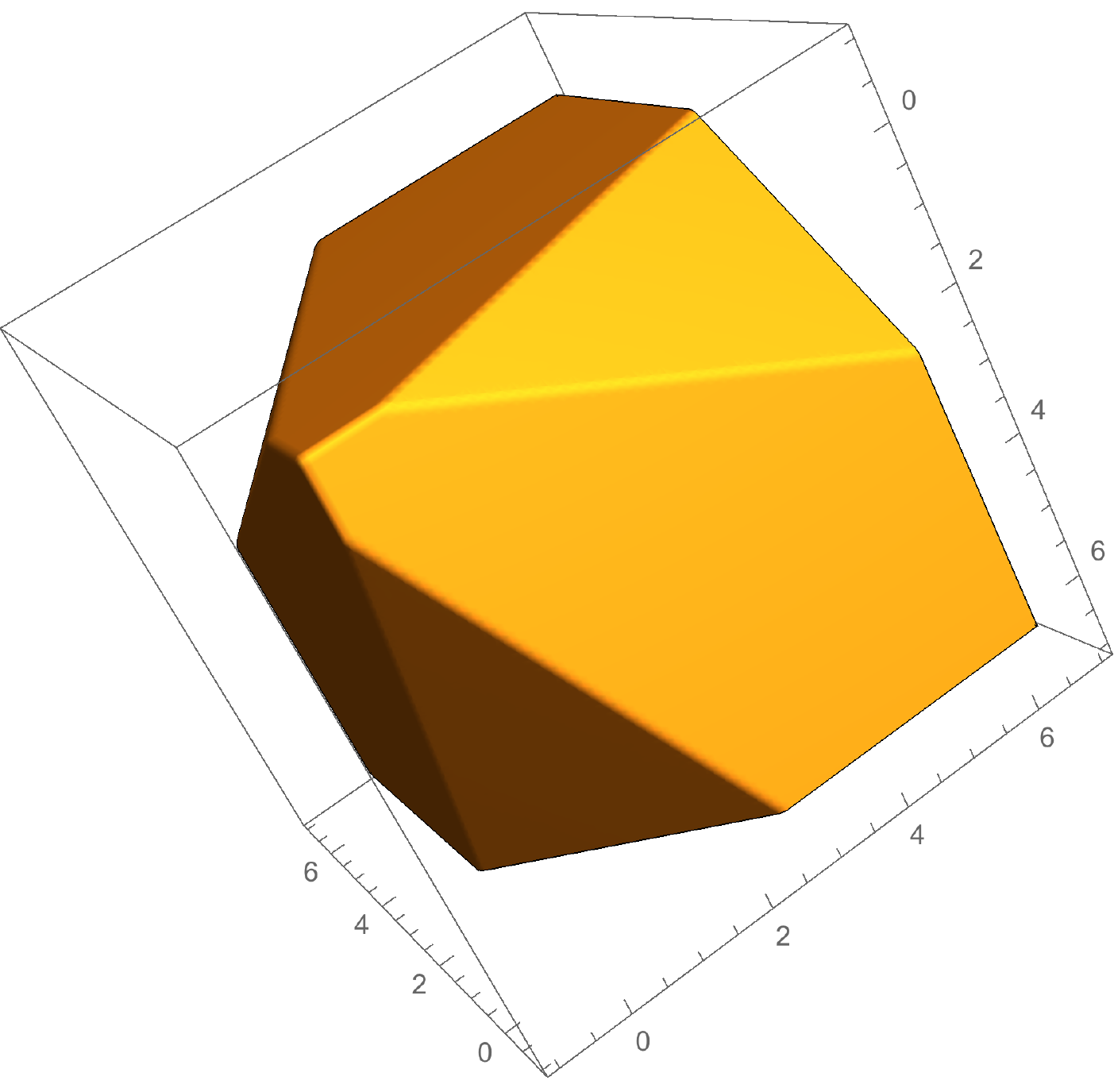} 
  \caption{The left figure corresponds to the undeformed geometry, $\mathcal{A}_6[(1,2),(3,4),(5,6)]$, given by restriction equations in section~\ref{sec:invsoftconstruction}. The right figure corresponds to the deformed geometry, $\mathcal{A}'_6[(1,2),(3,4),(5,6)]$, with $\alpha_{i,j}=2$. Note that the geometry is unbounded in both images and can be visualized as a cone with additional structure. }
  \label{boat2}
\end{figure}

\begin{align}\label{explictdefor}
& W_{1,3}'=(0,1,0,0)\,,& &   W_{1,4}'=(c_{3,5,6},\alpha_{1,1},-1,\alpha_{1,2})\,,  \nonumber\\
& W_{3,5}'=(0,0,1,0)\,,& &   W_{3,6}'=(c_{3,4,6},\alpha_{2,1},\alpha_{2,2},-1)\,,  \\
& W_{1,5}'=(0,0,0,1)\,,& &   W_{2,5}'=(c_{2,5,6},-1,\alpha_{3,1},\alpha_{3,2})\,. \nonumber
\end{align}

\noindent where all $\alpha_{i,j}\geqslant 1$. The additional restriction that $\alpha_{i,j}\geqslant 1$ in eq. (\ref{explictdefor}) is so the geometry has the correct facet and vertex structure. One can explicitly check that the canonical form of the geometry corresponding to the facets in eq. (\ref{explictdefor}) is functionally equivalent to eq.~\eqref{orfgoe}. Although it is easier to see the validity of the deformation at the level of facets, we can also interpret the deformation at the level of the constraints as 

\begin{equation}
\begin{split}
\hat{c}_{3,5,6}&=-s_{2,5,6}-(\alpha_{1,1}-1)X_{1,3}-(\alpha_{1,2}-1)X_{1,5}  \ , \\
\hat{c}_{3,4,6}&=-s_{2,5,6}-(\alpha_{2,1}-1)X_{1,3}-(\alpha_{2,2}-1)X_{3,5}  \ , \\
\hat{c}_{2,5,6}&=-s_{2,5,6}-(\alpha_{3,1}-1)X_{3,5}-(\alpha_{3,2}-1)X_{1,5} \ . \\
\end{split} 
\label{newdeforss}
\end{equation}

\noindent Therefore, there is a continuous class of valid subspaces which yield the canonical forms of the same functional form. \\

\indent For general bi-color amplitudes, the general equivalence class of subspaces seems too large to study. For instance, the number of valid deformations increases drastically from $k=3$ to $k=4$. We find that for the constraints of the form $s_{\mathsf{B}_{i},\mathsf{B}_{j}}=-c_{\mathsf{B}_{i},\mathsf{B}_{j}}$, we have a class of continuous deformations,\footnote{We note that $\mathsf{B}_i$ and $\mathsf{B}_j$ can be blocks at different levels. According to the recursive construction given in section~\ref{sec:invsoftconstruction}, if $\mathsf{B}_i$ and $\mathsf{B}_j$ are blocks at the same level, then one of them must be an $\mathfrak{adj}$ state.}

\begin{equation}
c_{\mathsf{B}_{i},\mathsf{B}_{j}}=-s_{\mathsf{B}_{i},\mathsf{B}_{j}}-\alpha_{i,j}s_{\mathsf{B}_{i}}-\beta_{i,j}s_{\mathsf{B}_{j}} \, , 
\label{clsoedeforss}
\end{equation}

\noindent where $\alpha_{i,j}\geqslant 0$ and $\beta_{i,j}\geqslant 0$, that preserve the Feynman diagram expansion. Although we do not have a proof that eq.~\eqref{clsoedeforss} is a valid class of deformations, we have performed checks for a large number of amplitudes and believe that the deformation can be derived from a generalized inverse-soft construction where we loosen the constraint that the restriction equations must correspond to generalized mandelstam variables. Note that this class of deformations is smaller than the general class of deformations. For example, when applying eq.~\eqref{clsoedeforss} to $\mathcal{A}_6[(1,2),(3,4),(5,6)]$, we get a special class of deformations given by eq.~\eqref{newdeforss}. In appendix~\ref{sec:factorizationchannel}, we give some explicit examples on how the fiber geometries and deformed constraints appear on the facets of the open associahedra.

\indent Given the existence of this large class of subspace, the most naive question one can ask is: what is the set of all geometries associated with a given canonical form? However, as we saw from direct computation, this question is hard to answer at large $n$, even for closed associahedra. For example, we just provided an extended equivalence class of constraints for closed associahedra that is much larger than the set of constraints given in~\cite{Arkani-Hamed:2017mur}. A more interesting problem is finding an extended equivalence class of open associahedra that is \textit{closed}. By closed, we refer to sets of geometries whose facet geometries are fiber products of geometries within the set. For example, the equivalence class of geometries for associahedra constructed in~\cite{Arkani-Hamed:2017mur} are closed.
It is still an open question whether there exists a closed equivalence class of geometries for open associahedra. Computational analysis shows that even after including the deformation provided by eq.~\eqref{clsoedeforss}, the geometries constructed in section~\ref{sec:invsoftconstruction} are still not closed for $n>10$. \\

\section{Recursion for open associahedra}
\label{recursionsection}

\noindent In this section, we give a recursion procedure for open associahedra, generalizing the recursion procedures of~\cite{Salvatori:2019phs, Arkani-Hamed:2019vag, Yang:2019esm}. This recursion is BCFW-like and could possibly offer insight into more complex BCFW-like recursions for theories with non-trivial flavor structure \cite{Britto:2005fq}. There are a number of challenges to finding a recursion for the open associahedra compared to the all $\mathfrak{adj}$ case. The most obvious problem is that the corresponding positive geometry is unbounded. However, applying some extra-steps, we generalize the proof~\cite{Yang:2019esm} for open associahedra carved out by the constraints $\pmb{H}_n[\alpha]$. For the purposes of this section, we restrict our recursion to the diagonal elements of double partial amplitudes, $m_{n}[\alpha|\alpha]$, which we denote simply as $A_n[\ldots]$.\footnote{Depending on the contexts, $\ldots$ will refer to the ordering $\alpha$ or to kinematic data, such as $X_{i,j}$.} Any off diagonal $m_{n}[\alpha|\beta]$ can be written as the product of lower point diagonal partial amplitudes~\cite{Cachazo:2013iea, Arkani-Hamed:2017mur}. \\
\indent We begin with a quick review of the recursion for ABHY closed associahedra, following Ref.~\cite{Yang:2019esm}, before giving the open-polytope generalization of the recursion for increasingly complex orderings.

\subsection{Review of recursion for closed associahedra}
\label{closedassrec}

We first review the recursion for double partial bi-adjoint amplitudes, $A_n[\alpha]$, with ordering $\alpha=[1,2,3,\ldots,n]$, following the field theoretic derivation of Ref.~\cite{Yang:2019esm} but focusing on the one-variable shift only. We also provide an example of the recursion for $A_5[1,2,3,4,5]$ at the end of the section. \\
\indent For bi-adjoint associahedra, the restriction equations are

\begin{equation}
\pmb{H}_{n}=\{ s_{i,j}=-c_{i,j}\,|\,1<i<j-1<n \}  \ .  
\label{biadjointrestr}
\end{equation}

\noindent Using these constraints, we can find a closed form solution to any $X_{i,j}$ variable in terms of the planar variables $X_{2,i}$ and the $c$-constants,\footnote{For this expansion to hold, we require that $3\leqslant i\leqslant n-1$ and $j\neq 2,3$. We further require $i<j$ except for $j=1$, for which we define $j-1=n$ in $\sum_b$.}

\begin{equation}
X_{i,j}=X_{2,j}-X_{2,i+1}+\sum_{a=2}^{i-1}\sum_{b=i+1}^{j-1}c_{a,b} \ . 
\label{closedformequt}
\end{equation}

\noindent Therefore, the natural basis for any recursion of the amplitude is 

\begin{equation}
Y=(1,X_{2,4},X_{2,5},\ldots,X_{2,n}) \ . 
\label{adjbaissssss}
\end{equation}

\noindent All planar variables not in the chosen basis can be written as a linear combination of $X_{2,a}$ and $c_{i,j}$. Now consider the contour integral

\begin{equation}
A_n=\oint_{|z-1|=\epsilon}\frac{dz}{2\pi i} \frac{z}{z-1}A_n[zX_{2,4}] \ ,
\label{orgcontour}
\end{equation}

\noindent where only $X_{2,4}$ is shifted: $X_{2,4}\rightarrow zX_{2,4}$. The contour of integration in the complex plane is a small circle around $z=1$ and the original amplitude is the residue of the integrand at $z=1$. The residue theorem leads to

\begin{equation}
\oint_{|z-1|=\epsilon} \frac{dz}{2\pi i}\frac{z}{z-1}A_n[zX_{2,4}] = -\Bigg ( \textrm{Res}_{\infty}+\sum_{z_i\in\,\textrm{finite poles}} \textrm{Res}_{z_{i}}\Bigg ) \frac{z}{z-1}A_n[zX_{2,4}] \,.
\label{summation}
\end{equation}

\noindent There are two contributions in eq.~\eqref{summation}: residues at finite poles, which correspond to intermediate states going on-shell, and at the pole at infinity, which is zero as we will see. 

\paragraph{Finite poles} The residues at finite poles correspond to deformed planar variables going to zero, $\hat{X}_{i,j}\rightarrow 0$. Since the only shift variable is $X_{2,4}$, only the planar variables that depend on $X_{2,4}$ in the basis $Y$ can contribute, which are all of the form $X_{3,i}$. Furthermore, the dependence of $X_{3,i}$ on $X_{2,4}$ always takes the form 

\begin{equation}
X_{3,i} = -X_{2,4}+\ldots \ ,
\label{propto}
\end{equation}

\noindent such that after the shift $X_{2,4}\rightarrow zX_{2,4}$,

\begin{equation}
\hat{X}_{3,i}(z)=(1-z)X_{2,4}+X_{3,i} \ .
\label{genericshift}
\end{equation}

\noindent Solving for $\hat{X}_{3,i}=0$ gives 

\begin{equation}
z_{i}=1+\frac{X_{3,i}}{X_{2,4}}, \quad \text{and}\quad \hat{X}_{3,j}(z_i)=X_{3,j}-X_{3,i} \ .
\label{zival}
\end{equation}

\noindent Plugging eq. (\ref{zival}) into eq. (\ref{summation}), we can write the finite pole contributions as

\begin{equation}
\begin{split}
\textrm{Res}_{z_{i}}\left ( \frac{z}{z-1}A[zX_{2,4}] \right )&=-\left ( \frac{1}{X_{3,i}}+\frac{1}{X_{2,4}}\right )\hat{A}_{L}(z_i)\times \hat{A}_{R}(z_i) \ .
\label{rescalc}
\end{split}
\end{equation}

\noindent where the $\hat{A}_{L/R}(z_i)$ are sub-amplitudes evaluated under the shift $\hat{X}_{3,j}(z_i) = X_{3,j}-X_{3,i}$. \\ 

\paragraph{No Pole at Infinity} To show that the residue at infinity vanishes, we only need to analyze that diagrams that scale as $\mathcal{O}(1/z)$ or worse in the large $z$ limit. First, no diagrams can scale as $O(1)$ in the large $z$ limit, as each Feynman diagram corresponds to a vertex with $n-3$ intersecting facets of planar variables. Since the open associahedron is $(n-3)$ dimensional, there cannot be $(n-3)$ intersecting facets that are all perpendicular to the $X_{2,4}$ facet. Next, we consider the diagrams that scale as $\mathcal{O}(1/z)$. They contain exactly $n{-}4$ propagators/facets that are independent of $X_{2,4}$. We denote them as $X_{v_{i}}$ and group the $\mathcal{O}(1/z)$ diagrams by shared $X_{v_{i}}$ propagators. The coefficient is nothing but the canonical rational function of a line segment, which is the intersection of these $n{-}4$ facets,
\begin{equation}\label{sumfeydiacontr}
    \left(\frac{1}{\hat{X}}+\frac{1}{C-\hat{X}}\right)\frac{1}{\prod^{n-4}X_{v_i}}\equiv A_4[\hat{X},C]\frac{1}{\prod^{n-4}X_{v_i}}\,.
\end{equation}
We denote this one dimensional canonical rational function as $A_{4}[\hat{X},C]$ since it corresponds to a four-point sub-amplitude. We note that in principle $C$ is a linear combination of the $c_{i,j}$ constants and unshifted basis variables. When $z\rightarrow \infty$, we find that

\begin{equation}
A_{4}[\hat{X},C]\sim \frac{1}{z}A_4[\hat{X},0]\ .
\label{deformationss}
\end{equation}

\noindent From eq.~\eqref{sumfeydiacontr}, we have $A_4[\hat{X},0]=0$. In general, for a bounded geometry, its canonical rational function vanishes when a boundary component becomes degenerate. Since all the one dimensional boundary components of a closed associahedron are bounded, all terms of the form~\eqref{sumfeydiacontr} must vanish at $z\rightarrow\infty$ and there is no pole at infinity. In general, one can shift $k$ variables in the basis $Y$~\cite{Yang:2019esm}, 
\begin{align}
    A_n=-\sum_{z_i\in\,\text{finite poles}}\textrm{Res}_{z_i}\frac{z^k}{z-1}A_n[zX]\,,
\end{align}
and the pole at infinity vanishes due to the \emph{soft condition} $A_{k+3}[X,0]=0$ that holds by the sub-geometries appearing at the $\mathcal{O}(1/z^k)$ order. In fact, the soft condition is satisfied by the ABHY associahedra at arbitrary dimensions~\cite{Arkani-Hamed:2017mur}. 

\paragraph{Example:}  We consider the lowest-point, non-trivial example: $A_{5}[1,2,3,4,5]$. We work in the basis, $Y=(1,X_{2,4},X_{2,5})$, and shift 

\begin{equation}
X_{2,4}\rightarrow zX_{2,4}
\end{equation}

\noindent The only variables which depend on $X_{2,4}$ in this basis are $X_{1,3}$ and $X_{3,5}$. Applying eq.~\eqref{genericshift} and~\eqref{rescalc}, and summing over the contributions from these two facets, we find

\begin{align}\label{examplefinite5point}
A_{5}&=\sum_{i=1,5} \left ( \frac{1}{X_{3,i}}+\frac{1}{X_{2,4}}\right )\hat{A}_{L}(z_i)\times \hat{A}_{R}(z_i)\nonumber\\
&=\left ( \frac{1}{X_{1,3}}+\frac{1}{X_{2,4}} \right )\left (\frac{1}{X_{1,4}}+\frac{1}{\hat{X}_{3,5}(z_1)} \right )+\left ( \frac{1}{X_{3,5}}+\frac{1}{X_{2,4}} \right )\left (\frac{1}{\hat{X}_{1,3}(z_5)}+\frac{1}{X_{2,5}} \right ) \nonumber\\
&=\left ( \frac{1}{X_{1,3}}+\frac{1}{X_{2,4}} \right )\left (\frac{1}{X_{1,4}}+\frac{1}{X_{3,5}-X_{1,3}} \right )+\left ( \frac{1}{X_{3,5}}+\frac{1}{X_{2,4}} \right )\left (\frac{1}{X_{1,3}-X_{3,5}}+\frac{1}{X_{2,5}} \right ) \nonumber\\
&=\frac{1}{X_{1,4}X_{1,3}}+\frac{1}{X_{2,4}X_{1,4}}+\frac{1}{X_{3,5}X_{2,5}}+\frac{1}{X_{2,4}X_{2,5}}+\frac{1}{X_{1,3}X_{3,5}} \ .
\end{align}

\noindent The intermediate steps correspond to a partial triangulation of the polytope $\mathcal{A}_5$ by an unphysical boundary $X_{3,5}-X_{1,3}=0$, as visualized in figure~\ref{fig:5pointclosedexp}.

\begin{figure}[t]
	\centering
	\begin{tikzpicture}[every node/.style={font=\footnotesize}]
	\filldraw [color=orange!30] (0,0) -- (2,0) -- (4,2) -- (4,4) -- (0,4) -- (0,0);
	\draw [thick] (0,0) -- (2,0) -- (4,2) -- (4,4) -- (0,4) -- (0,0);
	\draw [thick] (0,2) -- (4,2);
	\filldraw (1,0) circle (0pt) node[below=0pt]{\large $X_{2,5}$};
    \filldraw (0,2) circle (0pt) node[left=0pt]{\large $X_{2,4}$};
    \filldraw (2,4) circle (0pt) node[above=0pt]{\large $X_{1,4}$};
    \filldraw (4,3) circle (0pt) node[right=0pt]{\large $X_{1,3}$};
    \filldraw (2.75,0.75) circle (0pt) node[right=0pt]{\large $X_{3,5}$};
    \filldraw (2,1.9) circle (0pt) node[above=0pt]{\large $X_{3,5}-X_{1,3}$};
	\end{tikzpicture}
	\caption{A visualization of the (partial) triangulation of $\mathcal{A}_5[1,2,3,4,5]$ corresponding to the single variable shift. }
	\label{fig:5pointclosedexp}
\end{figure}
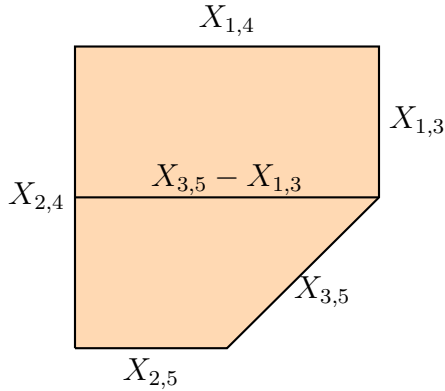

\subsection{Recursion for adjacent $\mathfrak{f}$-$\mathfrak{af}$ pair amplitudes}
\label{allfermpairsadj}

We want to consider a generalization of the recursion in section~\ref{closedassrec} to open associahedra. One might not expect such an recursion to exist since the soft condition $A_n[X,0]=0$ no longer holds. In fact, the derivation of the bi-adjoint recursion shows that for a particular $k$-variable shift, a much weaker condition is sufficient: 
only the $z$ dependent $k$ dimensional sub-geometries that appear at the $\mathcal{O}(1/z^k)$ order need to satisfy the soft condition. Although it is not difficult to construct an open associahedron in which all the two and higher dimensional sub-geometries are unbounded (see, for example, figure~\ref{fig:6pointexample}) that breaks the soft condition, there always exist bounded one dimensional sub-geometries (i.e., line segments) that respect it. Therefore, we can try to find a one-variable-shift scheme in which the $z$ dependence at the $\mathcal{O}(1/z)$ order appears in bounded one-dimensional sub-geometries only.

\indent As a warm-up example, consider the $A_{6}[(1,2),(3,4),(5,6)]$ amplitude, for which there are six $X_{i,j}$ allowed by flavor conservation. The restriction equations for this geometry were  given in eq.~\eqref{allrestrictionss}. We find that a convenient basis choice and shift scheme are
\begin{equation}
Y=(1,X_{1,3},X_{1,5},X_{3,5})\,,\quad X_{3,5}\rightarrow\hat{X}_{3,5}=zX_{3,5} \,.
\label{chosenbasisss}
\end{equation}
The three remaining $X_{i,j}$ can then be written in terms of $Y$ and $c$ constants,
\begin{equation}
\begin{split}
\hat{X}_{3,6}&=c_{3,4,6}-X_{1,5}+X_{1,3}+\hat{X}_{3,5}\, ,\\
\hat{X}_{2,5}&=c_{2,5,6}-X_{1,3}+\hat{X}_{3,5}+X_{1,5}\, , \\
\hat{X}_{1,4}&=c_{3,5,6}-\hat{X}_{3,5}+X_{1,5}+X_{1,3}\, , \\
\end{split}   
\label{allrestrictionsstwo}
\end{equation}
where the hat indicates that the variable is shifted.
There are two Feynman diagrams that scale as $\mathcal{O}(1/z)$, and they form a bounded one dimensional sub-geometry,

\begin{equation}
A_{6}\big|_{\mathcal{O}(1/z)}=\left ( \frac{1}{\hat{X}_{3,5}}+\frac{1}{\hat{X}_{1,4}} \right ) \frac{1}{X_{1,5}X_{1,3}} \,.
\label{feynmandiagramsss}
\end{equation}

\noindent It is easy to see that the large $z$ fall-off is actually $1/z^2$ for eq.~\eqref{feynmandiagramsss} such that there is no pole at infinity.
We now see that the basis and shift given in eq.~\eqref{chosenbasisss} form a valid recursion for $A_{6}[(1,2),(3,4),(5,6)]$. An example of the actual recursion procedure is given at the end of this section. In contrary, should we choose the basis as $Y=(1,X_{1,5},X_{2,5},X_{3,5})$ and still shift $X_{3,5}$, the $\mathcal{O}(1/z)$ order would become
\begin{align}
    A_6\big|_{\mathcal{O}(1/z)}=\frac{1}{\hat{X}_{1,3}X_{1,4}X_{1,5}}+\frac{1}{\hat{X}_{3,5}X_{1,5}X_{2,5}}\,,
\end{align}
where $\hat{X}_{1,3}=c_{2,5,6}+\hat{X}_{3,5}+X_{1,5}-X_{2,5}$. Now the $\mathcal{O}(1/z)$ order corresponds to two unbounded one dimensional geometries. There are no cancellations at the $\mathcal{O}(1/z)$ order, 
\begin{align}\label{eq:badscheme}
    A_6\big|_{\mathcal{O}(1/z)}\sim \frac{1}{z}\frac{1}{X_{3,5}X_{1,5}}\left(\frac{1}{X_{1,4}}+\frac{1}{X_{2,5}}\right),
\end{align}
which leads to a pole at infinity.

The example above shows that a good basis and shift choice is crucial for the recursion. Before moving to the most general cases, we first consider amplitudes with only $(\mathfrak{a})\mathfrak{f}$ states where all  $\mathfrak{f}$-$\mathfrak{af}$ pairs are adjacent,
\begin{align}
    \alpha=[(1,2),(3,4),\ldots,(n-1,n)]\,.
\end{align}
The preferred basis choice is
\begin{align}\label{eq:adjacentBasis}
    Y&=\big(1,\{X_{2i-1,2i+1}\},\{X_{3,2j+1}\}\big) \qquad (1\leqslant i\leqslant n/2\,,\quad 3\leqslant j\leqslant n/2-1)\nonumber\\
    &\equiv \big(1,X_{1,3},X_{3,5},\ldots,X_{n-1,1},X_{3,7},X_{3,9},\ldots,X_{3,n-1}\big)\,.
\end{align}
Physically, $\{X_{2i-1,2i+1}\}$ corresponds to a partial triangulation that gives an $(n/2)$-point $\mathfrak{adj}$ sub-amplitude. Applying the basis in section~\ref{closedassrec} to this sub-amplitude gives $\{X_{3,2i+1}\}$, the second part of $Y$. Such a basis at eight points is illustrated in figure~\ref{fig:triangulation8p}. The exact linear dependence of the other planar variables on this basis is given in eq.~\eqref{eq:adjacentConstraints}.

\begin{figure}[t]
\centering
\begin{tikzpicture}[every node/.style={font=\footnotesize,},dir/.style={decoration={markings, mark=at position \halfway with {\arrow{Latex}}},postaction={decorate}}]
\coordinate (p1) at (0:2);
\coordinate (p2) at (-45:2);
\coordinate (p3) at (-90:2);
\coordinate (p4) at (-135:2);
\coordinate (p5) at (-180:2);
\coordinate (p6) at (-225:2);
\coordinate (p7) at (-270:2);
\coordinate (p8) at (-315:2);
\node (p1) at (0:2) [vertex,label={[label distance=-2pt]0:{$1$}}] {};
\node (p2) at (-45:2) [vertex,label={[label distance=-2pt]-45:{$2$}}] {};
\node (p3) at (-90:2) [vertex,label={[label distance=-2pt]-90:{$3$}}] {};
\node (p4) at (-135:2) [vertex,label={[label distance=-2pt]-135:{$4$}}] {};
\node (p5) at (-180:2) [vertex,label={[label distance=-2pt]-180:{$5$}}] {};
\node (p6) at (-225:2) [vertex,label={[label distance=-2pt]-225:{$6$}}] {};
\node (p7) at (-270:2) [vertex,label={[label distance=-2pt]-270:{$7$}}] {};
\node (p8) at (-315:2) [vertex,label={[label distance=-2pt]-315:{$8$}}] {};
\path (p1.center) -- (p2.center) node (f1) [pos=0.5] {};
\path (p2.center) -- (p3.center) node (f2) [pos=0.5] {};
\path (p3.center) -- (p4.center) node (f3) [pos=0.5] {};
\path (p4.center) -- (p5.center) node (f4) [pos=0.5] {};
\path (p5.center) -- (p6.center) node (f5) [pos=0.5] {};
\path (p6.center) -- (p7.center) node (f6) [pos=0.5] {};
\path (p7.center) -- (p8.center) node (f7) [pos=0.5] {};
\path (p8.center) -- (p1.center) node (f8) [pos=0.5] {};
\draw [thick,gray] (f1.center) to [bend right=20] (f2.center) (f3.center) to [bend right=20] (f4.center) (f5.center) to [bend right=20] (f6.center) (f7.center) to [bend right=20] (f8.center);
\draw [thick] (p1.center) -- (p2.center) -- (p3.center) -- (p4.center) -- (p5.center) -- (p6.center) -- (p7.center) -- (p8.center) -- cycle;
\draw [thick,blue] (p1.center) -- (p3.center) ;
\draw [thick,blue] (p3.center) -- (p5.center) ;
\draw [thick,blue] (p5.center) -- (p7.center) ;
\draw [thick,blue] (p1.center) -- (p7.center) ;
\draw [thick,red] (p3.center) -- (p7.center) ;
\end{tikzpicture}
\caption{An example of the basis $Y$ for $\alpha=[(1,2),(3,4),(5,6),(7,8)]$. The gray curves are flavor lines. The blue diagonals correspond to $\{X_{2i-1,2i+1}\}$, which gives the partial triangulation that isolates the maximal $\mathfrak{adj}$ sub-amplitude. The red diagonals correspond to $\{X_{3,2j+1}\}$, which gives the triangulation of the resultant sub-polygon associated with the maximal $\mathfrak{adj}$ sub-amplitude. }
\label{fig:triangulation8p}
\end{figure}
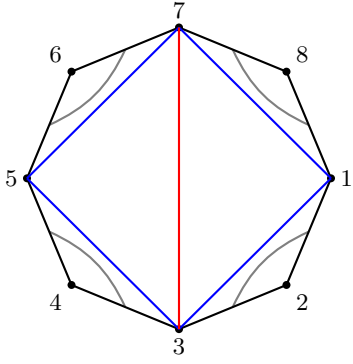

\indent We now repeat the derivation of section~\ref{closedassrec}, except using the planar basis in eq.~\eqref{eq:adjacentBasis} and the shift
\begin{equation}
    X_{3,5}\rightarrow\hat{X}_{3,5}=zX_{3,5}\,.
\end{equation}
We denote the set of deformed propagators under the shift $X_{3,5}\rightarrow zX_{3,5}$ as $F$, and the propagators themselves as $X_{F_{i}}$. For amplitudes with adjacent $\mathfrak{f}$-$\mathfrak{af}$ pairs only, $F$ is given by 
\begin{align}\label{eq:Fspecial}
    F=F^{(1)}\cup F^{(2)}\,,\quad \text{where} & & F^{(1)}=\{X_{3,5},X_{2,5}\}\bigcup_{i=3}^{n/2}\{X_{3,2i}\}\,,& & F^{(2)}=\bigcup_{j=1}^{n/2}\{X_{4,2i-1}\}\,.
\end{align}
The set $F$ for $n=8$ is shown in figure~\ref{fig:examplerestrict} as an example. Unlike the bi-adjoint case, not all variables depend on $X_{3,5}$ in the same way. We denote the prefactor of $X_{3,5}$ as $\lambda_{F_i}$,
\begin{equation}
    X_{F_i}=\lambda_{F_i}X_{3,5}+\ldots\,.
\end{equation}
In particular, we have
\begin{equation}\label{eq:lambdaadjacent}
    \lambda_{F_i}=\left\{\begin{array}{lll}
    1 & \quad\quad & X_{F_i}\in F^{(1)} \\
    -1 & \qquad & X_{F_i}\in F^{(2)}
    \end{array}\right.\,,
\end{equation}
which can be read off from eq.~\eqref{eq:adjacentConstraints}. The counterparts of eq.~\eqref{genericshift} and~\eqref{zival} are
\begin{gather}
    \hat{X}_{F_i}(z)=-\lambda_{F_i}(1-z)X_{3,5}+X_{F_i}\,, \nonumber\\
    z_{F_i}=1-\frac{X_{F_i}}{\lambda_{F_i}X_{3,5}}\quad\text{and}\quad \hat{X}_{F_j}(z_{F_i})=X_{F_j}-\frac{\lambda_{F_j}}{\lambda_{F_i}}X_{F_i}\,.
\end{gather}
Now we consider the contour integral that reproduces the amplitude,
\begin{align}
    A_n[(1,2),(3,4),\ldots,(n-1,n)]&=\oint_{|z-1|=\epsilon} \frac{dz}{2\pi i} \frac{z}{z-1}A_n[zX_{3,5}] \nonumber\\
    &= -\Bigg ( \textrm{Res}_{\infty}+\sum_{z_{F_i}\in\,\textrm{finite poles}} \textrm{Res}_{z_{F_i}}\Bigg ) \frac{z}{z-1}A_n[zX_{3,5}] \,.
\end{align}
The finite residue contribution is now
\begin{equation}
\begin{split}
-\,\textrm{Res}_{z_{i}}\left ( \frac{z}{z-1}A[zX_{3,5}] \right )&=\left ( \frac{1}{X_{F_{i}}}-\frac{1}{\lambda_{F_i}X_{3,5}}\right )\hat{A}_{L}(z_{F_i})\times \hat{A}_{R}(z_{F_i}) \,.
\end{split}
\label{contrifinitres}
\end{equation}
Very importantly, since the $X_{F_i}\in F$ are $(\mathfrak{a})\mathfrak{f}$ factorization channels, both $A_L$ and $A_R$ are lower points $(\mathfrak{a})\mathfrak{f}$ amplitudes with adjacent $\mathfrak{f}$-$\mathfrak{af}$ pairs. The recursion scheme is thus closed.

\paragraph{No Pole at Infinity} 
The goal is to show that under the particular basis choice~\eqref{eq:adjacentBasis} and the shift $X_{3,5}\rightarrow zX_{3,5}$, every Feynman diagram that scales as $\mathcal{O}(1/z)$ has a companion Feynman diagram which shares the same $(n{-}4)$ unshifted propagators and allowed by flavor symmetry. These two diagrams will cancel each other in the large-$z$ limit, as sketched in eq.~\eqref{sumfeydiacontr}. Geometrically, this is equivalent to proving that all edges created by the intersections of $(n{-}4)$ unshifted facets are bounded. Unlike the bi-adjoint case, this cancellation does not necessarily occur, as the companion Feynman diagrams could be forbidden by flavor conservation. This would happen had we used a ``bad'' basis. Eq.~\eqref{eq:badscheme} is such an example.
At low multiplicities, we can check explicitly the residue at infinity. For our eight-point example shown in figure~\ref{fig:examplerestrict}, the Feynman diagrams at $\mathcal{O}(1/z)$ are proportional to either $\frac{1}{\hat{X}_{3,5}}+\frac{1}{\hat{X}_{1,4}}$ or $\frac{1}{\hat{X}_{3,5}}+\frac{1}{\hat{X}_{4,7}}$, which in fact vanishes as $1/z^2$ at $z\rightarrow\infty$.
We defer a rigorous proof that actually works for a more generic color ordering to section~\ref{almostgenericorderingss}.

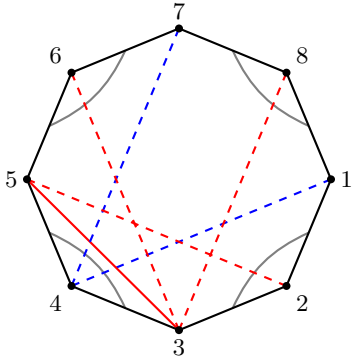
\begin{figure}[t]
\centering
\begin{tikzpicture}[every node/.style={font=\footnotesize,},dir/.style={decoration={markings, mark=at position \halfway with {\arrow{Latex}}},postaction={decorate}}]
\coordinate (p1) at (0:2);
\coordinate (p2) at (-45:2);
\coordinate (p3) at (-90:2);
\coordinate (p4) at (-135:2);
\coordinate (p5) at (-180:2);
\coordinate (p6) at (-225:2);
\coordinate (p7) at (-270:2);
\coordinate (p8) at (-315:2);
\path (p1.center) -- (p2.center) node (f1) [pos=0.5] {};
\path (p2.center) -- (p3.center) node (f2) [pos=0.5] {};
\path (p3.center) -- (p4.center) node (f3) [pos=0.5] {};
\path (p4.center) -- (p5.center) node (f4) [pos=0.5] {};
\path (p5.center) -- (p6.center) node (f5) [pos=0.5] {};
\path (p6.center) -- (p7.center) node (f6) [pos=0.5] {};
\path (p7.center) -- (p8.center) node (f7) [pos=0.5] {};
\path (p8.center) -- (p1.center) node (f8) [pos=0.5] {};
\draw [thick,gray] (f1.center) to [bend right=20] (f2.center) (f3.center) to [bend right=20] (f4.center) (f5.center) to [bend right=20] (f6.center) (f7.center) to [bend right=20] (f8.center);
\draw [thick,dashed,blue] (p1.center) -- (p4.center) ;
\draw [thick,red] (p3.center) -- (p5.center) ;
\draw [thick,dashed,blue] (p4.center) -- (p7.center) ;
\draw [thick,red,dashed] (p3.center) -- (p6.center) (p3.center) -- (p8.center) (p2.center) -- (p5.center);
\node (p1) at (0:2) [vertex,label={[label distance=-2pt]0:{$1$}}] {};
\node (p2) at (-45:2) [vertex,label={[label distance=-2pt]-45:{$2$}}] {};
\node (p3) at (-90:2) [vertex,label={[label distance=-2pt]-90:{$3$}}] {};
\node (p4) at (-135:2) [vertex,label={[label distance=-2pt]-135:{$4$}}] {};
\node (p5) at (-180:2) [vertex,label={[label distance=-2pt]-180:{$5$}}] {};
\node (p6) at (-225:2) [vertex,label={[label distance=-2pt]-225:{$6$}}] {};
\node (p7) at (-270:2) [vertex,label={[label distance=-2pt]-270:{$7$}}] {};
\node (p8) at (-315:2) [vertex,label={[label distance=-2pt]-315:{$8$}}] {};
\draw [thick] (p1.center) -- (p2.center) -- (p3.center) -- (p4.center) -- (p5.center) -- (p6.center) -- (p7.center) -- (p8.center) -- cycle;
\end{tikzpicture}
\caption{The set $F$ for $\alpha=[(1,2),(3,4),(5,6),(7,8)]$. The shifted propagator $X_{3,5}$ is represented by a red solid line and the other elements in $F^{(1)}$ are given in red dashed lines. The elements of $F^{(2)}$ are shown with blue dashed lines.}
\label{fig:examplerestrict}
\end{figure}

\paragraph{Example:} 
We now finish the full recursion of $A_{6}[(1,2)(3,4)(5,6)]$. The basis and shifted variable are given in eq.~\eqref{chosenbasisss}. According to eq.~\eqref{eq:Fspecial} and~\eqref{eq:lambdaadjacent}, the shifted variables and the $\lambda$-list are  

\begin{equation}
F=\{ X_{3,5}, X_{2,5},X_{3,6},X_{1,4}\} \,, \qquad \lambda=\{1,1,1,-1\}\,.   
\label{expshiftssixpoint}
\end{equation}

\noindent We sum over the contribution of each term in eq. (\ref{expshiftssixpoint}), finding

\begin{align}
A_{6}&=\sum_{X_{F_{i}}\in F} \left ( \frac{1}{X_{F_{i}}}-\frac{1}{\lambda_{F_i}X_{3,5}}\right )\hat{A}_{L}(z_i)\times \hat{A}_{R}(z_i) \nonumber\\
&=\left (\frac{1}{X_{2,5}}-\frac{1}{X_{3,5}}\right ) \frac{1}{\hat{X}_{3,5}(z_{{2,5}})X_{1,5}}+\left (\frac{1}{X_{3,6}}-\frac{1}{X_{3,5}}\right )\frac{1}{X_{1,3}\hat{X}_{3,5}(z_{{3,6}})} \nonumber\\
&\quad +\left (\frac{1}{X_{1,4}}+\frac{1}{X_{3,5}}\right )\frac{1}{X_{1,5}X_{1,3}} \nonumber \\
&=\left (\frac{1}{X_{2,5}}-\frac{1}{X_{3,5}}\right ) \frac{1}{(X_{3,5}-X_{2,5})X_{1,5}}+\left (\frac{1}{X_{3,6}}-\frac{1}{X_{3,5}}\right )\frac{1}{(X_{3,5}-X_{3,6})X_{1,3}}   \nonumber\\
&\quad +\left (\frac{1}{X_{1,4}}+\frac{1}{X_{3,5}}\right )\frac{1}{X_{1,5}X_{1,3}}  \nonumber \\
&=\frac{1}{X_{2,5}X_{3,5}X_{1,5}}+\frac{1}{X_{3,6}X_{3,5}X_{1,3}}+\frac{1}{X_{1,3}X_{1,4}X_{1,5}}+\frac{1}{X_{1,3}X_{3,5}X_{1,5}}
\label{exmaple6solution}
\end{align}

\noindent Note that the contribution from the $X_{3,5}$ facet vanishes trivially. In the intermediate steps, the spurious boundary components $X_{3,5}-X_{2,5}=0$ and $X_{3,5}-X_{3,6}=0$ introduce a partial triangulation to the open associahedron, and they cancel in the final result as expected.

\subsection{Preferred planar bases for open associahedra}
\label{basisarborder}

We now turn to deriving a generalization of the recursion in section~\ref{closedassrec} to generic open associahedra. The first step to deriving a recursion relation for any positive geometries is choosing a basis $Y$ and shift variables. As we have seen before, a suitable basis is crucial for the success of the recursion. 

Consider an ordering $\alpha=[\mathsf{B}_1,\mathsf{B}_2,\mathsf{B}_3,\ldots,\mathsf{B}_m]$ in which the first block $\mathsf{B}_1=(l_1,r_1)$ while the rest are generic and possibly contain sub-blocks, $\mathsf{B}_i=(l_i,\mathsf{B}_{i_1},\mathsf{B}_{i_2},\ldots,\mathsf{B}_{i_s},r_i)$. The partial triangulation by the diagonals
$\bigcup_{i=1}^{m}\{X_{l_i,r_i}\}$ 
gives us a $2m$-point adjacent pair $\mathfrak{f}$-$\mathfrak{af}$ sub-amplitude of the ordering $[(l_1,r_1),(l_2,r_2),\ldots,(l_m,r_m)]$. Following the strategy in section~\ref{allfermpairsadj}, we can write down the preferred basis of this sub-amplitude,
\begin{align}\label{eq:triangulation1}
    \bigcup_{i=1}^{m}\{X_{l_i,l_{i+1}}\}\bigcup_{i=4}^{m}\{X_{l_2,l_i}\}=\{X_{l_1,l_2},X_{l_2,l_3},\ldots,X_{l_m,l_1}\}\cup\{X_{l_2,l_4},X_{l_2,l_5},\ldots,X_{l_2,l_m}\}\,,
\end{align}
cf. eq.~\eqref{eq:adjacentBasis}. Under this partial triangulation, now each block $\mathsf{B}_i$ looks locally like a lower-point ordering $[(r_i,l_i),\mathsf{B}_{i_1},\mathsf{B}_{i_2},\ldots,\mathsf{B}_{i_s}]$. We can thus obtain the full basis recursively as $Y=\big(1,f[(l_1,r_1),\mathsf{B}_2,\mathsf{B}_3,\ldots,\mathsf{B}_m]\big)$, where
\begin{align}
    f[(l_1,r_1),\mathsf{B}_2,\mathsf{B}_3,\ldots,\mathsf{B}_m]=\bigcup_{i=1}^{m}\{X_{l_i,l_{i+1}},X_{l_i,r_i}\}\bigcup_{i=4}^{m}\{X_{l_2,l_i}\}\bigcup_{i=2}^{m}f[(r_i,l_i),\mathsf{B}_{i_1},\mathsf{B}_{i_2},\ldots]\,.
    \label{finalrequaiotss}
\end{align}
The recursion terminates at $\mathfrak{adj}$ states and adjacent $\mathfrak{f}$-$\mathfrak{af}$ pairs, on which the $f$ function returns the empty set. 

\begin{figure}[t]
\centering
\begin{tikzpicture}[every node/.style={font=\footnotesize,},dir/.style={decoration={markings, mark=at position \halfway with {\arrow{Latex}}},postaction={decorate}}]
\node (p1) at (0:2) [vertex,label={[label distance=-2pt]0:{$1$}}] {};
\node (p2) at (-36:2) [vertex,label={[label distance=-2pt]-36:{$2$}}] {};
\node (p3) at (-72:2) [vertex,label={[label distance=-2pt]-72:{$3$}}] {};
\node (p4) at (-108:2) [vertex,label={[label distance=-2pt]-108:{$4$}}] {};
\node (p5) at (-144:2) [vertex,label={[label distance=-2pt]-144:{$5$}}] {};
\node (p6) at (-180:2) [vertex,label={[label distance=-2pt]-180:{$6$}}] {};
\node (p7) at (-216:2) [vertex,label={[label distance=-2pt]-216:{$7$}}] {};
\node (p8) at (-252:2) [vertex,label={[label distance=-2pt]-252:{$8$}}] {};
\node (p9) at (-288:2) [vertex,label={[label distance=-2pt]-288:{$9$}}] {};
\node (p10) at (-324:2) [vertex,label={[label distance=-2pt]-324:{$10$}}] {};
\path (p1.center) -- (p2.center) node (f1) [pos=0.5] {};
\path (p2.center) -- (p3.center) node (f2) [pos=0.5] {};
\path (p3.center) -- (p4.center) node (f3) [pos=0.5] {};
\path (p4.center) -- (p5.center) node (f4) [pos=0.5] {};
\path (p5.center) -- (p6.center) node (f5) [pos=0.5] {};
\path (p6.center) -- (p7.center) node (f7) [pos=0.5] {};
\path (p7.center) -- (p8.center) node (f8) [pos=0.5] {};
\path (p8.center) -- (p9.center) node (f9) [pos=0.5] {};
\path (p9.center) -- (p10.center) node (f10) [pos=0.5] {};
\path (p10.center) -- (p1.center) node (f11) [pos=0.5] {};
\draw [thick,gray] (f1.center) to [bend right=20] (f2.center) (f3.center) to [bend right=20] (f4.center) (f8.center) to [bend right=20] (f9.center) (f7.center) to [bend right=0] (f10.center) (f5.center) to [bend right=0] (f11.center);
\draw [thick,blue] (p1.center) -- (p3.center) ;
\draw [thick,blue] (p3.center) -- (p5.center) ;
\draw [thick,blue] (p5.center) -- (p1.center) ;
\draw [thick,blue] (p9.center) -- (p7.center) ;
\draw [thick,red] (p5.center) -- (p10.center) ;
\draw [thick,blue] (p6.center) -- (p10.center) ;
\draw [thick,red] (p9.center) -- (p6.center) ;
\draw [thick] (p1.center) -- (p2.center) -- (p3.center) -- (p4.center) -- (p5.center) -- (p6.center) -- (p7.center) -- (p8.center) -- (p9.center) -- (p10.center) -- cycle;
\end{tikzpicture}
\caption{An example of the preferred basis $Y$ for $\alpha=[(1,2),(3,4),(5,(6,(7,8),9),10)]$. The blue and red lines collectively correspond to the chosen basis. While the blue lines correspond to the partial triangulation which yields sub-polygons associated with purely $\mathfrak{adj}$ sub-amplitudes, the red lines correspond to the triangulations of the remaining four-gons.}
\label{fig:exmaplestriangulationss}
\end{figure}
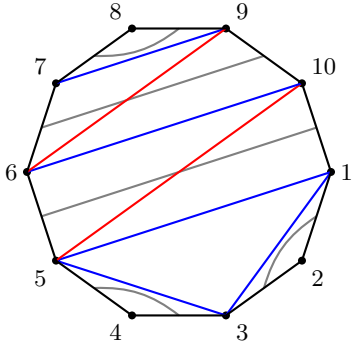

We now apply this basis recursion to some simple but nontrivial examples. First, we consider $\alpha=[(1,2),(3,4),(5,(6,7),8)]$. Applying eq.~\eqref{finalrequaiotss} to the out-most level of blocks, we get
\begin{align}
    f[(1,2),(3,4),(5,(6,7),8)]=\{X_{1,3},X_{3,5},X_{1,5},X_{5,8}\}\cup f[(8,5),(6,7)]\,.
\end{align}
Then $f[(8,5),(6,7)]=\{X_{6,8}\}$ gives the triangulation of the block $(5,(6,7),8)$. Together, we get the preferred basis for $\alpha=[(1,2),(3,4),(5,(6,7),8)]$,
\begin{align}
    Y=(1,X_{1,3},X_{3,5},X_{1,5},X_{5,8},X_{6,8}\}\,.
\end{align}
As our second example, we consider $\alpha=[(1,2),(3,4),(5,(6,(7,8),9),10)]$, for which the basis recursion gives
\begin{gather}
    f[(1,2),(3,4),(5,(6,(7,8),9),10)]=\{X_{1,3},X_{3,5},X_{1,5},X_{5,10}\}\cup f[(10,5),(6,(7,8),9)]\,,\nonumber\\
    f[(10,5),(6,(7,8),9)]=\{X_{6,10},X_{6,9}\}\cup f[(9,6),(7,8)]=\{X_{6,10},X_{6,9},X_{7,9}\}\,.
\end{gather}
The corresponding triangulation is shown in figure~\ref{fig:exmaplestriangulationss}.

\indent Given the above basis, one can find the dependence of all $X_{i,j}$ variables on the chosen shift variable. 
If there are at least three blocks at the out-most level ($m\geqslant 3$), we choose $X_{l_2,l_3}=X_{3,l_3}$ as the shift variable.
If $m{=}2$, we choose $X_{3,r_{2}}=X_{3,n}$ as the shift variable. 
With the preferred basis $Y$ and shift variable, we now give a complete recursion for generic bi-color amplitudes. 

\subsection{Recursion for $A_n[(1,2),\mathsf{B}_{2},\mathsf{B}_{3},\ldots, \mathsf{B}_{m}]$ with $m\geqslant 3$}
\label{almostgenericorderingss}

\noindent We now consider $\alpha=[(1,2),\mathsf{B}_{2},\mathsf{B}_{3},\ldots, \mathsf{B}_{m}]$ with $m\geqslant 3$. The case of $m=2$ is somewhat special and is covered in the next sub-section. The shift variable is $X_{3,l_{3}}$. For convenience, we separate the set of deformed propagators $F$ into three parts, $F=F^{(1)}\cup F^{(2)}\cup F^{(3)}$. The first two parts directly generalize eq.~\eqref{eq:Fspecial},
\begin{gather}
    F^{(1)}=\{ X_{3,l_{3}}\,,X_{2,l_{3}} \} \bigcup_{i=3}^{m} \{X_{3,r_{i}} \}\,,\quad F^{(2)}= \bigcup_{\substack{i=1 \\ i\neq 2}}^{m} \{ X_{r_{2},l_{i}} \}\,, \nonumber\\
    \lambda_{F_i}=\left\{\begin{array}{lll}
    1 & \quad\quad & X_{F_i}\in F^{(1)} \\
    -1 & \qquad & X_{F_i}\in F^{(2)}
    \end{array}\right.\,.
    \label{F1variabless}
\end{gather}
The $F^{(3)}$ part depends on the internal structure of the blocks,
\begin{gather}
    F^{(3)}=\bigcup_{\mathsf{I}\in \textrm{sub}[\mathsf{B}_{2}] } \{ X_{l_{\mathsf{I}},l_{3}} \}\bigcup_{i=3}^{m}\bigcup_{\mathsf{I}\in \textrm{sub}[\mathsf{B}_{i}] }\{ X_{3,l_{\mathsf{I}}} \}\,,\nonumber\\
    \lambda_{F_i}=\left\{\begin{array}{lll}
    1 & \quad\quad & X_{F_i}\in \{X_{l_{\mathsf{I}},l_3}\}\text{ and }\mathsf{I}\in\textrm{sub}[\mathsf{B}_2] \\
    \textrm{Loc}(\mathsf{I}) & \qquad & X_{F_i}\in \{X_{3,l_{\mathsf{I}}}\}\text{ and }\mathsf{I}\in\textrm{sub}[\mathsf{B}_{i\geqslant 3}]
    \end{array}\right.\,,
    \label{F3variabless}
\end{gather}
where $l_{\mathsf{I}}$ is again the left ($\mathfrak{a}$)$\mathfrak{f}$ state of the sub-block $\mathsf{I}\in \textrm{sub}[\mathsf{B}_{i}]$. The prefactor $\textrm{Loc}(\mathsf{I})$ corresponds to the location of $\mathsf{I}$ in $\mathsf{B}_{i}$ going from \emph{right to left},
\begin{align}
    \textrm{Loc}(\mathsf{B}_{i_\ell})=s-\ell+2\quad\text{ for }\quad\mathsf{B}_i=(l_i,\mathsf{B}_{i_1},\mathsf{B}_{i_2},\ldots,\mathsf{B}_{i_s},r_i)\,.
\end{align}
For example, given $\mathsf{B}_{i}=(l_{i},\mathsf{B}_{i_1},\mathsf{B}_{i_2},\mathsf{B}_{i_3},r_{i})$, we have $\textrm{Loc}(\mathsf{B}_{i_3})=2$, $\textrm{Loc}(\mathsf{B}_{i_2})=3$, and $\textrm{Loc}(\mathsf{B}_{i_1})=4$. We note that except for $X_{3,l_3}$, all the $X_{F_i}\in F$ crosses exactly one flavor line. This property will play an important role in proving that there is no pole at infinity. We can then write the recursion as
\begin{equation}
A_n=\sum_{X_{F_{i}}\in F}\left ( \frac{1}{X_{F_{i}}}-\frac{1}{\lambda_{F_i}X_{3,l_3}}\right )\hat{A}_{L}(z_{F_i})\times \hat{A}_{R}(z_{F_i}) \,,
\end{equation}
where $\hat{A}_{L/R}$ are evaluated with the deformed propagators
\begin{align}
    z_{F_i}=1-\frac{X_{F_i}}{\lambda_{F_i}X_{3,l_3}}\,,\qquad \hat{X}_{F_j}(z_{F_i})=X_{F_j}-\frac{\lambda_{F_j}}{\lambda_{F_i}}X_{F_i}\,.
\end{align}

\noindent We will first give an example, before giving a rigorous proof for the lack of pole at infinity.

\paragraph{Example: } 
We apply the recursion to $A_{8}[(1,2),(3,4),(5,(6,7),8)]$. The shift variable is $X_{3,5}\rightarrow zX_{3,5}$. According to eq.~\eqref{F1variabless} and~\eqref{F3variabless}, the set of deformed propagators and the corresponding $\lambda$-prefactors are

\begin{equation}
F=\{X_{3,5},\ X_{2,5},\ X_{3,8},\ X_{3,6},\ X_{1,4}\} \,,\quad
\lambda=\{ 1,\ 1, \ 1,\ 2,\ -1 \}\,.
\end{equation}

\noindent We apply eq. (\ref{contrifinitres}), finding 

\begin{align}
A_8&=\left (\frac{1}{X_{2,5}}-\frac{1}{X_{3,5}} \right ) \frac{1}{X_{1,5}X_{6,8}}\left(\frac{1}{X_{1,6}}+\frac{1}{X_{5,8}} \right )  \frac{1}{\hat{X}_{3,5}(z_{X_{2,5}})}   \nonumber \\
&\quad +\left (\frac{1}{X_{3,8}}-\frac{1}{X_{3,5}} \right )  \frac{1}{X_{1,3}\hat{X}_{3,5}(z_{X_{3,8}})X_{6,8}} \left ( \frac{1}{X_{5,8}}+\frac{1}{\hat{X}_{3,6}(z_{X_{3,8}})}\right )  \nonumber  \\
&\quad +\left (\frac{1}{X_{3,6}}-\frac{1}{2X_{3,5}} \right )  \frac{1}{\hat{X}_{3,5}(z_{X_{3,6}})X_{1,3}X_{6,8}}\left ( \frac{1}{\hat{X}_{3,8}(z_{X_{3,6}})}+\frac{1}{X_{1,6}}\right )   \nonumber  \\
&\quad +\left (\frac{1}{X_{1,4}}+\frac{1}{X_{3,5}} \right )  \frac{1}{X_{1,3}X_{6,8}X_{1,5}} \left ( \frac{1}{X_{5,8}}+\frac{1}{X_{1,6}}\right )     \,.
\label{8pointexamplesss}
\end{align}
The factor of $1/2$ in the third line is due to $\lambda_{X_{3,6}}=2$. After simplifying eq.~\eqref{8pointexamplesss}, one finds the nine Feynman diagrams that contribute to the amplitude,
\begin{align}
    A_8&=\frac{1}{X_{1, 3} X_{1, 4} X_{1, 5} X_{1, 6} X_{6, 8}} + \frac{1}{
 X_{1, 3} X_{1, 5} X_{1, 6} X_{3, 5} X_{6, 8}} + \frac{1}{
 X_{1, 5} X_{1, 6} X_{2, 5} X_{3, 5} X_{6, 8}} \nonumber\\
 &\quad + \frac{1}{
 X_{1, 3} X_{1, 6} X_{3, 5} X_{3, 6} X_{6, 8}} + \frac{1}{
 X_{1, 3} X_{3, 5} X_{3, 6} X_{3, 8} X_{6, 8}} + \frac{1}{
 X_{1, 3} X_{1, 4} X_{1, 5} X_{5, 8} X_{6, 8}} \nonumber\\
 &\quad + \frac{1}{
 X_{1, 3} X_{1, 5} X_{3, 5} X_{5, 8} X_{6, 8}} + \frac{1}{
 X_{1, 5} X_{2, 5} X_{3, 5} X_{5, 8} X_{6, 8}} + \frac{1}{
 X_{1, 3} X_{3, 5} X_{3, 8} X_{5, 8} X_{6, 8}}\,.
\end{align}
Without working out all the algebra, we can see the cancellation of spurious poles characteristic of BCFW-like recursions. For example, the spurious pole $X_{3,8}-\frac{1}{2}X_{3,6}$ appears in two terms,
\begin{equation}
\frac{1}{X_{1,3}X_{3,5}X_{6,8}}\left[\frac{1}{X_{3,8}(X_{3,6}-2X_{3,8})}+\frac{1}{X_{3,6}(X_{3,8}-\frac{1}{2}X_{3,6})} \right]=\frac{1}{X_{1,3}X_{3,5}X_{6,8}X_{3,8}X_{3,6}}\,.  
\end{equation}
It indeed cancels upon summing both terms, as expected from a BCFW-like recursion.

\paragraph{No Pole at Infinity} 
We will give a proof by contradiction. For there to be a pole at infinity, there must be an unbounded edge defined by the intersection of $(n-4)$ undeformed  facets. We denote $X_{j,l}$ and $X_{i,k}$ as the two planar variables associated with the edge that is compatible with such $(n-4)$ facets.\footnote{We define for convenience $i<j<k<l$ in the cyclic sense.} For this edge to be unbounded, $X_{j,l}$ must be forbidden by flavor conservation (namely, cross two flavor lines) and $X_{i,k}$ must be an element of $F$ (or vice versa). If $X_{i,k}=X_{3,l_3}$, then we must have $j=r_2$ or $j=l_{\mathsf{I}}$ with $\mathsf{I}\in\textrm{sub}[\mathsf{B}_2]$ such that $X_{i,j}$ and $X_{j,k}$ cross only one flavor line. For $l$, we have the following possibilities,
\begin{itemize}
    \item if $l=r_1=2$, then $X_{k,l}=X_{2,l_3}\in F$;
    \item if $l=r_i$ with $3\leqslant i\leqslant m$, then $X_{i,l}=X_{3,r_i}\in F$;
    \item if $l=l_{\mathsf{I}}$ with $\mathsf{I}\in\textrm{sub}[\mathsf{B}_{i}]$ and $3\leqslant i\leqslant m$, then $X_{i,l}=X_{3,l_{\mathsf{I}}}\in F$;
    \item for all the other choices, $X_{j,l}$ would not be forbidden by flavor conservation. 
\end{itemize}
Therefore, at least one of $X_{i,l}$ and $X_{k,l}$ must be an element of $F$, which contradicts with the assumption that both belong to the intersecting $(n-4)$ undeformed facets. Next, we consider the case that $X_{i,k}$ is an element of $F$ other than $X_{3,l_3}$, then $X_{i,k}$ must cross one flavor line according to the form of $F$. On the other hand, $X_{j,l}$ should cross two flavor lines. This leads to the contradiction that either $X_{k,l}$ or $X_{i,l}$ must cross two flavor lines. Therefore, we can now conclude that the intersection of $(n-4)$ undeformed facets being an unbounded edge cannot happen, and thus there is no pole at infinity.

\subsection{Recursion for $A_n[(1,2),\mathsf{B}_{2}]$}
\label{recurmionmone}

The final type of ordering to consider is  $\alpha=[(1,2),\mathsf{B}_{2}]$, so there is only one $\mathfrak{f}$-$\mathfrak{af}$ block other than $(1,2)$. Now by cyclicity we can define $l_3=l_1=1$, but if we still shift $X_{3,l_3}=X_{1,3}$, there will be a pole at infinity. For this case we instead choose the shift variable as $X_{3,r_{2}}=X_{3,n}$. The set of deformed variables is simply

\begin{equation}
F=\{ X_{3,n} \} \bigcup_{\mathsf{I}\in \textrm{sub}[\mathsf{B}_{2}]} \{ X_{1,l_{\mathsf{I}}}, \ X_{3,l_{\mathsf{I}}} \} \,,\quad
\lambda_{F_i}=\left\{\begin{array}{lll}
 -1 & \quad & X_{F_i} \in \{ X_{1,l_{\mathsf{I}}} \} \\ 
 1 & & X_{F_i} \in \{X_{3,n}, X_{3,l_{\mathsf{I}}} \ \}
\end{array}\right.\,.
\label{shiftedvariablessspeciacase}
\end{equation}

\noindent All $X_{F_i}\in F$, including the shift variable $X_{3,n}$, cross a flavor line so that the proof in section~\ref{allfermpairsadj} for no pole at infinity applies here. 

\paragraph{Example: } 
We apply the recursion procedure to $A_{6}[(1,2),(3,(4,5),6)]$. The shift variable is $X_{3,6}$. From eq.~\eqref{shiftedvariablessspeciacase}, the deformed propagators are 
\begin{equation}
\begin{split}
F&=\{ X_{1,4},X_{3,6} \} \ , \quad \lambda=\{ -1,1\} \ .
\end{split}
\end{equation}
Then we apply eq. (\ref{contrifinitres}), which gives 
\begin{align}
A_6&=\sum_{X_{F_{i}}\in F} \left ( \frac{1}{X_{F_{i}}}-\frac{1}{\lambda_{F_i}X_{3,6}}\right )\hat{A}_{L}(z_{F_i})\times \hat{A}_{R}(z_{F_i})=\left (\frac{1}{X_{1,4}}+\frac{1}{X_{3,6}} \right ) \frac{1}{X_{1,3}X_{4,6}} \,.
\end{align}
Here, the result is solely contributed by the $X_{1,4}$ channel.

\section{Duality between pullback and Melia decomposition}
\label{sec:proofpullback}

Finally, we move on to color-dressed amplitudes and study the color-kinematics duality in the geometric picture. We consider an arbitrary color-dressed amplitude in a gauge theory with $k$ pairs of $(\mathfrak{a})\mathfrak{f}$ states:
\begin{equation}
\textbf{M}_{n,k}=\sum_{\text{cubic } g}C(g|\alpha_{g})N(g|\alpha_{g})\prod_{I\in g}\frac{\vartheta_{I}}{s_{I}}\,,
\label{colordressedamplitude}
\end{equation}
where $C(g|\alpha_g)$ and $N(g|\alpha_g)$ are the color factor and kinematic numerator associated with the cubic graph $g$. Both of them are written with respect to an ordering $\alpha_g$ of external particles that is compatible with the graph. Different choices will give rise to additional signs to both quantities but the product is unchanged. Note that the function $\vartheta_I$, defined in eq.~\eqref{eq:thetaI}, removes all graphs that violates flavor conservation. The color factor $C(g|\alpha_{g})$ is a product of structure constants $f^{abc}$ and fundamental generators $(T^{a})_{\underline{i}}^{\bar{j}}$ at each vertex of $g$, depending on the interaction type. Specifying $\alpha_g$ fixes the ordering of external particles and thus determines $C(g|\alpha_g)$ without ambiguities.
We can write the color factors as linear combinations of those in the Melia basis, and the kinematic coefficients now become the color-ordered amplitudes $M_n[N,\beta]$,
\begin{equation}
\begin{split}
\mathbf{M}_{n,k}&=\sum_{\beta\in \text{Melia}} K[\beta] M_{n}[N;\beta] \ , \\
 M_{n}[N;\beta]&=\sum_{\beta\textrm{-planar } g}N(g|\beta)\prod_{I\in g} \frac{\vartheta_{I}}{s_{I}} \,.
\end{split}
\end{equation}
The exact form of $K[\beta]$ is unimportant to us and we refer interested readers to Ref.~\cite{Johansson:2015oia} for more explicit definitions, and Ref.~\cite{Ochirov:2019mtf} for a factorization based recursive construction. In the bi-color scalar theory, the kinematic numerator is replaced by another color factor $\tilde{K}[\beta]$. Therefore, the color decomposition gives the double partial amplitudes $m_n[\alpha|\beta]$,
\begin{equation}
\begin{split}
\textbf{M}_{n,k}^{\textrm{bi-color}}&=\sum_{\alpha,\beta\in \text{Melia}}K[\alpha]\tilde{K}[\beta]m_n[\alpha|\beta] \ , \\
m_n[\alpha|\beta]&=\sum_{\alpha\textrm{ and }\beta\textrm{-planar }g}\prod_{I\in g}\frac{\vartheta_{I}}{s_{I}} \ ,
\end{split}
\end{equation}
which consist of the Feynman diagrams that are compatible with both $\alpha$ and $\beta$. While we have studied the diagonal components in the previous sections, the off-diagonal $m[\alpha|\beta]$ is simply a product of lower points diagonal sub-amplitudes.\footnote{This is the generalization of the same statement for bi-adjoint double partial amplitudes~\cite{Cachazo:2013iea}.} 

In this section, we study the color-kinematics duality in the context of amplitudes as canonical forms. In particular, we will prove the pullback conjecture first given in Ref.~\cite{Herderschee:2019wtl} using our results on the geometry of facets of open associahedra.

\subsection{Duality between color and kinematics}
By including the kinematic numerators, we can write down the full (anti-)fundamental scattering form dual to the amplitude~\eqref{colordressedamplitude},
\begin{equation}
\Omega_{n}^{k}[N]=
\sum_{\text{cubic }g}N(g|\alpha_{g})W(g|\alpha_{g}) \prod_{I\in g} \frac{\vartheta_{I}}{s_{I}}\,,
\label{scatterinfoss}
\end{equation}
where $W(g|\alpha_{g})$ is a differential form,
\begin{equation}
W(g|\alpha_g)=\textrm{sign}(g|\alpha_{g})\bigwedge_{I\in g}ds_{I} \,.
\label{dlogforgraph}
\end{equation} 
The function $\textrm{sign}(g|\alpha_{g})$ resolves the ambiguity in the ordering of the differentials in eq.~\eqref{dlogforgraph}. For two graph-ordering pairs $(g|\alpha_g)$ and $(g'|\alpha'_g)$ relative sign is given by
\begin{equation}
\textrm{sign}(g|\alpha_{g})=(-1)^{\textrm{mutate}(g|g')}(-1)^{\textrm{flip}(\alpha_{g},\alpha_{g}')}\textrm{sign}(g'|\alpha_{g}') \ ,
\end{equation}
where $\textrm{mutate}(g,g')$ is the number mutations one must perform to relate graphs $g$ and $g'$,\footnote{A mutation is defined as taking a four-point sub-graph and exchanging the $s$-channel with $t$-channel or vice-versa. Two graphs related by a mutation differ by only one propagator. If $S$ and $T$ are the two different propagators in $g$ and $g'$ related by a mutation, we define $\bigwedge_{I\in g'}ds_I\equiv \bigwedge_{I\in g}ds_I\big|_{dS\rightarrow dT}$.} while $\textrm{flip}(\alpha_{g},\alpha_{g}')$ is the number of vertex flips necessary to relate $\alpha_{g}$ to $\alpha_{g}'$. More detailed discussion of $\textrm{sign}(g|\alpha_{g})$ is given in Ref.~\cite{Arkani-Hamed:2017mur}.

We further restrict the scattering form~\eqref{scatterinfoss} to the (anti-)fundamental small kinematic space $\mathcal{K}_n^{k}$, which is spanned by $s_I$ variables that satisfy the following conditions, 
\begin{itemize}
    \item $I\subset \{1,2,\ldots,n \}$ is a subset of particles,
    \item $s_{\bar{I}}=s_{I}$ if $\bar{I}$ is the complement of $I$, 
    \item $s_{I}=0$ for $|I|=0,1,n-1,n$,
    \item $s_{I}=b_{I}$ if $\vartheta_I=0$, namely, forbidden by flavor symmetry,
    \item The seven-term identity $s_{I_{1},I_{2}}+s_{I_{2},I_{3}}+s_{I_{1},I_{3}}=s_{I_{1}}+s_{I_{2}}+s_{I_{3}}+s_{I_{4}}$ is applied to all the four-set partitions $I_1\cup I_2\cup I_3\cup I_4$ \emph{except for} the cases that all $I_{i}$ correspond to ($\mathfrak{a}$)$\mathfrak{f}$ states. \ .
\end{itemize}
The seven-term identity is crucial to show that the differential $W(g|\alpha_g)$ in the scattering form~\eqref{scatterinfoss} obeys the same Jacobi identities as the color factor $C(g|\alpha_g)$ in the color-dressed amplitude~\eqref{colordressedamplitude}, 
\begin{equation}
C(g|\alpha_{g}) \leftrightarrow W(g|\alpha_{g}) \,.
\end{equation}
Thus the duality between color-dressed amplitudes and scattering forms generalizes to the bi-color theory. The dimension of $\mathcal{K}_n^k$ is $\frac{n(n-3)}{2}-\frac{k(k-1)}{2}$.

In Ref.~\cite{Arkani-Hamed:2017mur}, the duality is further developed by showing that the color decomposition of bi-adjoint amplitudes is dual to the pullback of the scattering form to a subspace of small kinematic space. In Ref.~\cite{Herderschee:2019wtl}, this duality has been conjectured to the scattering forms with $(\mathfrak{a})\mathfrak{f}$ states. The goal is to properly pull $\Omega_n^k[N]$ back to the subspace $H_n[\alpha]$ and show that this process is dual to a Melia basis color decomposition. However, the constraints $\pmb{H}_n[\alpha]$ given in section~\ref{sec:invsoftconstruction} alone is inadequate for this purpose. The reason is that $\pmb{H}_n[\alpha]$ is imposed on the kinematic space $\mathcal{K}_n^k[\alpha]$, which is in general a subspace of $\mathcal{K}_n^k$. We call the complement space of $\mathcal{K}_n^k[\alpha]$ in $\mathcal{K}_n^k$ as $\mathcal{D}_n^k[\alpha]$, which consists of Mandelstam variables that respect flavor symmetry but are non-planar under $\alpha$. To get rid of these additional degrees of freedom, we can find a basis of $\mathcal{D}_n^k[\alpha]$ and set them to constants. However, this process is more complicated than it sounds, since $\mathcal{K}_n^k[\alpha]$ and $\mathcal{D}_n^k[\alpha]$ do not form a direct product. A change of basis in $\mathcal{D}_n^k[\alpha]$ in general will modify the differentials living in $\mathcal{K}_n^k[\alpha]$. Thus a successful pullback hinges on a good choice of the basis of $\mathcal{D}_n^k[\alpha]$.

We call the set of constraints in $\mathcal{D}_n^k[\alpha]$ as $\pmb{H}_n^A[\alpha]$ such that the full set of constraints is $\pmb{H}_n^T[\alpha]=\pmb{H}_n[\alpha]\cup\pmb{H}_n^A[\alpha]$. We expect that upon taking the pullback of any $W(g|\kappa)$ to the subspace $H_n^T[\alpha]$, we have 
\begin{equation}
W(g|\kappa)\big|_{H_{n}^{T}[\alpha]}=\left\{\begin{matrix}
(-1)^{\textrm{flip}(\kappa,\alpha)}dV[\alpha] &\hphantom{aa} & \textrm{if }g\textrm{ is compatible with $\alpha$} \\ 
0 &\quad & \textrm{otherwise}
\end{matrix}\right. \,,
\label{conjectdiffofr} 
\end{equation}
such that pulling back eq.~\eqref{scatterinfoss} to $H_{n}^{T}[\alpha]$ is equivalent to removing all the Feynman graphs incompatible with $\alpha$ ordering.
If $\pmb{H}_n^A[\alpha]$ totally localizes $\mathcal{D}_n^k[\alpha]$, then the corresponding subspace $H_n^T[\alpha]=H_n[\alpha]$. Ref.~\cite{Herderschee:2019wtl} proposed that $\pmb{H}_n^A[\alpha]$ is given by
\begin{align}\label{setrestriequss}
    \pmb{H}_n^A[\alpha]&=\left\{-s_{\mathsf{I},\mathsf{I}'}=c_{\mathsf{I},\mathsf{I}'}\middle|\begin{array}{cc}
    \mathsf{I}\cap\mathsf{I}'\neq\emptyset\text{ are (sub-)blocks of }\alpha\text{ other than }(1,2) \\
    \text{that the diagonal }(l_{\mathsf{I}},l_{\mathsf{I}'})\text{ crosses at least two flavor lines}
    \end{array}\right\}\nonumber\\
    &\quad\cup\left\{-s_{\mathsf{I},r_{\mathsf{I}'}}=c_{\mathsf{I},r_{\mathsf{I}'}}\middle|\begin{array}{cc}
    \mathsf{I}\cap\mathsf{I}'\neq\emptyset\text{ are (sub-)blocks of }\alpha\text{ other than }(1,2) \\
    \text{that the diagonal }(l_{\mathsf{I}},r_{\mathsf{I}'})\text{ crosses at least one flavor line} \\
    \text{other than the one associated with $r_{\mathsf{I}'}$}
    \end{array}\right\}\,.
\end{align}
We note that in case $\mathsf{I}$ and/or $\mathsf{I}'$ is an $\mathfrak{adj}$ state, $l_{\mathsf{I},\mathsf{I}'}$ is the $\mathfrak{adj}$ particle itself. In general, $H_n^T[\alpha]$ contains $H_n[\alpha]$ as a subspace, and $dV[\alpha]$ is not a top-form. However, it involves the differentials in $H_n[\alpha]$ only. Equivalently speaking, the appropriate bases of $\mathcal{D}_n^k[\alpha]$ for the pullback must contain the Mandelstam variables in $\pmb{H}_n^A[\alpha]$ as a subset. Meanwhile, setting those in $\pmb{H}_n^A[\alpha]$ to constants is already sufficient to remove all the unwanted differentials.

Let us consider two examples where we explicitly show that the pullback conjecture holds. First consider the ordering $\alpha=[(1,2),(3,(4,5),6)]$. We find that $\pmb{H}^{A}_{n}[\alpha]=\emptyset$, but the restriction equation from section~\ref{sec:invsoftconstruction} are 
\begin{equation}
\pmb{H}_{6}[(1,2),(3,(4,5),6)]=\{ s_{3,6}=-c_{3,6}\} \ .
\label{Hnrestrictionsssix}
\end{equation}
The $W(g|\alpha)$ of any incompatible diagram must vanish upon pullback to eq. (\ref{Hnrestrictionsssix}).  For example, the differential corresponding to one incompatible diagram is, 
\begin{equation}
dW=ds_{1,2}\wedge ds_{3,6} \wedge ds_{4,6} \ , 
\end{equation}
which is zero upon pullback to eq.~\eqref{Hnrestrictionsssix} as $ds_{3,6}=-d(c_{3,6})=0$. As a slightly more non-trivial example, consider the ordering $\alpha=[(1,2)(3,(4,(5,6),7),8)]$. \noindent Unlike the previous example, $\pmb{H}^{A}_{n}[\alpha]$ is no longer empty and equal to 
\begin{equation}
\pmb{H}_{8}^{A}[(1,2),(3,(4,(5,6),7),8)]=\{ s_{5,6,8}=-c_{5,6,8} \}\ .
\label{HnrestrictionsseightA}
\end{equation}
In addition, the restriction equations of section~\ref{sec:invsoftconstruction} take the form:
\begin{equation}
\pmb{H}_{8}[(1,2),(3,(4,(5,6),7),8)]=\{ s_{3,8}=-c_{3,8}, \quad s_{4,7}=-c_{4,7} \}\ .
\label{Hnrestrictionsseight}
\end{equation}
Upon imposing eq.~\eqref{Hnrestrictionsseight}, all incompatible differentials vanish except 
\begin{equation}
\begin{split}
dW_{1}&=ds_{1,2}\wedge ds_{1,2,4}\wedge ds_{1,2,4,7}\wedge ds_{1,2,3,4,7}\wedge ds_{1,2,3,4,7,8}\ ,\\
dW_{2}&=ds_{1,2}\wedge ds_{1,2,4}\wedge ds_{1,2,4,7}\wedge ds_{1,2,4,7,8}\wedge ds_{1,2,3,4,7,8} \ , \\
dW_{3}&=ds_{1,2}\wedge ds_{1,2,7}\wedge ds_{1,2,4,7}\wedge ds_{1,2,3,4,7}\wedge ds_{1,2,3,4,7,8}\ , \\
dW_{4}&=ds_{1,2}\wedge ds_{1,2,7}\wedge ds_{1,2,4,7}\wedge ds_{1,2,4,7,8}\wedge ds_{1,2,3,4,7,8} \ .
\end{split}
\label{8pointexmapleipullback}
\end{equation}
However, after we impose the additional constraints in eq.~\eqref{HnrestrictionsseightA}, all the differentials in eq.~\eqref{8pointexmapleipullback} vanish due to the linear relations among the Mandelstam variables. \\

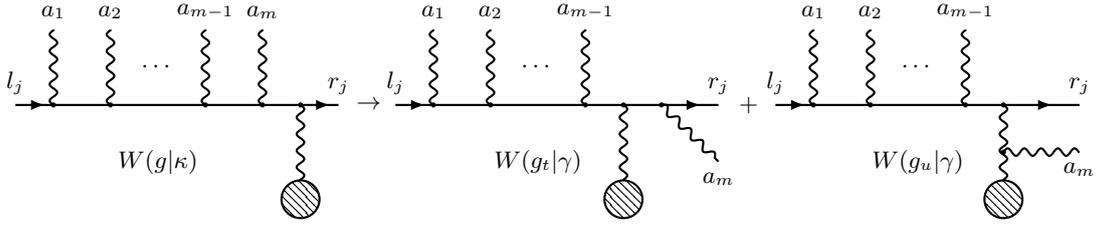
\begin{figure}
\centering
\begin{tikzpicture}[every node/.style={font=\footnotesize},dir/.style={decoration={markings, mark=at position \halfway with {\arrow[scale=0.8]{Latex}}},postaction={decorate}},glu/.style={decorate,decoration={coil, amplitude=4pt,segment length=5.15pt}},wavy/.style={decorate,decoration={coil,aspect=0, segment length=2.2mm, amplitude=0.5mm}}]
\draw [thick,dir] (0,0) node[above=0pt]{$l_j$} -- (0.5,0);
\draw [thick,dir] (3.75,0) -- (4.25,0) node[above=0pt]{$r_j$};
\draw [thick] (0.5,0) -- (3.75,0);
\draw [thick,wavy] (0.5,0) -- (0.5,1) node[above=0pt]{$a_1$};
\draw [thick,wavy] (1.25,0) -- (1.25,1) node[above=0pt]{$a_2$};
\draw [thick,wavy] (2.5,0) -- (2.5,1) node[above=0pt]{$a_{m-1}$};
\draw [thick,wavy] (3.25,0) -- (3.25,1) node[above=0pt]{$a_m$};
\draw [thick,wavy] (3.75,0) -- (3.75,-1);
\node at (1.875,0.5) {$\cdots$};
\filldraw (0.5,0) circle (0.75pt) (1.25,0) circle (0.75pt) (2.5,0) circle (0.75pt) (3.25,0) circle (0.75pt) (3.75,0) circle (0.75pt);
\draw [pattern=north west lines,thick] (3.75,-1.25) circle (0.25cm);
\node at (1.875,-0.75) {$W(g|\kappa)$};
\begin{scope}[xshift=5cm]
\draw [thick,dir] (0,0) node[above=0pt]{$l_j$} -- (0.5,0);
\draw [thick,dir] (3.5,0) -- (4.25,0) node[above=0pt]{$r_j$};
\draw [thick] (0.5,0) -- (3.5,0);
\draw [thick,wavy] (0.5,0) -- (0.5,1) node[above=0pt]{$a_1$};
\draw [thick,wavy] (1.25,0) -- (1.25,1) node[above=0pt]{$a_2$};
\draw [thick,wavy] (2.5,0) -- (2.5,1) node[above=0pt]{$a_{m-1}$};
\draw [thick,wavy] (3.5,0) -- (4.25,-0.75) node[below=0pt]{$a_m$};
\draw [thick,wavy] (3,0) -- (3,-1);
\node at (1.875,0.5) {$\cdots$};
\draw [pattern=north west lines,thick] (3,-1.25) circle (0.25cm);
\filldraw (0.5,0) circle (0.75pt) (1.25,0) circle (0.75pt) (2.5,0) circle (0.75pt) (3,0) circle (0.75pt) (3.5,0) circle (0.75pt);
\node at (1.875,-0.75) {$W(g_t|\gamma)$};
\end{scope}
\begin{scope}[xshift=10cm]
\draw [thick,dir] (0,0) node[above=0pt]{$l_j$} -- (0.5,0);
\draw [thick,dir] (3,0) -- (4,0) node[above=0pt]{$r_j$};
\draw [thick] (0.5,0) -- (3,0);
\draw [thick,wavy] (0.5,0) -- (0.5,1) node[above=0pt]{$a_1$};
\draw [thick,wavy] (1.25,0) -- (1.25,1) node[above=0pt]{$a_2$};
\draw [thick,wavy] (2.5,0) -- (2.5,1) node[above=0pt]{$a_{m-1}$};
\draw [thick,wavy] (3,-0.625) -- (4,-0.625) node[below=0pt]{$a_m$};
\draw [thick,wavy] (3,0) -- (3,-1);
\node at (1.875,0.5) {$\cdots$};
\filldraw (0.5,0) circle (0.75pt) (1.25,0) circle (0.75pt) (2.5,0) circle (0.75pt) (3,0) circle (0.75pt) (3,-0.625) circle (0.75pt);
\draw [pattern=north west lines,thick] (3,-1.25) circle (0.25cm);
\node at (1.875,-0.75) {$W(g_u|\gamma)$};
\end{scope}
\node at (9.65,0) {$+$};
\node at (4.65,0) {$\rightarrow$};
\end{tikzpicture}
    \caption{A visualization of eq.~(\ref{resttocon}), a key step to prove Lemma~\ref{lemma}.}
    \label{fig:feynmandiagrsum}
\end{figure}

\subsection{Proof of the pullback conjecture}\label{sec:pullbackconjproof}

We now give a complete proof of the pullback conjecture~\eqref{conjectdiffofr}. We first prove a lemma, based on which we given an inductive proof of eq.~\eqref{conjectdiffofr}.

\begin{lemma}\label{lemma}
Given a Feynman diagram $g$ that is incompatible with $\alpha$, the pullback of the differential $W(g|\kappa)$ to $H_n^T[\alpha]$ can be written as 
\begin{equation}
W(g|\kappa)\big|_{H_n^T[\alpha]}=\sum_{g_{l}} W(g_{l}|\kappa_{l})\big|_{H_n^T[\alpha]}, 
\label{wgkappasumansatz}
\end{equation}
where the sum is over the graphs $g_l$ that contain at least one propagator of the form $s_{a,b}$, $s_{a,r_j}$ or $s_{l_j,r_j}$, and incompatible with $\alpha$. Namely, the differential $W(g_l|\kappa_l)$ can be written as
\begin{align}
    W(g_l|\kappa_l)=ds_{I_l}\wedge W(g_l'|\kappa_l')\quad\text{ where }\quad s_{I_l}\in\{s_{a,b},s_{a,r_j},s_{l_j,r_j}\}\,,
\end{align}
where $a$, $b$ are $\mathfrak{adj}$ states and $l_i$, $r_i$ are $(\mathfrak{a})\mathfrak{f}$ states.
\end{lemma}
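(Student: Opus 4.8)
The plan is to show that the mutation (Jacobi) relations obeyed by the differentials $ds_I$ allow us to rewrite any incompatible $W(g|\kappa)$, on the support of $H_n^T[\alpha]$, as a combination of graphs carrying one of the three distinguished propagators, and then to iterate this rewriting until every surviving term has the desired form. The engine is the seven-term identity: taking the exterior derivative of $s_{I_1,I_2}+s_{I_2,I_3}+s_{I_1,I_3}=s_{I_1}+s_{I_2}+s_{I_3}+s_{I_4}$ and using that the single-set invariants are fixed external data (so their differentials vanish) gives the three-term relation $ds_S+ds_T+ds_U=0$ at every four-point sub-configuration, \emph{except} the one in which all four sets are single $(\mathfrak{a})\mathfrak{f}$ states. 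Wedging this relation with the differentials of the propagators shared by the three channels turns it into the graphical identity $W(g|\kappa)=\pm W(g_t|\gamma)\pm W(g_u|\gamma)$ that is drawn in figure~\ref{fig:feynmandiagrsum} and recorded as eq.~\eqref{resttocon}.

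First I would locate, in an incompatible $g$ that does not already contain a distinguished propagator, a vertex at which the move of figure~\ref{fig:feynmandiagrsum} applies: a block whose spine emits an $\mathfrak{adj}$ leg $a_m$ immediately before the closing leg $r_j$. Mutating at the four-point configuration $\{\text{spine},a_m,r_j,\text{rest}\}$ produces one channel carrying the propagator $s_{a_m,r_j}$, of the distinguished form $s_{a,r_j}$, and a second channel in which $a_m$ has been absorbed one step deeper. The same local analysis covers the two remaining distinguished forms: an $s$-channel of two adjacent $\mathfrak{adj}$ legs produces $s_{a,b}$, and the closure of a full $\mathfrak{f}$-$\mathfrak{af}$ pair produces $s_{l_j,r_j}$.

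Next I would organise the rewriting as an induction on a complexity measure, for instance the number of $\mathfrak{adj}$ legs emitted inside a block before its closing $r_j$, combined with the nesting depth of the offending non-planar propagator. The base case is a graph already containing a distinguished propagator, which is its own term $g_l$. For the inductive step I would check three things about each channel produced by the move: that it remains incompatible with $\alpha$, so the induction stays inside the incompatible class and no $dV[\alpha]$ term is generated; that the mutation used was legal, meaning the four sets are never all single $(\mathfrak{a})\mathfrak{f}$ states, which is guaranteed because the emitted leg $a_m$ (or the remainder) carries an $\mathfrak{adj}$ particle; and that the complexity measure strictly decreases, so the recursion terminates with every surviving term carrying an $s_{a,b}$, $s_{a,r_j}$ or $s_{l_j,r_j}$ propagator. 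The sign bookkeeping relating $\gamma$, $\kappa$ and the $\kappa_l$ then follows from the $\mathrm{mutate}/\mathrm{flip}$ rule for $\mathrm{sign}(g|\alpha_g)$ and does not affect the structure of the sum.

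The main obstacle is the simultaneous control of legality, incompatibility and termination in the inductive step. Ensuring that the chosen mutation never lands on the exceptional all-$(\mathfrak{a})\mathfrak{f}$ four-point configuration, the single case in which the seven-term identity and hence the graphical identity fails, requires a careful case analysis of the local block and flavor structure around the mutated vertex; and one must verify that the non-distinguished channel produced at each step is still incompatible with $\alpha$ rather than collapsing to a compatible, and hence $dV[\alpha]$-valued, graph. Establishing that a legal, incompatibility-preserving, complexity-reducing move always exists whenever no distinguished propagator is yet present is the crux of the argument.
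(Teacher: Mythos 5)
Your strategy is the same as the paper's: isolate the problematic nested chain $s_{l_j,a_1,\ldots,a_m}$ terminating at $r_j$ (eq.~\eqref{eq:Wg7t}), apply the seven-term identity at the vertex where $a_m$ meets $r_j$ to split $W(g|\kappa)$ into a $g_t$ carrying the distinguished propagator $s_{a_m,r_j}$ and a $g_u$ in which the chain is shortened by one, and recurse on the chain length. The gap is the inductive invariant you rely on --- that the move can always be arranged so that both daughter channels remain incompatible with $\alpha$ --- which you yourself flag as the unresolved crux. That invariant is not available: the only propagator altered by the mutation is $s_{l_j,a_1,\ldots,a_m}$, replaced by $s_{a_m,r_j}$ in $g_t$ and by $s_{l_j,a_1,\ldots,a_{m-1}}$ in $g_u$ (both of which are planar and flavor-allowed), so nothing prevents \emph{both} daughters from being compatible with $\alpha$ even though $g$ is not. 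In that event your recursion exits the incompatible class and produces $dV[\alpha]$-valued terms that the lemma does not allow, and no choice of ``legal, incompatibility-preserving'' move exists to avoid it.

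The idea your proposal is missing is how the paper handles precisely this case: one does not keep recursing but recombines $g_t$ and $g_u$ back into eq.~\eqref{eq:Wg7t} and invokes the constraints in $\pmb{H}_n[\alpha]$, which fix the two-particle invariants $s_{l_i,a_m}$ and $s_{a_j,a_m}$ to constants; by the expansion \eqref{eq:sla1am} this forces $ds_{l_i,a_1,\ldots,a_{m-1},a_m}=ds_{l_i,a_1,\ldots,a_{m-1}}$ on the subspace, so the wedge product in eq.~\eqref{eq:Wg7t} contains a repeated one-form and $W(g|\kappa)\big|_{H_n^T[\alpha]}=0$, contributing nothing to the sum \eqref{wgkappasumansatz}. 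With this vanishing mechanism supplied, the remaining branch (both daughters incompatible) proceeds exactly as you describe: $W(g_t|\gamma)$ is already of the required form and the recursion on $W(g_u|\gamma)$ terminates when the differential vanishes or the propagator $s_{l_j,r_j}$ is reached. A minor further imprecision: the three-term relation $ds_S+ds_T+ds_U=0$ does not follow from the single-set invariants being ``fixed external data,'' since $s_{I_1}=s_{l_j,a_1,\ldots,a_{m-1}}$ and $s_{I_4}=s_{l_j,a_1,\ldots,a_m,r_j}$ are multi-particle; the graphical identity of figure~\ref{fig:feynmandiagrsum} holds because their differentials already appear among the wedged propagators of $g$ and are annihilated in the wedge.
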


\begin{proof}
First, we note that any Feynman diagram must contain at least one propagator of the form
$\{s_{a,b},s_{a,r_j},s_{l_j,r_j},s_{l_j,a_1,a_2,\ldots,a_m}\}$. We can thus write $W(g|\kappa)$ as
\begin{equation}
W(g|\kappa)=ds_{I}\wedge W(g'|\kappa')\quad\text{ where }\quad s_I\in\{s_{a,b},s_{a,r_j},s_{l_j,r_j},s_{l_j,a_1,a_2,\ldots,a_m}\}\,.
\label{refereq}
\end{equation}
For the first three cases, eq.~\eqref{wgkappasumansatz} trivially holds, so we only need to prove it for the Feynman diagrams that contain the propagator $s_{l_j,a_1,a_2,\ldots,a_m}$ but none of the other three. In fact, $W(g|\kappa)$ for such graphs has to take the form
\begin{align}\label{eq:Wg7t}
    W(g|\kappa)=\Big(\bigwedge_{j=1}^{m-1}ds_{l_i,a_1,\ldots,a_j}\Big)\wedge ds_{l_i,a_1,\ldots a_{m-1},a_m}\wedge ds_{l_i,a_1,\ldots,a_m,r_i}\wedge W(g''|\kappa'')
\end{align}
since otherwise propagators of the first three types in eq.~\eqref{refereq} must appear. We use the seven-term identity to $s_{l_i,a_1,\ldots a_{m-1},a_m}$, which results in two new diagrams, as shown in figure~\ref{fig:feynmandiagrsum}. Thus we have $W(g|\kappa)=W(g_t|\gamma)+W(g_u|\gamma)$, where
\begin{align}\label{resttocon}
    W(g_t|\gamma)=ds_{a_m,r_j}\wedge W(g_t'|\gamma')\,,\qquad W(g_u|\gamma)=ds_{l_j,a_1,\ldots,a_{m-1}}\wedge W(g_u'|\gamma'')\,.
\end{align}
If both $g_t$ and $g_u$ are compatible with $\alpha$, namely, $\gamma=\alpha$, we recombine them back into eq.~\eqref{eq:Wg7t} and use
\begin{align}\label{eq:sla1am}
s_{l_i,a_1,\ldots,a_{m-1},a_m}&=s_{l_i,a_1,\ldots,a_{m-1}}+s_{l_i,a_m}+\sum_{j=1}^{m-1}s_{a_j,a_m}=s_{l_i,a_1,\ldots,a_{m-1}}-c_{l_i,a_m}-\sum_{j=1}^{m-1}c_{a_j,a_m}\,,
\end{align}
such that $ds_{l_i,a_1,\ldots,a_{m-1},a_m}=ds_{l_i,a_1,\ldots,a_{m-1}}$ and $W(g|\kappa)\big|_{H_n^T[\alpha]}=0$. We note that the first equality of eq.~\eqref{eq:sla1am} is due to the seven-term identities and the second equality is due to the constraints in $\pmb{H}_n[\alpha]$.
If both $g_t$ and $g_u$ are still incompatible with $\alpha$, then $W(g_t|\gamma)$ is already in the form of eq.~\eqref{wgkappasumansatz}. We can recursively apply the process to $ds_{l_j,a_1,\ldots,a_{m-1}}$ in $W(g_u|\gamma)$ until the differential vanishes or we will end up with a differential corresponding to a diagram with propagator $s_{l_j,r_j}$. 
\end{proof}

\indent From the explicit form of $\pmb{H}^{A}_{n}[\alpha]$ in eq. (\ref{setrestriequss}) and $\pmb{H}_{n}[\alpha]$ in section~\ref{sec:invsoftconstruction}, we know that the $s_{I}$ of the form:

\begin{equation}
 s_I\in\{s_{{l}_{j},{r}_{j}}, s_{a,b}, s_{a,{r}_{j}}\} \ ,
\label{equ:povesconsarb}
\end{equation}

\noindent are constants if incompatible with $\alpha$.\footnote{The constraints $\pmb{H}_n^A[\alpha]$ are needed because $(a,b)$ and $(a,\bar{r}_j)$ can be separated by more than one flavor lines and $\pmb{H}_n[\alpha]$ alone does not put constraints on them.} Therefore, the $W(g_{l}|\kappa_{l})$ in eq.~\eqref{wgkappasumansatz} is nonzero only if the $s_{I_{l}}$ in eq.~\eqref{wgkappasumansatz} is compatible with $\alpha$. For these nonzero differentials, the $W(g_{l}'|\kappa_{l}')$ in eq.~\eqref{wgkappasumansatz} corresponds to the differential of a reduced graph where the two states in $I_{l}$ are combined into a single state. Since $W(g_{l}'|\kappa_{l}')$ multiplies $ds_{I_{l}}$, we can effectively study $W(g_{l}'|\kappa_{l}')$ on the facet $s_{I_{l}}=0$,
\begin{align}
     W(g'_{l}|\kappa'_{l})\big|_{H_n^T[\alpha]}= W(g'_{l}|\kappa'_{l})\big|_{H_{n-1}^{T}[\alpha'_l]}+ds_{I_l}\wedge\omega\,.
\end{align}
The discrepancy is always of the form $ds_{I_l}\wedge \omega$ so it drops when multiplied with $ds_{I_l}$. We denote the $s_{I_l}=0$ facet of the subspace $H_n^T[\alpha]$ as $H_{n-1}^T[\alpha'_l]$, and it is given by the constraints $\pmb{H}_{n-1}^{T}[\alpha_l']$. Now we would like to recursively apply Lemma~\ref{lemma} but with $W(g|\kappa)$ replaced by $W(g_{l}'|\kappa_{l}')$ and $H_n^T[\alpha]$ replaced by $H_{n-1}^T[\alpha'_l]$. Therefore, for those $W(g|\kappa)$ incompatible with $\alpha$, at certain stage of this recursive process, we can factor out a one-form that belongs to eq.~\eqref{equ:povesconsarb} and at the same time is incompatible to $\alpha$ such that the entire differential $W(g|\kappa)$ vanishes.\footnote{The constraints on $s_{\mathsf{I},\mathsf{I}'}$ and $s_{\mathsf{I},r}$ in eq.~\eqref{setrestriequss} with $\mathsf{I}$ and/or $\mathsf{I}'$ being multi-particle blocks will appear as $s_{a,b}$ and $s_{a,r}$ on corresponding factorization channels.} This would prove the pullback conjecture~\eqref{conjectdiffofr}.

\indent However, as discussed in section~\ref{sec:extendeequiclass}, the facet constraints $\pmb{H}_{n-1}^{T}[\alpha_{l}']$ in general are not the same as those given in section~\ref{sec:invsoftconstruction} and eq.~\eqref{setrestriequss}. Fortunately, for the inductive proof to hold, we only need to show that the constraints imposed on the variables in eq.~\eqref{equ:povesconsarb} are not deformed under taking sequences of two-particle factorization of the form~\eqref{equ:povesconsarb}. Appendix~\ref{sec:factorizationadjacepart} shows that under such sequences of factorizations, the only possible type of deformation is

\begin{equation}
c_{i,\mathsf{B}}+Ks_{\mathsf{B}}=-s_{i,\mathsf{B}} \ .
\end{equation}

\noindent where $i$ is a single particle state, $\mathsf{B}$ is a block and $K$ is some non-negative integer constant. The possible constraints on $s_{I}$ in eq.~\eqref{equ:povesconsarb} are therefore unchanged since $\mathsf{B}$ is always a single particle state,

\begin{equation}
c_{i,j}=-s_{i,j}, \quad \textrm{ if  }\mathsf{B}\textrm{ is a single particle state }j \ ,
\end{equation}

\noindent so that our inductive proof holds. 

\section{Conclusion}
\label{sec:conclusion}

In this paper, we continue the study of positive geometries associated with bi-color theory, focusing in particular on the facet geometry of  open associahedra. We found many new features in the facet geometries, including a broader equivalence class of associahedra and a \textit{fiber} geometries. In addition, we found a highly efficient recursion for open associahedra which generalizes the recursions of Refs.~\cite{Salvatori:2019phs,Yang:2019esm} to open polytopes. We concluded with a proof of the pullback conjecture of Ref.~\cite{Herderschee:2019wtl}, which required a detailed analysis of the facet geometries. This work represents an important step in the study of positive geometry, with implications beyond the original bi-color theory. Here we outline a number of open questions worth pursuing:

\paragraph{Combinatorial approach to recursion:} Our recursion for bi-color theory follows the algebriac approach of Ref.~\cite{Yang:2019esm}. However, there is also the more combinatorial approach in Ref.~\cite{Salvatori:2019phs} that applies to simple polytopes. 
We suspect that it is possible to generalize the combinatorial approach for bi-color amplitudes using a limiting procedure. For example, we could approximate the open associahedra geometry as some closed associahedra with facets at infinity. If one can show that sending the facets to infinity does not break the recursion, then such a formula should hold for open associahedra.
\paragraph{Stringy canonical form and cluster algebras:} 
As mentioned in Ref.~\cite{Herderschee:2019wtl}, finding a stringy canonical form~\cite{Arkani-Hamed:2019mrd} for open associahedra would be very interesting, as it would provide a probe into stringy corrections for theories with non-trivial flavor structure. However, an immediate difficulty is the existence of many possible stringy deformations that yield in the same canonical form in the $\alpha'\rightarrow 0$ limit. In the case of stringy canonical forms corresponding to bi-adjoint/$Z$-theory, this ambiguity is removed by requiring the canonical form also have cluster algebra structure, where each facet of the closed associahedra is associated with g-vectors in a $A_{n}$ cluster algebra \cite{Arkani-Hamed:2020tuz,Arkani-Hamed:2019plo,Arkani-Hamed:2019mrd,baziermatte2018abhy,2019arXiv190606861P}. The cluster structure of the geometry picks out a unique stringy deformation of the canonical form. It would be interesting to see if this cluster algebra structure could be generalized to open associahedra, allowing one to pick out a stringy deformation (or class of stringy deformations) of flavored amplitudes. 
\paragraph{Positive geometry of one loop bi-color amplitudes:} 
In this paper, we only studied the positive geometries associated with tree-level amplitudes in bi-color theory. However, the positive geometries associated with 1-loop integrands of bi-adjoint amplitudes have been  identified \cite{Salvatori:2018aha, Arkani-Hamed:2019vag}. It would be interesting to see if the open associahedra story can be similarly generalized at 1-loop. There are two main approaches. The first is attempting a generalization of the inverse soft construction of Ref.~\cite{Herderschee:2019wtl} to one-loop amplitudes. Alternatively, one could attempt to develop a procedure for constructing closed cluster polytopes whose forbidden facets can be taken to infinity while retaining the correct edge-vertex structure.
\paragraph{Generalizing the inverse soft construction:} 
When finding the positive geometry associated with amplitudes, finding the associated subspace, $H_{n}$, of kinematic space is often the most difficult task as it encodes the combinatorial structure of the amplitude. The fact that such a subspace exists is one of the most non-trivial assumption of this program and finding more systematic methods to finding such subspaces remains a major open problem for the field. In constructing bi-color amplitudes, the inverse soft construction was crucial for finding the appropriate subspace $H_{n}[\alpha]$. It would be interesting to see if the inverse soft construction could be generalized to find the kinematic subspaces associated with amplitudes in other theories. For example, the positive geometry of the 1-loop momentum amplituhedron currently remains allusive \cite{Damgaard:2019ztj,Ferro:2020lgp}, although the canonical form gives hints to an inverse soft construction \cite{He:2018okq}. Alternatively, it would be interesting if the inverse soft construction could be used to identify the kinematic subspace associated with \textit{massive} amplitudes on the Coulomb branch of $\mathcal{N}=4$ SYM \cite{Craig:2011ws,Herderschee:2019dmc}. Perhaps the inverse soft construction is the missing tool necessary to find the kinematic subspaces associated with these amplitudes.

\acknowledgments
We would like to thank Song He, Giulio Salvatori and Yong Zhang for inspiring discussions and comments. AH would like to especially thank Henriette for continued support and comments. FT is supported by the Knut and Alice Wallenberg Foundation under grant
KAW 2013.0235, and the Ragnar S\"{o}derberg Foundation (Swedish Foundations’ Starting
Grant). AH is supported by the Leinweber Center for Theoretical Physics Graduate Fellowship. 

\appendix

\section{Canonical forms of polytopes}
\label{sec:reviewcanonicaofrm}

In this appendix, we review some useful computational techniques for the canonical forms of polytopes \cite{Arkani-Hamed:2017tmz}. A concise review of positive geometries is given in appendix A of Ref.~\cite{Arkani-Hamed:2017mur}. \\
\indent Loosely, a canonical form is a logarithmic volume form with simple poles that correspond to boundaries of the positive geometry. The residue of a canonical form on a pole is the canonical form of the boundary geometry associated with the pole. In the case of polytopes, boundary geometries are simply the facet geometries. Using this residue condition, one can recursively construct the canonical form a generic positive geometry using the canonical forms of its boundary geometries. Since the canonical form is a differential form, it is often easier to work with a scalar function called the \textit{canonical rational function}:

\begin{equation}
\underline{\Omega}(\mathcal{A})=\Omega(\mathcal{A})/\langle X d^{m}X\rangle     
\end{equation}

\noindent where $\langle X d^{m}X\rangle$ is the homogeneous differential on $\mathbb{P}^{N}$. For reasons we discuss below, the ambient space of positive geometries is almost always $\mathbb{P}^{N}$ instead of $\mathbb{R}^{N}$. \\
\indent Importantly, the canonical form of a positive geometry is \textit{unique} only if the ambient space obeys some minor restrictions. To see this, consider the canonical form of a bounded line in $[a,b] \subset \mathbb{R}$:

\begin{equation}
\Omega=\left ( \frac{1}{x-a}-\frac{1}{x-b}\right )dx \ .
\label{originaldiff}
\end{equation}

\noindent Since we assume that a canonical form is defined by its residue structure, an equally valid canonical form is 

\begin{equation}
\Omega=\left ( \frac{1}{x-a}-\frac{1}{x-b}+1/A\right )dx
\label{badcanoni}
\end{equation}

\noindent where $1/A$ is some constant. Since eq. (\ref{badcanoni}) has the same residue structure as eq. (\ref{originaldiff}), both eqs. (\ref{originaldiff}) and (\ref{badcanoni}) correspond to same bounded line. Therefore, there is naively an ambiguity in how we define the canonical form of a bounded line in $\mathbb{R}$. To remove this ambiguity, we impose that we are actually working in $\mathbb{P}$ with homogeneous coordinates, $Y=(1,x)$. If we embed eqs. (\ref{badcanoni}) and (\ref{originaldiff}) into $\mathbb{P}$, we find the canonical forms are now

\begin{equation}
\begin{split}
\left ( \frac{1}{x-a}-\frac{1}{x-b}\right )dx &\rightarrow \left ( \frac{1}{( Y \cdot W_{a} )}+\frac{1}{( Y \cdot W_{b} )} \right )\langle Y dY\rangle \\
\left ( \frac{1}{x-a}-\frac{1}{x-b}+1/A\right )dx &\rightarrow \left ( \frac{1}{( Y \cdot W_{a} )}+\frac{1}{( Y \cdot W_{b} )}+\frac{1}{( Y \cdot W_{A} )} \right )\langle Y dY\rangle \\
\end{split}
\label{badcanonieded}
\end{equation}

\noindent The last term in eq. (\ref{badcanoni}) now has non-zero residue and must correspond to a different geometry than eq. (\ref{originaldiff}). Therefore, embedding the positive geometry into $\mathbb{P}$ was enough to remove the ambiguity in the canonical form of a bounded line. In general, the possible ambiguities in $\Omega$ correspond to holomorphic top forms on the ambient space, such as the $1/A$ term in eq. (\ref{badcanoni}). To remove these ambiguities, we simply require the ambient space of the positive geometry to have no non-zero holomorphic top forms, such as $\mathbb{P}^{N}$. \\
\indent There are a number of operations we can perform on positive geometries. We can consider the addition of two positive geometries that live in the same space. Suppose we have two positive geometries in the same ambient space: $\mathcal{A}_{a}=(X,X_{\geq 0}^{a})$ and $\mathcal{A}_{b}=(X,X_{\geq 0}^{b})$, and define the sum of the geometries as $\mathcal{A}_{a}+\mathcal{A}_{b}=(X,X_{\geq}^{1}+X_{\geq}^{2})$. The canonical form of $\mathcal{A}_{a}+\mathcal{A}_{b}$ is

\begin{equation}
\Omega(\mathcal{A}_{a}+\mathcal{A}_{b})=\Omega(\mathcal{A}_{a})+\Omega(\mathcal{A}_{b}) \ .
\label{sumgeometries}
\end{equation}

\noindent We can also consider product geometries. Suppose we have two positive geometries in two different space, $\mathcal{A}_{X}=(X,X_{\geq 0})$ and $\mathcal{A}_{Y}=(Y,Y_{\geq 0})$. We then define the product geometry as $\mathcal{A}_{X\times Y}=(X\times Y, X_{\geq 0 }\times Y_{\geq 0})$. The canonical form of $\mathcal{A}_{X\times Y}$ is simply 

\begin{equation}
\Omega(\mathcal{A}_{X\times Y})=\Omega(\mathcal{A}_{X})\times \Omega(\mathcal{A}_{Y})\,.
\end{equation}

\indent Positive geometry is thus the natural mathematical structure that encodes locality and unitarity of scattering amplitudes. One interesting question is whether a certain physical scattering amplitude can be uplifted into the canonical form of some positive geometry. In the following, we will review a few approaches to compute the canonical form.




\subsection{Canonical forms and adjoints}\label{sec:adj}


Suppose we have a polytope in projective space $\mathbb{P}^{m}$ that is carved out by $M$ facets, $X\cdot W_{i}>0$. Its canonical form take the form\footnote{A very concise introduction of this representation of canonical forms can be found at \url{http://www.math.lsa.umich.edu/~tfylam/posgeom/gaetz_notes.pdf}, written by Christian Gaetz.}
\begin{equation}
\Omega[\mathcal{A}]=\frac{\text{adj}_{\mathcal{A}^*}(X)}{\prod_{i=1}^{M} (X\cdot W_{i})}\langle Xd^mX\rangle\,,
\label{ansatssz}
\end{equation} 
where $\text{adj}_{\mathcal{A}^*}(X)$ is the \emph{adjoint} of the dual polytope $\mathcal{A}^*$. It is the unique degree $M-m-1$ homogeneous polynomial that vanishes at all the spurious poles at the intersection of non-adjacent facets~\cite{kohn2019projective}. An explicit form of $\text{adj}_{\mathcal{A}^*}(X)$ will be given in eq.~\eqref{eq:adjpoly}.


\indent For instance, consider two facets, $W_{1}$ and $W_{2}$, whose associated boundaries, $\mathcal{B}_{1}$ and $\mathcal{B}_{2}$, do not touch inside $\mathcal{A}$. According to eq. (\ref{ansatssz}), there is naively a two dimensional singularity at $(X\cdot W_{1})=0$ and  $(X\cdot W_{2})=0$. However, since $\mathcal{B}_{1}$ and $\mathcal{B}_{2}$ do not intersect, such a singularity is unphysical. Therefore, $\text{adj}_{\mathcal{A}^*}(X)$ must vanish whenever both $(X\cdot W_{1})=0$ and  $(X\cdot W_{2})=0$ to cancel this two dimensional singularity. All $(W_{j},W_{i})$ pairs which do not touch in $\mathcal{A}$ impose such a restriction on $\text{adj}_{\mathcal{A}^*}(X)$, which can be used to restrict the form of $\text{adj}_{\mathcal{A}^*}(X)$. For example, such restrictions are enough to derive the unique canonical form associated with generic cyclic polytopes, as derived in~\cite{Arkani-Hamed:2014dca, Arkani-Hamed:2017tmz}.

\subsection{Disjoint triangulations}

An alternate method to calculating the canonial form associated with a polytope is to break apart $\mathcal{A}$ into a set of triangulations $\tau(\mathcal{A})$, such that 
\begin{equation}
\mathcal{A}=\bigcup_{\Delta\in\tau(\mathcal{A})} \Delta \,,
\end{equation}
where $\Delta$ is a simplex. From eq. (\ref{sumgeometries}), the canonical form is then the sum of those for the simplices,
\begin{equation}
\Omega[\mathcal{A}]=\sum_{\Delta\in\tau(\mathcal{A})} \Omega[\Delta] \ .
\label{tirangultio}
\end{equation}
The difficulty in applying eq. (\ref{tirangultio}) is finding a particular triangulation such that the $\Omega(\Delta)$ are easy to calculate. The recursion procedures in Section \ref{recursionsection} a partial triangulation of the generalized associahedra geometry, where the subdivision of $\tau(\mathcal{A})$ is not necessarily into simplices.\footnote{The definition of triangulation in Ref.~\cite{Arkani-Hamed:2017tmz,Arkani-Hamed:2017mur} does not make a distinction between triangulations and partial triangulations as the distinction breaks down for non-polytopal geometries.}

\subsection{Dual polytopes}
\label{sec:dualpolytope}

The final approach we will consider for calculating the canonical form is using the \textit{dual} polytope. Taking the dual of a polytope loosely corresponds to switching the role of facets and vertices. We generically find that the canonical form a polytope corresponds to the actual volume of the dual polytope (not just a volume form). \\
\indent An arbitrary polytope in $\mathbb{P}^{m}$ can be defined by its facets, $W_{i}$:

\begin{equation}
\mathcal{A}=\{\ X\ |\ \forall i: \ (X\cdot W_{i})>0 \}    
\end{equation}

\noindent or its vertices, $V_{i}$:

\begin{equation}
\mathcal{A}=\{\ \sum_{i} C_{i}V_{i}\ |\ \forall i: \ C_{i}>0 \} \ . \end{equation}

\noindent We can define the dual polytope by simply switching the role of $W_{i}$ and $V_{i}$. Facets of $\mathcal{A}$ are mapped to vertices of $\mathcal{A}^{\star}$ and vice-versa. Rather, we can define the dual polytope, $\mathcal{A}^{\star}$, using the facets of $\mathcal{A}$,

\begin{equation}
\mathcal{A}^{\star}=\{\ \sum_{i} C_{i}W_{i}\ |\ \forall i: \ C_{i}>0 \} \ , \end{equation}

\noindent or the vertices of $\mathcal{A}$,

\begin{equation}
\mathcal{A}^{\star}=\{\ X \ |\ \forall i: \ (X\cdot V_{i})>0 \}  \ .  
\end{equation}

\noindent $\mathcal{A}^{\star}$ is extremely useful from a computational standpoint as we can identify the canonical form of $\mathcal{A}$ with the actual volume of $\mathcal{A}^{\star}$:

\begin{equation}
\underline{\Omega}[\mathcal{A}]=\textrm{Vol}(\mathcal{A}^{\star}) \ .
\end{equation}

\noindent The above identity is especially useful if $\mathcal{A}$ corresponds to a simple polytope, a polytope where each vertex intersects the minimal number of facets. Such vertices are mapped to facets in the dual polytope, $\mathcal{A}^{\star}$. We now define a reference point in the dual polytope, $W^{\star}\in \mathcal{A}^{\star}$. For each facet of the dual polytope $\mathcal{A}^{\star}$, we define a simplex bounded by the facet and the reference point, $W^{\star}$. The collection of all such simplexes defines a disjoint triangulation of the dual polytope. Furthermore, the volume of each simplex is very compact:

\begin{equation}
\frac{\langle W^{*}\prod_{i\in F(v)}W_i\rangle}{(X\cdot W^{*})\prod_{i\in F(v)}(X\cdot W_i)}   \ , 
\end{equation}

\noindent where $W_{i}$ correspond to the vertices that bound the corresponding facet of the dual polytope. Summing over all such simplices in the dual polytope, which corresponds to a vertex expansion of the original polytope, we find the canonical form is 

\begin{equation}
\Omega[\mathcal{A}] =\sum_{v\in\textrm{vertices}}\frac{\langle W^{*}\prod_{i\in F(v)}W_i\rangle}{(X\cdot W^{*})\prod_{i\in F(v)}(X\cdot W_i)}\langle Xd^m X\rangle  \ , 
\label{vertextriangulation}
\end{equation}

\noindent where $\Omega[\mathcal{A}]$ is independent of $W^{\star}$. Importantly, eq. (\ref{vertextriangulation}) does not correspond to a disjoint triangulation of $\mathcal{A}$. 

From eq.~\eqref{vertextriangulation}, one can take the common denominator and reach the form of eq.~\eqref{ansatssz} for the canonical form. This provides an explicit formula for the adjoint~\cite{Warren1996},
\begin{align}\label{eq:adjpoly}
    \text{adj}_{\mathcal{A}^*}(X)&=\sum_{\Delta^*\in\tau(\mathcal{A}^*)}\textrm{vol}(\Delta^*)\prod_{i\in V(\mathcal{A}^*)\backslash V(\Delta^*)}(X\cdot W_i) \nonumber\\
    &=\sum_{\Delta^*\in\tau(\mathcal{A}^*)}\Big\langle{ \prod_{j\in V(\Delta^*)}}W_j\Big\rangle \prod_{i\in V(\mathcal{A}^*)\backslash V(\Delta^*)}(X\cdot W_i)\,,
\end{align}
where the summation is over all the simplices $\Delta^*$ of a particular triangulation $\tau(\mathcal{A}^*)$ of the dual polytope, and the second factor is a product of vertices in $\mathcal{A}^*$ that are not in the simplex $\Delta^*$. The final result is independent of the triangulations~\cite{Warren1996}. This formula also provides a direct counting of the degree of the adjoint. For a polytope in $\mathbb{P}^m$ that consists of $M$ facets, we can choose a triangulation in which the simplices are all made from vertices of $\mathcal{A}^*$ such that for each term in the summation there are exactly $M-m-1$ powers of $X$. Therefore, we have
\begin{align}
    \text{deg}\,[\text{adj}_{\mathcal{A}^*}(X)]=M-m-1\,.
\end{align}

\section{Semi-direct products}
\label{sec:semidirpro}

In this appendix, we further study the relation between semi-direct product and fiber geometries. We start with a simple example. Consider a generic quadrilateral bounded by
\begin{align}
    X_1=x=0,\quad X_2=ax+y-1=0, \quad X_3=x+by-1=0, \quad X_4=y=0\,,
\end{align}
where $a$ and $b$ are arbitrary parameters as long as the geometry is closed and non-singular. We note that arbitrary quadrilateral can be brought into this form by a general coordinate transformation. The canonical form is given by
\begin{align}
    \Omega[\mathcal{A}]=\frac{X_5}{X_1X_2X_3X_4}dx\wedge dy\,,
\end{align}
where $X_5=1-ax-by$ cancels the spurious singularity introduced by the denominator. We would like to view the quadrilateral as a semi-direct product, $\mathcal{A}=\mathcal{A}_L\bowtie\mathcal{A}_R$, where $\mathcal{A}_L$ and $\mathcal{A}_R$ are two line segments. Their canonical forms are
\begin{align}
    \Omega[\mathcal{A}_L]=\left(\frac{1}{X_1}-\frac{1}{X_3}\right)dx\,,\qquad \Omega[\mathcal{A}_R]=\left(\frac{1}{X_2}-\frac{1}{X_4}\right)dy\,.
\end{align}
The canonical form of the quadrilateral factorizes as $\Omega(\mathcal{A})=\Omega(\mathcal{A}_L)\wedge\Omega(\mathcal{A}_R)$
if and only if $ab=0$. This makes either $X_1 \parallel X_3$ or $X_2 \parallel X_4$. Namely, the quadrilateral is actually a trapezoid, which is a fiber-product geometry. In other words, we have shown that all the semi-direct product quadrilaterals are fiber-product geometries.

Next, we consider the semi-direct product between two generic closed simple polytopes. It is also necessary to assume that none of them are not direct products of lower dimensional polytopes. Suppose we have two polytopes $\mathcal{A}_{L/R}$ in the projective space $\mathbb{P}^m$ and $\mathbb{P}^n$ that are carved out by $M$ and $N$ facets respectively, their canonical forms are given by
\begin{align}\label{eq:canform}
    \Omega[\mathcal{A}_L(X)]=\frac{\text{adj}_{\mathcal{A}_L^*}(X)}{\prod_{i=1}^M(X\cdot W_i^L)}\langle X d^m X\rangle\,,& &\Omega[\mathcal{A}_R(Y)]=\frac{\text{adj}_{\mathcal{A}_R^*}(Y)}{\prod_{i=1}^N(Y\cdot W_i^R)}\langle Y d^n Y\rangle
\end{align}
where $X=(1,x_1,x_2,\ldots,x_m)$ and $Y=(1,y_1,y_2,\ldots,y_n)$ are the homogeneous coordinates. 
The facets $W_i^{L/R}$ are given by vectors in the dual space, see eq.~\eqref{eq:ALAR}.
The degrees of $\text{adj}_{\mathcal{A}_{L/R}^*}$ are $M{-}m{-}1$ and $N{-}n{-}1$ respectively, as given in appendix~\ref{sec:adj}.


Now we consider the linear deformation of the facets. The polytope $\mathcal{A}_L[X;Y]$ is obtained from $\mathcal{A}_L$ by linearly shifting the $C^L$ in the facet vectors by the coordinates of $\mathcal{A}_R$, and vice versa, see eq.~\eqref{eq:Cshift}.
The canonical rational functions become
\begin{align}
    \underline{\Omega}^L(X;Y)=\frac{Q(X;Y)}{\prod_{i=1}^{M}(X\cdot\widetilde{W}_i^L)}\,,& &\underline{\Omega}^R(Y;X)=\frac{P(Y;X)}{\prod_{i=1}^{N}(Y\cdot\widetilde{W}_i^R)}\,,
\end{align}
where the polynomials $Q(X;Y)$ and $P(Y;X)$ are obtained from $\text{adj}_{\mathcal{A}_L^*}(X)$ and $\text{adj}_{\mathcal{A}_R^*}(Y)$ through the same shift of $C^{L/R}$,
\begin{align}
    Q(X;Y)=\text{adj}_{\mathcal{A}_L^*}(X)\Big|_{C_i^L\rightarrow C_i^L+\sum_{j=1}^{n}\alpha_jy_j}\,, & & P(Y;X)=\text{adj}_{\mathcal{A}_R^*}(Y)\Big|_{C_i^R\rightarrow C_i^R+\sum_{j=1}^{m}\beta_jx_j}\,.
\end{align}

We can embed the two deformed polytopes into $\mathbb{P}^{m+n}$. We choose the homogeneous coordinate of $\mathbb{P}^{m+n}$ as $Z=(1,x_1,x_2,\ldots,x_m,y_1,y_2,\ldots,y_n)$, such that  
\begin{align}
    \langle Xd^mX\rangle\wedge\langle Yd^nY\rangle=dx_1\wedge\ldots\wedge dx_m\wedge dy_1\wedge\ldots\wedge dy_n=\langle Zd^{m+n}Z\rangle\,.
\end{align}
The facets are embedded through
\begin{align}
    & \widetilde{W}_i^L\rightarrow\mathcal{W}_i=(C_i^L,w_1^i,w_2^i,\ldots,w_m^i,\alpha_1,\alpha_2,\ldots,\alpha_n)\,,\nonumber\\
    & \widetilde{W}_i^R\rightarrow\mathcal{W}_{M+i}=(C_i^R,\beta_1,\beta_2,\ldots,\beta_m,u_1^i,u_2^i,\ldots,u_n^i)\,,
\end{align}
such that we have $X\cdot\widetilde{W}_{i}^L=Z\cdot\mathcal{W}_i$ and $Y\cdot\widetilde{W}_{i}^R=Z\cdot\mathcal{W}_{M+i}$. The geometry $\mathcal{A}_L\bowtie\mathcal{A}_R$ is thus defined as the convex hull of the $M+N$ facets $\mathcal{W}_i$ in $\mathbb{P}^{m+n}$. Meanwhile, the adjoints after the shift become
\begin{align}
    Q(X;Y)\rightarrow q(Z)\,,& & P(Y;X)\rightarrow p(Z)\,.
\end{align}
According to eq.~\eqref{eq:adjpoly}, the degrees of $q(Z)$ and $p(Z)$ are increased by one since the volume of dual simplices becomes a linear function in $Z$ after \emph{nonzero shifts}. 
Now we need to investigate whether the product $q(Z)p(Z)$ gives the adjoint of $(\mathcal{A}_L\bowtie\mathcal{A}_R)^*$, the numerator of the canonical rational function $\underline{\Omega}(\mathcal{A}_L\bowtie\mathcal{A}_R)$. By construction the polynomial $q(Z)p(Z)$ vanishes at all the spurious vertices of $\mathcal{A}_L\bowtie\mathcal{A}_R$, such that due to the uniqueness of the adjoint, we only need to check its degree,
\begin{itemize}
    \item If all $\alpha_i=\beta_i=0$, then we fall back onto the direct product geometry. For this case,
    \begin{align}
        q(Z)=\textrm{adj}_{\mathcal{A}_L^*}(X)\Big|_{X\cdot W_i^L\rightarrow Z\cdot\mathcal{W}_i}\,,& & p(Z)=\textrm{adj}_{\mathcal{A}_R^*}(Y)\Big|_{Y\cdot W_i^R\rightarrow Z\cdot\mathcal{W}_{M+i}}\,,
    \end{align}
    and thus the degrees are unchanged. The product $q(Z)p(Z)$ thus gives the correct adjoint after including a facet at infinity,
    \begin{align}
        \underline{\Omega}(\mathcal{A}_L\times\mathcal{A}_R)=(Z\cdot\mathcal{W}_{\infty})\,\underline{\Omega}^L(X;Y)\,\underline{\Omega}^R(Y;X)=\frac{(Z\cdot \mathcal{W}_{\infty})\,q(Z)\,p(Z)}{\prod_{i=1}^{M+N}(Z\cdot\mathcal{W}_i)}\,,
    \end{align}
    where $\mathcal{W}_{\infty}=(1,0,\ldots,0)$.
    \item If all $\beta_i=0$ but some of the $\alpha_i$'s are nonzero, we fall back onto the fiber-product geometry. For this case, we have
    \begin{align}
        \text{deg}[q(Z)]=M-m\,, & &\text{deg}[p(Z)]=N-n-1\,.
    \end{align}
    The product $q(Z)p(Z)$ thus has exactly the correct degree $(M+N)-(m+n)-1$ as the adjoint of $\mathcal{A}_L\ltimes\mathcal{A}_R$. Thus we have
    \begin{align}
    \underline{\Omega}(\mathcal{A}_L\ltimes\mathcal{A}_R)=\underline{\Omega}^L(X;Y)\,\underline{\Omega}^R(Y;X)=\frac{q(Z)\,p(Z)}{\prod_{i=1}^{M+N}(Z\cdot\mathcal{W}_i)}\,.
    \end{align}
    The same analysis applies for $\alpha_i=0$ but some $\beta_i$'s being nonzero.
    \item Finally, if some of the $\alpha_i$ and $\beta_i$ are nonzero at the same time, then the degree of $q(Z)p(Z)$ will be $(M+N)-(m+n)$, which is too high for an adjoint. Such a numerator will at least result in additional poles at infinity.
\end{itemize}
Thus we have shown that the canonical form of a closed semi-direct product geometry factorizes if and only if it is a fiber-product geometry. The same conclusion applies to those unbounded open semi-direct product geometries that are obtained from closed geometries by sending certain facets to infinity.

\section{Resolving the constraints for amplitudes with adjacent $\mathfrak{f}$-$\mathfrak{af}$ pairs}
\label{sec:relationplanarvariabless}

For closed associahedra, there is a closed form solution to any $X_{i,j}$ in terms of variables $X_{2,i}$ and the $c$-constants, given by eq.~\eqref{closedformequt}. The generalization of eq.~\eqref{closedformequt} to open associahedra is difficult due to the complexity of $\pmb{H}_{n}[\alpha]$ for generic $\alpha$. \\
\indent We will consider the case where all $\mathfrak{f}$-$\mathfrak{af}$ pairs are adjacent. Our preferred basis is 

\begin{equation}
\begin{split}
Y= \{1\,, X_{1,3}\,, X_{3,5}\,, X_{5,7}\,,\ldots, X_{n-1,1}\,, X_{3,7}\,, X_{3,9}\, ,\ldots, X_{3,n-1} \} \ .
\end{split}
\label{bigbasisalladjacent}
\end{equation}

\noindent The restriction equations are given in eq.~\eqref{eq:adjacentGeneric}, but reproduced here for convenience,

\begin{align}
    \pmb{H}_n[(1,2),(3,4),\ldots,(n-1,n)]=\left\{
    \begin{array}{c}
    s_{j,2i-1,2i}=-c_{j,2i-1,2i} \\
    s_{2i-1,2i,2k}=-c_{2i-1,2i,2k}
    \end{array} \middle|
    \begin{array}{c}
    2\leqslant i\leqslant n/2 \\
    2\leqslant j\leqslant 2i-3 \\
    i+1\leqslant k\leqslant n/2
    \end{array}\right\}\,.
\end{align}

\noindent In the basis given by eq. (\ref{bigbasisalladjacent}), we find that any $X_{i,j}$ can be written in a closed form, 

\begin{align}\label{eq:adjacentConstraints}
    X_{2i-1,2m+1} &=-X_{3,2i+1}+X_{3,2m+1}+X_{2i-1,2i+1}\nonumber\\
    &\quad +(2i-4)\sum_{k=i+1}^{m}X_{2k-1,2k+1}+\sum_{j=3}^{2i-2}\sum_{k=i+1}^{m}c_{j,2k-1,2k}\,, \nonumber\\
    X_{2i-1,2m} &= -X_{3,2i+1}+X_{3,2m+1}+2X_{2i-1,2i+1}+(2i-5)X_{2m-1,2m+1}\nonumber\\
    &\quad +(2i-3)\sum_{k=i+1}^{m-1}X_{2k-1,2k+1}+\sum_{j=3}^{2i-2}\sum_{k=i+1}^{m}c_{j,2k-1,2k}+\sum_{k=i}^{m-1}c_{2k-1,2k,2m}\,, \nonumber\\
    X_{2i,2m+1}&=-X_{3,2i+1}+X_{3,2m+1}+(2i-3)\sum_{k=i+1}^{m}X_{2k-1,2k+1}+\sum_{j=3}^{2i-1}\sum_{k=i+1}^{m}c_{j,2k-1,2k}\,, \nonumber\\
    X_{2,2m+1} &= -X_{1,3}+X_{3,2m+1}+\sum_{k=2m+1}^{n/2}X_{2k-1,2k+1}+\sum_{k=m+1}^{n/2}c_{2,2k-1,2k}\,,
\end{align}

\noindent where $i<m$ and $n+1\equiv 1$. Eq.~\eqref{eq:adjacentConstraints} can be checked by making the substitutions,

\begin{equation}
\begin{split}
X_{i,j}&\rightarrow \Big( \sum_{a=i}^{j-1}\ p_{a} \Big)^{2} \ ,  \\
c_{I}&\rightarrow -\Big(\sum_{a\in I}\ p_{a} \Big)^{2} \ ,
\end{split}
\end{equation}

\noindent and showing that each equation reduces to some form of momentum conservation. 

\section{Specific factorization channels}
\label{sec:factorizationchannel}

\noindent This section contains the main technical detail necessary for the proof of the pullback conjecture in section~\ref{sec:proofpullback}. We begin with a very quick study of the factorization channels of two adjacent states to build intuition. We then move onto a more detailed analysis of the factorization channel associated with an arbitrary fermionic block. 

\subsection{Factorization of two adjacent particles}
\label{sec:factorizationadjacepart}


The first factorization example we consider is of the form
\begin{equation}\label{eq:factorizationchanleresitrtwo}
    s_{i,i+1}=q^2\rightarrow 0\,.
\end{equation}
This channel is always allowed \emph{except for} the case that they are $\mathfrak{f}$ and $\mathfrak{af}$ states with different flavors. On the factorization channel, the constraints that involve both particles are trivially inherited as
\begin{align}
    s_{i,i+1,\ldots}=-c_{i,i+1,\ldots}\;\rightarrow\;s_{q,\ldots}=-c_{q,\ldots}\,,
\end{align}
where $c_{q,\ldots}=c_{i,i+1,\ldots}$. Next, we consider the constraints $s_{i,\mathsf{B}}=-c_{i,\mathsf{B}}$ and $s_{\mathsf{C},i+1}=-c_{\mathsf{C},i+1}$, where the block $\mathsf{B}$ and $\mathsf{C}$ are adjacent to $i+1$ and $i$ respectively.\footnote{We note that these constraints appear both in $\pmb{H}_n[\alpha]$ defined in section~\ref{sec:invsoftconstruction} and $\pmb{H}_n^A[\alpha]$ defined in section~\ref{sec:proofpullback}.} One can easily show that they are automatically satisfied on the factorization channel due to the strict positive-ness of the planar variables incompatible to the factorization channel.
Finally, we are left with the constraints that take the form
\begin{align}\label{sumc}
    s_{i,\mathsf{B}}=-c_{i,\mathsf{B}}\quad\text{ and }\quad s_{i+1,\mathsf{B}}=-c_{i+1,\mathsf{B}}\,.
\end{align}
Here, $\mathsf{B}$ is a block that is \emph{not} adjacent to either $i$ or $i+1$. It is not hard to see that the construction given in section~\ref{sec:invsoftconstruction} always produces them in pairs. We can take the sum of the constraints in eq.~\eqref{sumc}, which gives
\begin{align}\label{equ:deformationappendixone}
    s_{q,\mathsf{B}}=s_{i,\mathsf{B}}+s_{i+1,\mathsf{B}}-s_{\mathsf{B}}=-c_{i,\mathsf{B}}-c_{i+1,\mathsf{B}}-s_{\mathsf{B}}\;\rightarrow\;s_{q,\mathsf{B}}=-c_{q,\mathsf{B}}-s_{\mathsf{B}}\,,
\end{align}
where $c_{q,\mathsf{B}}=c_{i,\mathsf{B}}+c_{i+1,\mathsf{B}}$. If we apply the construction of section~\ref{sec:invsoftconstruction} to the sub-amplitude, we will get instead $s_{q,\mathsf{B}}=-c_{q,\mathsf{B}}$. Therefore, this is a deformed constraint that follows the pattern of eq.~\eqref{clsoedeforss}. The $s_{i,i+1}=0$ facet is thus an example of the geometries given by the deformed constraints as discussed in section~\ref{sec:extendeequiclass}. 
The same calculation shows that the deformation~\eqref{equ:deformationappendixone} is closed under taking subsequent adjacent-particle factorizations that probe lower dimensional boundaries.

\subsection{Factorization of an arbitrary fermionic block}

As our second example, we study the facet geometries associated with the factorization channel $s_{\mathsf{B}_{j}}=X_{l_j,l_{j+1}}=p^2\rightarrow 0$, where $\mathsf{B}_{j}$ is an arbitrary fermionic block. The color ordering factorizes into

\begin{align}
& \alpha_L=(\mathsf{B}_1,\ldots,\mathsf{B}_{j-1},p,\mathsf{B}_{j+1},\ldots,\mathsf{B}_{m}) \,,\quad  \alpha_R=(-p,\mathsf{B}_{j}) \ .
\end{align} 

\noindent We suppose that the block $\mathsf{B}_{j}$ has $s$ sub-blocks $\mathsf{B}_{j}=\big(l_j,\mathsf{B}_{j_1},\ldots,\mathsf{B}_{j_s},r_j\big)$. We also define for convenience the set $\pmb{\mathcal{C}}$ as the set of new constraints introduced by adding a new block~$\mathsf{B}_i$,
\begin{align}
    \pmb{H}[\alpha,\mathsf{B}_i]=\pmb{H}[\alpha]\cup\pmb{\mathcal{C}}[\alpha,\mathsf{B}_i]\,.
\end{align}
It is nothing but the union of the constraints $\pmb{C}_{1,2,3}$ and $\pmb{H}_{|\mathsf{B}_i|}$ defined in section~\ref{sec:invsoftconstruction}, which can be easily read off from eq.~\eqref{eq:Hngeneric}.
The constraints relevant to this factorization channel are contained in
\begin{subequations}\label{eq:allequationss}
\begin{align}\label{eq:parta}
& \pmb{\mathcal{C}}\big[(1,2),\ldots,\mathsf{B}_{j-1},\mathsf{B}_{j}\big]\,, \\ \label{eq:partb}
& \pmb{\mathcal{C}}\big[(1,2),\ldots,\mathsf{B}_{j-1},\mathsf{B}_{j},\mathsf{B}_{{j+1}}\big]\,, \\ \label{eq:partc}
& \pmb{\mathcal{C}}\big[(1,2),\ldots,\mathsf{B}_{j-1},\mathsf{B}_j,\mathsf{B}_{j+1},\ldots,\mathsf{B}_{i}\big]\,,\quad \text{ where }\quad j+2\leqslant i\leqslant m\,.
\end{align}
\end{subequations}
We now systematically consider how the constraints in eq.~\eqref{eq:allequationss} translate into $\pmb{H}^{L/R}$ for the facet geometry. Some of these constraints, or certain linear combinations of them, will become elements of $\pmb{H}^{L/R}$, while the rest will be automatically satisfied in the limit $X_{l_{j},l_{j+1}}\rightarrow 0$, so do not impose any constraints on the facet geometry. \\

\begin{itemize}
\item In eq.~\eqref{eq:parta}, the $\pmb{C}_1$ part translates into $\mathcal{A}_L$ under the replacement $s_{\mathsf{B}_{j},\ldots}\rightarrow s_{q,\ldots}$, while the $\pmb{C}_3$ and $\pmb{H}_{|\mathsf{B}_j|}$ part go into $\mathcal{A}_R$. On the other hand, the $\pmb{C}_2$ part are automatically satisfied on the facet.
\item Next, we consider the set in eq.~\eqref{eq:partb}. The $\pmb{C}_3$ and $\pmb{H}_{|\mathsf{B}_{j+1}|}$ part go trivially into $\mathcal{A}_L$, and so does the $\pmb{C}_2$ part after the replacement $s_{\mathsf{B}_{j},\ldots}\rightarrow s_{q,\ldots}$. The $\pmb{C}_1$ part is automatically satisfied on the facet.
\item For the set in eq.~\eqref{eq:partc}, the $\pmb{C}_2$ and $\pmb{C}_3$ part are exactly as before: they translate into $\mathcal{A}_L$ after the replacement $s_{\mathsf{B}_{j},\ldots}\rightarrow s_{q,\ldots}$. 
\end{itemize}
The $\pmb{C}_1$ part of eq.~\eqref{eq:partc} needs further analysis. In particular, the relevant piece is
\begin{align}
\big\{s_{\mathsf{B}_{i},l_j},s_{\mathsf{B}_{i},r_j}\big\}\bigcup_{k=1}^{s}\big\{s_{\mathsf{B}_{i},\mathsf{B}_{j_k}}\big\}\,,\quad (j+2\leqslant i\leqslant m)\,.
\end{align}
The constraints $s_{\mathsf{B}_{i},l_j}$ and $s_{\mathsf{B}_{i},\mathsf{B}_{j_k}}$ can be written as
\begin{align}
X_{l_{i+1},l_{j_1}}&=c_{\mathsf{B}_{i},l_j}+X_{l_i,l_{i+1}}+X_{l_{i+1},l_j}+X_{l_i,l_{j_1}}-X_{l_i,l_j}\nonumber \ , \\
X_{l_{i+1},l_{j_{k+1}}}&=c_{\mathsf{B}_{i},\mathsf{B}_{j_k}}+X_{l_i,l_{i+1}}+X_{l_{j_k},l_{j_{k+1}}}+X_{l_i,l_{j_{k+1}}}+X_{l_{i+1},l_{j_k}}-X_{l_i,l_{j_k}}\,.
\end{align}
We can combine them into 
\begin{align}\label{eq:for3}
X_{l_{i+1},l_{j_k+1}}+X_{l_{j},l_{j+1}}&=X_{l_{j+1},l_{j_{k+1}}}+X_{l_{i+1},l_j}+\sum_{p=j+1}^{i}\sum_{\ell=1}^{k}\left(c_{\mathsf{B}_{p},\mathsf{B}_{j_\ell}}+X_{l_p,l_{p+1}}+X_{l_{j_\ell},l_{j_{\ell+1}}}\right)\nonumber\\
&\quad+\sum_{p=j+1}^{i}\left(c_{\mathsf{B}_{l_pr_p},l_j}+X_{l_p,l_{p+1}}\right)\,,
\end{align}
where $j+2\leqslant i\leqslant m$, $0\leqslant k\leqslant s$ and $l_{j_{s+1}}\equiv r_j$. Next, we use the seven-term identity to expand $s_{\mathsf{B}_i,\mathsf{B}_j}$ and $s_{\mathsf{B}_i,r_j}$ on the factoriztion channel $X_{l_j,l_{j+1}}=0$,
\begin{align}
-s_{\mathsf{B}_{i},\mathsf{B}_{j}}&=X_{l_i,l_j}+X_{l_{i+1},l_{j+1}}-X_{l_i,l_{j+1}}-X_{l_i,l_{i+1}}-X_{l_{i+1},l_j}\nonumber\\
-s_{\mathsf{B}_{i},r_j}&=c_{\mathsf{B}_{i},r_j}=X_{l_i,r_j}+X_{l_{i+1},l_{j+1}}-X_{l_i,l_{j+1}}-X_{l_i,l_{i+1}}-X_{l_{i+1},r_j}\,.
\end{align}
Add up the above two equations, we get
\begin{align}
-s_{\mathsf{B}_{i},\mathsf{B}_{j}}=c_{\mathsf{B}_{i},r_j}+X_{l_i,l_j}+X_{l_{i+1},r_j}-X_{l_{i+1},l_j}-X_{l_i,r_j}\,.
\label{eq:mnorss}
\end{align}
Combining eq.~\eqref{eq:mnorss} and~\eqref{eq:for3}, we find
\begin{align}
-s_{\mathsf{B}_{i},\mathsf{B}_{j}}&=c_{\mathsf{B}_{i},r_j}+\sum_{\ell=1}^{s}\left(c_{\mathsf{B}_{i},\mathsf{B}_{j_\ell}}+X_{l_i,l_{i+1}}+X_{l_{j_\ell},l_{j_{\ell+1}}}\right)+\left(c_{\mathsf{B}_{i},l_j}+X_{l_i,l_{i+1}}\right)\nonumber\\
&\equiv c_{{q},\mathsf{B}_i}+(s+1)X_{l_i,l_{i+1}}+\sum_{\ell=1}^{s}X_{l_{j_{\ell},l_{j_{\ell+1}}}}\,.
\label{finalequation}
\end{align}
Now we can use the replacement $s_{\mathsf{B}_i,\mathsf{B}_j}\rightarrow s_{q,\mathsf{B}_j}$ and find the constraint
\begin{align}
     -s_{q,\mathsf{B}_j}=c_{{q},\mathsf{B}_i}+(s+1)X_{l_i,l_{i+1}}+\sum_{\ell=1}^{s}X_{l_{j_{\ell},l_{j_{\ell+1}}}}
\end{align}
that belongs to $\pmb{H}^L$. Comparing with $-s_{q,\mathsf{B}_j}=c_{q,\mathsf{B}_j}$ that is given by the recursion in section~\ref{sec:invsoftconstruction}, we find that it follows the deformation pattern~\eqref{clsoedeforss} and at the same time the constant $c$ is linearly shifted by the variables $X_{l_{j_{\ell},l_{j_{\ell+1}}}}$ that belong to $\mathcal{A}_R$. Therefore, the facet displays a fiber product $\mathcal{A}'_L\ltimes\mathcal{A}_R$ that involves a deformed geometry $\mathcal{A}'_L$.

\bibliographystyle{JHEP}
\bibliography{reference}

\end{document}